\documentclass[a4paper,twoside,11pt,english,intlimits]{article}

%%%%Encodage
\usepackage[utf8]{inputenc}
\usepackage[T2A,T1]{fontenc}
\DeclareSymbolFont{cyrillic}{T2A}{cmr}{m}{n}
\DeclareMathSymbol{\Sha}{\mathalpha}{cyrillic}{216}

\usepackage[english]{babel}

%%%%Extensions
\usepackage{amsmath,amsthm,amssymb,stmaryrd}
\usepackage{mathtools}
%%Polices
\usepackage[ttscale=.875]{libertine}
\usepackage[libertine]{newtxmath}

%%%Isabelle
\usepackage{isabelle,isabellesym}
\isabellestyle{it}

%%%Mise en page
\usepackage{setspace}

\usepackage{enumerate}
\usepackage{fancyhdr,titlesec,url}
\usepackage[all,knot,poly]{xy}
\usepackage{array}
\usepackage{hyperref}
\usepackage{graphicx}
\usepackage[numbers,sort&compress]{natbib}
\usepackage{multirow}
\usepackage{ifthen}
\usepackage{time}
\usepackage{algorithm2e}
\usepackage[capitalize,noabbrev]{cleveref}
\usepackage{etoolbox}

\usepackage{cancel}
\usepackage{tikz}
\usepackage{tikz-cd}
\usepackage{xcolor}
\usetikzlibrary{matrix,arrows,arrows.meta,decorations.markings,decorations.pathmorphing}

% Lean
\usepackage{listings}
\usepackage{color}
\definecolor{keywordcolor}{rgb}{0.7, 0.1, 0.1}   % red
\definecolor{tacticcolor}{rgb}{0.0, 0.1, 0.6}    % blue
\definecolor{commentcolor}{rgb}{0.4, 0.4, 0.4}   % grey
\definecolor{symbolcolor}{rgb}{0.0, 0.1, 0.6}    % blue
\definecolor{sortcolor}{rgb}{0.1, 0.5, 0.1}      % green
\definecolor{attributecolor}{rgb}{0.7, 0.1, 0.1} % red

%% Options xy

%% Listes

%(

%% Macros 

%%%%%%%%%%%%%
%% MISE EN PAGE %%
%%%%%%%%%%%%%

%\numberwithin{equation}{subsection}
%\newcounter{fakecnt}[subsubsection]
%\def\thefakecnt{\arabic{subsubsection}}
%\renewcommand{\theequation}{\thefakecnt}

%% Styles des sections
\newcommand{\periodafter}[1]{\ifstrempty{#1}{}{#1.}}
\titleformat{\section}[block]{\scshape\filcenter\LARGE\boldmath}{\thesection.}{.5em}{}
\titleformat{\subsection}[block]{\bfseries\filcenter\large\boldmath}{\thesubsection.}{.5em}{\medskip}
\titleformat{\subsubsection}[runin]{\bfseries\boldmath}{\thesubsubsection.}{.5em}{\periodafter}%{}[.]
\titlespacing{\subsubsection}{0pt}{\topsep}{.5em}

%% Styles des théorèmes
\newtheoremstyle{ntheorem}%
	{\topsep}{\topsep}{\itshape}{0pt}{\bfseries}{.}{.5em}%
	{\thmnumber{#2.\hspace{.5em}}\thmname{#1}\thmnote{ (#3)}}
	
\newtheoremstyle{ndefinition}%
	{\topsep}{\topsep}{\normalfont}{0pt}{\bfseries}{.}{.5em}%
	{\thmnumber{#2.\hspace{.5em}}\thmname{#1}\thmnote{ (#3)}}
	
\newtheoremstyle{nremark}%
	{\topsep}{\topsep}{\normalfont}{0pt}{\itshape}{.}{.5em}%
	{\thmnumber{}\thmname{#1}\thmnote{ (#3)}}

\theoremstyle{ntheorem}
  	\newtheorem{theorem}[subsubsection]{Theorem}
  	\newtheorem{proposition}[subsubsection]{Proposition}
	\newtheorem{lemma}[subsubsection]{Lemma}

\theoremstyle{ndefinition}

	\newtheorem{example}[subsubsection]{Example}
	\newtheorem{remark}[subsubsection]{Remark}
	
\makeatletter
\def\@equationname{equation}
\newenvironment{eqn}[1]{%
    \def\mymathenvironmenttouse{#1}%
    \ifx\mymathenvironmenttouse\@equationname%
        \refstepcounter{subsubsection}%
    \else
        \patchcmd{\@arrayparboxrestore}{equation}{subsubsection}{}{}%          doesn't change output?
        \patchcmd{\print@eqnum}{equation}{subsubsection}{}{}%
        \patchcmd{\incr@eqnum}{equation}{subsubsection}{}{}%
    \fi
    \csname\mymathenvironmenttouse\endcsname%
}{%
    \ifx\mymathenvironmenttouse\@equationname%
        \tag{\thesubsubsection}%
    \fi
    \csname end\mymathenvironmenttouse\endcsname%
}
\makeatother

%% Mise en page
\pagestyle{fancy}
\setlength{\oddsidemargin}{0cm}
\setlength{\evensidemargin}{0cm}
\setlength{\topmargin}{0cm} 
\setlength{\headheight}{1cm}
\setlength{\headsep}{1cm}
\setlength{\textwidth}{16cm}
\setlength{\marginparwidth}{0cm}
\setlength{\footskip}{2cm}
\setlength{\headwidth}{16cm}

\fancyhead{}\fancyfoot[LC,RC]{}
\fancyhead[LE]{\leftmark}
\fancyhead[RO]{\rightmark}
\fancyfoot[LE,RO]{$\thepage$}
\fancypagestyle{plain}{
\fancyhf{}\fancyfoot[LC,RC]{}
\fancyfoot[LE,RO]{$\thepage$}

}

\setlength{\arraycolsep}{1pt}

%%%%%%Macro texte%%%%%%

%%%%%%%%
%% XYPIC %%
%%%%%%%%

%% Options 

\UseTips
\SelectTips{eu}{11}

%% Nouvelle direction pour les monomorphismes (>->)
\newdir{ >}{{}*!/-10pt/@{>}}
\newdir{ -}{{}*!/-10pt/@{}}
\newdir{> }{{}*!/+10pt/@{>}}

%% Ajouts de la variante 4 pour les blancs, les lignes, les pointill�s, 
%% les zigouigouis (pas top) et les t�tes de fl�ches (presque finies).
%% Concr�tement, dans \xymatrix, la commande \ar @4 produit une 4-cellule.
\makeatletter

% Noms des nouvelles directions
\xyletcsnamecsname@{dir4{}}{dir{}}
\xydefcsname@{dir4{-}}{\line@ \quadruple@\xydashh@}
\xydefcsname@{dir4{.}}{\point@ \quadruple@\xydashh@}
\xydefcsname@{dir4{~}}{\squiggle@ \quadruple@\xybsqlh@}
\xydefcsname@{dir4{>}}{\Tttip@}
\xydefcsname@{dir4{<}}{\reverseDirection@\Tttip@}

% Commande quadruple
% Attention : la position des pourcents est importante !!
\xydef@\quadruple@#1{%
	\edef\Drop@@{%
		\dimen@=#1\relax
		\dimen@=.5\dimen@
		\A@=-\sinDirection\dimen@
		\B@=\cosDirection\dimen@
		\setboxz@h{%
			\setbox2=\hbox{\kern3\A@\raise3\B@\copy\z@}%
			\dp2=\z@ \ht2=\z@ \wd2=\z@ \box2
			\setbox2=\hbox{\kern\A@\raise\B@\copy\z@}%
			\dp2=\z@ \ht2=\z@ \wd2=\z@ \box2
			\setbox2=\hbox{\kern-\A@\raise-\B@\copy\z@}%
			\dp2=\z@ \ht2=\z@ \wd2=\z@ \box2
			\setbox2=\hbox{\kern-3\A@\raise-3\B@ \noexpand\boxz@}%
			\dp2=\z@ \ht2=\z@ \wd2=\z@ \box2
		}%
		\ht\z@=\z@ \dp\z@=\z@ \wd\z@=\z@ \noexpand\styledboxz@
	}%
}

% T�tes de quadruples fl�ches : positionnement ok, reste la jointure
% Les parties comment�es sont des essais � reprendre plus tard
\xydef@\Tttip@{\kern2pt \vrule height2pt depth2pt width\z@
	\Tttip@@ \kern2pt \egroup
	\U@c=0pt \D@c=0pt \L@c=0pt \R@c=0pt \Edge@c={\circleEdge}%
	\def\Leftness@{.5}\def\Upness@{.5}%
	\def\Drop@@{\styledboxz@}\def\Connect@@{\straight@{\dottedSpread@\jot}}}
	
\xydef@\Tttip@@{%
	\dimen@=.25\dimen@
%	\A@=-\sinDirection\dimen@
 	\B@=\cosDirection\dimen@
	\setboxz@h\bgroup\reverseDirection@\line@ \wdz@=\z@ \ht\z@=\z@ \dp\z@=\z@
%	\kern\A@ \raise\B@ \boxz@ \kern\L@c
%	\kern-\L@c \boxz@ \kern\L@c
	{\vDirection@(1,-1)\xydashl@ \xyatipfont\char\DirectionChar}%
	{\vDirection@(1,+1)\xydashl@ \xybtipfont\char\DirectionChar}%
}

% Red�finition de la commande \ar 
% Provoque un avertissement � la compilation
\xydef@\ar@form{
	\ifx \space@\next \expandafter\DN@\space{\xyFN@\ar@form}%
	\else\ifx ^\next \DN@ ^{\xyFN@\ar@style}\edef\arvariant@@{\string^}%
	\else\ifx _\next \DN@ _{\xyFN@\ar@style}\edef\arvariant@@{\string_}%
	\else\ifx 0\next \DN@ 0{\xyFN@\ar@style}\def\arvariant@@{0}%
	\else\ifx 1\next \DN@ 1{\xyFN@\ar@style}\def\arvariant@@{1}%
	\else\ifx 2\next \DN@ 2{\xyFN@\ar@style}\def\arvariant@@{2}%
	\else\ifx 3\next \DN@ 3{\xyFN@\ar@style}\def\arvariant@@{3}%
	\else\ifx 4\next \DN@ 4{\xyFN@\ar@style}\def\arvariant@@{4}%
	\else\ifx \bgroup\next \let\next@=\ar@style
	\else\ifx [\next \DN@[##1]{\ar@modifiers{[##1]}}%]
	\else\ifx *\next \DN@ *{\ar@modifiers}%
	\else\addLT@\ifx\next \let\next@=\ar@slide
	\else\ifx /\next \let\next@=\ar@curveslash
	\else\ifx (\next \let\next@=\ar@curveinout %)
	\else\addRQ@\ifx\next \addRQ@\DN@{\ar@curve@}%
	\else\addLQ@\ifx\next \addLQ@\DN@{\xyFN@\ar@curve}%
	\else\addDASH@\ifx\next \addDASH@\DN@{\defarstem@-\xyFN@\ar@}%
	\else\addEQ@\ifx\next \addEQ@\DN@{\def\arvariant@@{2}\defarstem@-\xyFN@\ar@}%
	\else\addDOT@\ifx\next \addDOT@\DN@{\defarstem@.\xyFN@\ar@}%
	\else\ifx :\next \DN@:{\def\arvariant@@{2}\defarstem@.\xyFN@\ar@}%
	\else\ifx ~\next \DN@~{\defarstem@~\xyFN@\ar@}%
	\else\ifx !\next \DN@!{\dasharstem@\xyFN@\ar@}%
	\else\ifx ?\next \DN@?{\ar@upsidedown\xyFN@\ar@}%
	\else \let\next@=\ar@error
	\fi\fi\fi\fi\fi\fi\fi\fi\fi\fi\fi\fi\fi\fi\fi\fi\fi\fi\fi\fi\fi\fi\fi \next@}

\makeatother

%%%%%%%%%%%%%%%%%
%% SYMBOLES %%%%%%%%
%%%%%%%%%%%%%%%%%

%% Flèches
\newcommand{\fl}{\rightarrow}

\newcommand{\qfl}{\xymatrix@1@C=10pt{\ar@4 [r] &}}

%% Parenthèses et crochets

%% Accents

\renewcommand{\tilde}[1]{\widetilde{#1}}

%% Opérateurs
\DeclareMathOperator{\id}{Id}

%% Opérations binaires

%% Caractères 
\renewcommand{\phi}{\varphi}
\renewcommand{\epsilon}{\varepsilon}

%% Catégories
\def\catego#1{\mathsf{#1}}

%% Cohomologie

%

%\newcommand{\ifthen}[2]{\ifthenelse{#1}{#2}{}}
%

%%%%%%%%
%% Chapeaux pour les conditions sur les longueurs 
%%%%%%%%

\newcommand{\ifthen}[2]{\ifthenelse{#1}{#2}{}}

%%%%NEW

%%%%fin NEW

%%% Macros Tanguy

%Opérateurs mathématiques
%\DeclareMathOperator{\coker}{coker}

%\DeclareMathOperator{\id}{id}
\DeclareMathOperator{\colim}{colim}

\DeclareMathOperator{\Nbb}{\mathbb{N}}

\DeclareMathOperator{\Ccal}{\mathcal{C}}
\DeclareMathOperator{\Dcal}{\mathcal{D}}

\DeclareMathOperator{\Scal}{\mathcal{S}}

%Flèches avec noms au-dessus
\newcommand{\oto}[1]{\overset{#1}{\to}}
\newcommand{\ofrom}[1]{\overset{#1}{\leftarrow}}

%Théorèmes et autres
%\newtheorem{theorem}{Theorem}[section]
%\newtheorem{proposition}[theorem]{Proposition}
%\newtheorem{corollary}[theorem]{Corollary}
%\newtheorem{lemma}[theorem]{Lemma}
\theoremstyle{ntheorem}

%\theoremstyle{definition}
%\newtheorem{definition}[theorem]{Definition}
%\theoremstyle{remark}
%\newtheorem{remark}[theorem]{Remark}
%\theoremstyle{remark}
%\newtheorem{example}[theorem]{Example}

%Commandes personnalisées tikz
\tikzset{global scale/.style args={#1and#2}{scale=#1, every node/.append style={scale=#2}}}
%\tikzcdset{global scale/.style args={#1and#2and#3and#4}{column sep=#1em, row sep=#2em, every label/.append style={scale=#4}, cells={nodes={scale=#3}}}}
\tikzcdset{global scale/.style args={#1and#2and#3and#4}{column sep={#1em,between origins}, row sep={#2em,between origins}, every label/.append style={scale=#4}, cells={nodes={scale=#3}}, nodes in empty cells}}
\tikzcdset{longer arrows/.style args={#1and#2}{every arrow/.append style={shorten <= -#1, shorten >= -#2}}}
\tikzset{triangle/.style = {path picture={\draw (path picture bounding box.north west) -- (path picture bounding box.north east) -- (path picture bounding box.south) -- (path picture bounding box.north west);}}}
\tikzset{Rightarrow/.style={double equal sign distance,>={Implies},->}, Rrightarrow/.style={-,preaction={draw,Rightarrow}}, Rrrightarrow/.style={preaction={draw,Rightarrow},-,double,double distance=0.2pt}}

%Flèche squigarrow vers la droite
\makeatletter
\providecommand{\leftsquigarrow}{%
  \mathrel{\mathpalette\reflect@squig\relax}%
}
\newcommand{\reflect@squig}[2]{%
  \reflectbox{$\m@th#1\rightsquigarrow$}%
}
\makeatother

%Widehat à l'envers (widecheck)
\DeclareFontFamily{U}{mathx}{\hyphenchar\font45}
\DeclareFontShape{U}{mathx}{m}{n}{
      <5> <6> <7> <8> <9> <10>
      <10.95> <12> <14.4> <17.28> <20.74> <24.88>
      mathx10
      }{}
\DeclareSymbolFont{mathx}{U}{mathx}{m}{n}
\DeclareFontSubstitution{U}{mathx}{m}{n}
\DeclareMathAccent{\widecheck}{0}{mathx}{"71}

%Catégories librement générées

%\renewcommand{\tck}[1]{#1^{\equiv}}

\renewcommand{\id}{id}

\newcommand{\FC}[1]{{#1}^{c}}
\newcommand{\FS}[1]{{#1}^{s}}
\newcommand{\FCG}[1]{{#1}^{c \Gamma}}
\newcommand{\FSG}[1]{{#1}^{s \gamma}}

\renewcommand{\leq}{\leqslant}
\renewcommand{\geq}{\geqslant}

%%%%pour l'heure
\newcount\hh
\newcount\mm
\mm=\time
\hh=\time
\divide\hh by 60
\divide\mm by 60
\multiply\mm by 60
\mm=-\mm
\advance\mm by \time
\def\hhmm{\number\hh:\ifnum\mm<10{}0\fi\number\mm}
%%%%%%%%%%%%%

%%% Commentaires
\definecolor{vert}{rgb}{0,0.45,0}
\definecolor{rouge}{rgb}{0.89,0.04,0.36}
\definecolor{MyGray}{gray}{0.6}
\definecolor{MyRed}{RGB}{212,42,42}
%\newcommand{\TODO}[1]{{\color{orange}#1}}
%\newcommand{\out}[1]{{\color{olive}#1}}
%\newcommand{\new}[1]{{\color{blue}#1}}
%\newcommand{\New}[1]{{\color{vert}#1}}
%\newcommand{\quid}[1]{{\color{MyRed}#1}}

%%Macros locales

% Pour les catégories cubiques
\def\catego#1{\mathsf{#1}}

\newcommand{\SinCat}[1]{\catego{SCub}_{#1}}
\newcommand{\SinCatG}[1]{\SinCat{#1}^{\gamma}}
\newcommand{\CubCat}[1]{\catego{Cub}_{#1}}
\newcommand{\CubCatG}[1]{\CubCat{#1}^{\Gamma}}

\tikzcdset{arrow style=tikz, diagrams={>=stealth}}

% Composition partielle
\newcommand{\pcomp}{\circ}

\newcommand{\auteur}[3]{
\noindent
\begin{minipage}[t]{.45\textwidth}
\begin{flushright}
\textsc{#1} \\
{\footnotesize\textsf{#2}}
\end{flushright} 
\end{minipage}
\qquad
\begin{minipage}[t]{.45\textwidth}
#3
\end{minipage}
}

\begin{document}
\thispagestyle{empty}

\begin{center}

\begin{doublespace}
\begin{huge}
{\scshape Single-set cubical categories}
\end{huge}

\vskip+1.5pt

\begin{huge}
{\scshape and their formalisation with a proof assistant}
\end{huge}

\vskip+1.5pt

\begin{huge}
{\scshape (extended version)}
\end{huge}

\vskip+2pt

\bigskip
\hrule height 1.5pt 
\bigskip

\vskip+5pt

\begin{Large}
{\scshape Philippe Malbos - Tanguy Massacrier - Georg Struth}
\end{Large}
\end{doublespace}

\vskip+20pt

\begin{small}\begin{minipage}{14cm}
\noindent\textbf{Abstract --}
We introduce a single-set axiomatisation of cubical
$\omega$-categories, including connections and inverses. We justify
these axioms by establishing a series of equivalences between the
category of single-set cubical $\omega$-categories, and their variants
with connections and inverses, and the corresponding cubical
$\omega$-categories. We also report on the formalisation of cubical
$\omega$-categories with the Isabelle/HOL proof assistant, which has
been instrumental in developing the single-set axiomatisation.

\medskip

\smallskip\noindent\textbf{Keywords --} Cubical $\omega$-categories,
formalised mathematics, Isabelle/HOL, higher-dimensional rewriting.

\smallskip\noindent\textbf{M.S.C. 2020 --} 18N30, 68V15, 03B35, 68Q42.
\end{minipage}
\end{small}
\end{center}

\begin{center}
\begin{small}\begin{minipage}{12cm}
\renewcommand{\contentsname}{}
\setcounter{tocdepth}{1}
\tableofcontents
\end{minipage}
\end{small}
\end{center}

\section{Introduction}

Cubical sets and categories are fundamental structures widely used in mathematics
and theoretical computer science. Several lines of research have
shaped their axioms.  Cubical sets
provide abstract descriptions of higher-dimensional cubes and their
faces. They were first introduced in mathematics for modelling
homotopy types~\cite{Serre1951,Kan1955}. Their algebraic and
categorical descriptions were subsequently obtained via topological
cubical complexes and similar
structures~\cite{Loday1982,BrownLoday1987}.  Cubical categories, which
equip cubical sets with compositions along faces of higher-dimensional
cubes, were introduced by Brown and Higgins for their generalisation
of van Kampen's theorem to higher
dimensions~\cite{BrownHiggins1981,BrownHiggins1981b}. These articles
also introduce a notion of connection on cubical sets, essentially an
operation of rotation of neighbouring faces. More recently,
Lucas~\cite{LucasPhD2017} has added a notion of
inversion for cubes that imposes a groupoid stucture on parts
of the cubical structure. See~\cite{GrandisMauri2003} for a discussion
of additional structure on cubical sets.

Formally, a cubical set is a family of sets
  $(K_n)_{n \in \Nbb}$ equipped with face maps
  $\partial_{n,i}^\alpha : K_n \to K_{n-1}$ and degeneracy maps
  $\epsilon_{n,i}:K_{n-1}\to K_n$, for $1\leq i\leq n$ and
  $\alpha\in\{+,-\}$. 
The former attach faces to higher dimensional cubes; the latter represent lower dimensional cubes as degenerate higher dimensional ones.  The
cubical structure is imposed by the cubical relations
\begin{gather*}
\partial_{n-1,i}^\alpha\partial_{n,j}^\beta = \partial_{n-1,j-1}^\beta\partial_{n,i}^\alpha\quad (i < j),
\qquad
\epsilon_{n+1,i}\epsilon_{n,j} = \epsilon_{n+1,j+1}\epsilon_{n,i}\quad (i\leq j),\\
\partial_{n,i}^\alpha\epsilon_{n,j} = \epsilon_{n-1,j-1}\partial_{n-1,i}^\alpha\quad (i<j),
\qquad
\partial_{n,i}^\alpha\epsilon_{n,j} = \id\quad (i=j),
\qquad
\partial_{n,i}^\alpha\epsilon_{n,j} = \epsilon_{n-1,j}\partial_{n-1,i-1}^\alpha\quad (i>j).
\end{gather*}

In a cubical category, compositions of cubes along
  their faces are defined for each direction~$i$ in a way compatible with
face and degeneracy maps. Adding connections and inverses to cubical
sets and categories imposes further axioms, as expected.

The category of cubical sets with connections and structure-preserving
maps between them forms a strict test category \emph{à la}
Grothendieck~\cite{Maltsiniotis2009}, which makes
it suitable for studying homotopy~\cite{Tonks1992,BrownHigginsSivera2011}.  
Compared to simplicial models, they facilitate the handling of
products. Further, Al-Agl, Brown and Steiner have shown that
categories of cubical categories with connections and
those of globular categories (strict $\omega$-categories),
another kind of higher categories, are equivalent
\cite{AlAglBrownSteiner2002}.

In computer science, some fundamental models of
homotopy type theory are based on cubical
sets~\cite{BezemCoquandHuber2014,CohenCHM17}; 
see~\cite{AngiuliBCHHL21} for an overview.  They support a
constructive approach to Kan fibrations in the simplicial set model of
homotopy type theory~\cite{KapulkinLumsdaine2018}, several properties
of which are undecidable~\cite{BezemCoquand2015}. This prevents a
computational interpretation of Voevodsky's univalence axiom, which is
possible in the cubical model~\cite{BezemCoquandHuber2019}.

A second application of cubical sets in computer science lies in
geometrical and topological models of
concurrency~\cite{FajstrupGoubaultHaucourtMimramRaussen2016}. A
prominent example are
higher-dimensional-automata~\cite{Pratt1991,vanGlabbeek2006}. Here,
$n$-cells represent transitions of a concurrent system where
  $n$ concurrent events are active, and the cubical cell
structure comes from the fact that each concurrent event in an
$n$-dimensional cell can either be active or inactive in each of its
$2n$ $(n-1)$-dimensional faces.  Higher-dimensional automata subsume
many other models of concurrency~\cite{vanGlabbeek2006}. They have
been studied from
homological~\cite{GoubaultJensen1992,Gaucher2002,Kahl18},
homotopical~\cite{Gaucher2000,FajstrupRaussenGoubault2006,
  Grandis2009}, language
theoretic~\cite{FahrenbergJohansenStruthZiemianski2023} and
algorithmic~\cite{FajstrupGR98} points of views.

Finally, at the interface of mathematics and computing, cubical categories have
recently been proposed as a tool for higher-dimensional
rewriting~\cite{Lucas17,Polybook2024}, a categorification of term
rewriting~\cite{Terese03} with applications in categorical algebra.
Diagrammatic statements and proofs of abstract rewriting results, such
as Newman's lemma or the Church-Rosser theorem, use
  indeed cubical shapes; confluence diagrams associated with critical pairs,
triples and $n$-tuples form squares, cubes and $n$-cubes,
respectively.  A categorical description leads to the notion of
polygraphic resolution, which allows the study of homotopical
properties of rewriting
systems~\cite{GuiraudMalbos18,GuiraudMalbos12advances,Polybook2024}.  Explicit
constructions of such resolutions lend themselves naturally to a
formalisation in cubical categories~\cite{Lucas17,LucasPhD2017}.

The ubiquity of cubical sets and cubical categories alone merits a
formalisation with a proof assistant to support reasoning with these
highly combinatorial structures in applications (the axioms for
cubical $\omega$-categories with connections and inverses in
Subsection~\ref{SS:CubicalCategories} below, for instance, cover about
two pages). Yet instead of merely typing an extant axiomatisation into
a prover and checking some well known properties by machine, we use the
Isabelle/HOL proof assistant~\cite{NipkowPaulsonWenzel2002} to develop
an alternative axiomatisation for cubical categories. It is based on
single-set categories~\cite{MacLane98}, where only arrows are modelled
explicitly, while objects remain implicit via their one-to-one correspondence
with identity arrows. Single-set approaches have a long history in
category theory~\cite{MacLane48}; they feature in well-known textbooks~\cite{Freyd1964,MacLane98,FreydScedrov90} and form the basis
of three encyclopaedic formalisations of category theory with
Isabelle~\cite{Stark16,Stark17,Stark20}. Formally, a single-set
category is a set $\Scal$ with source and target maps
$\delta^-,\delta^+ : \Scal \fl \Scal$ and a composition $\pcomp$, a
partial operation such that $x\pcomp y$ is defined if and only if
$\delta^+x=\delta^-y$ for $x,y\in \Scal$, which satisfy
\begin{gather*}
\delta^-(x\pcomp y) = \delta^-x, 
\qquad
\delta^+(x\pcomp y) =  \delta^+y,
\qquad
x\pcomp \delta^+x = x,
\qquad
\delta^-x\pcomp x = x,\\
(x\pcomp y) \pcomp z = x\pcomp (y\pcomp z),
\qquad
\delta^-=\delta^+\delta^-,
\qquad
\delta^+=\delta^-\delta^+.
\end{gather*}
Identity arrows arise as fixed
points of $\delta^-$ or equivalently those of $\delta^+$.  Single-set
categories are therefore algebraically simpler than their
classical siblings defined via objects and arrows. Functors
and natural transformations are simply
functions~\cite{Freyd1964}. Single-set higher categories may thus be more suitable for symbolic
reasoning and automated proof search than their classical
counterparts. Indeed, single-set globular categories
are used widely~\cite{BrownH81,MacLane98,
  AlAglBrownSteiner2002,Steiner04} and have been formalised with
Isabelle~\cite{CalkMalbosPousStruth23,CalkStruth24}. Yet
  single-set cubical categories remain to be defined.

This is not entirely straightforward.  We had to introduce symmetry
maps that relate the sets of fixed points modelling higher identities
in different directions as replacements of the traditional degeneracy
maps. Initially, this led to an unwieldy number of axioms, which would
have been tedious to use and would have inflated the
  categorical equivalence proof, which justifies them relative to
  their classical counterparts.
  
Isabelle has been instrumental in taming these axioms due to its
powerful support for proof automation and counterexample search, which
sets it apart from other proof assistants. Its proof automation comes
from internal simplification and proof procedures and external proof search tools -- so-called hammers
-- for first-order logic. Counterexample search uses SAT solvers and
decision procedures, for instance for linear arithmetic. This
combination supports not only a natural mathematical workflow with
proofs and refutations, it also allows checking axiom systems for
redundancy (via deduction) and irredundancy (via counterexamples)
rapidly and effectively. It has already proved its worth for
developing other algebraic
axiomatisations~\cite{DesharnaisS11,FurusawaS15,
  CalkMalbosPousStruth23}. Here, Isabelle has helped us to bring the
single-set axiomatisation for cubical categories to a manageable size
without compromising its structural coherence, to simplify candidate
axioms and to analyse candidate axioms that emerged during our
development rapidly. Starting from an around $40$ initial
  candidate axioms, we have used Isabelle in an iterative process,
  simplifying candidate axioms and removing redundant ones, then
  attempting an equivalence proof, and adding new axioms if that
  failed. Without our confidence in Isabelle's automated proof
  support, we might not have attempted this research.

The single-set axiomatisation for cubical categories thus forms the
main conceptual contribution in this article. Our main technical
contribution consists in the proofs of categorical equivalence
mentioned, and our main engineering contribution is the formalisation
of a mathematical component for cubical categories with Isabelle. In
combination, these results constitute a case study in innovative, not
merely reconstructive formalised mathematics.

The overall structure of this article is simple: our axioms for
single-set cubical categories are introduced in
Section~\ref{SS:SingleSetHigherCategories}, the proofs that the
resulting categories are essentially the same as their classical
counterparts are given in
Section~\ref{S:EquivalenceWithCubicalCategories}, our Isabelle
formalisation and the workflow leading to our axioms are
discussed in Section~\ref{S:FormalisationIsabelle}. Finally, we
summarise our results and present some avenues for future work
in Section~\ref{S:Conclusion}.

More specifically, we recall the variant of single-set
categories~\cite{CranchDS20,FahrenbergJSZ21,Struth23}, on which our
axioms for single-set cubical categories are based, in
Subsection~\ref{SS:singleSetCategories}. 
Subsection~\ref{SS:SingleSetCubicalCategories}
introduces single-set cubical $\omega$-categories, axiomatised as a
set~$\Scal$ equipped with families of maps indexed by directions
$i\in \Nbb_+$: face maps $\delta_i^-$ and $\delta_i^+$, composition
maps~$\circ_i$, symmetry maps $s_i$ and reverse symmetry maps
$\tilde{s}_i$. We show how single-set
cubical $n$-categories appear as truncations.  We define the category
$\SinCat{\omega}$ with single-set cubical $\omega$-categories as
objects and functions corresponding to functors of classical
$\omega$-categories as morphisms.  We also list some structural
properties of these categories.  In
Subsection~\ref{SS:SingleSetWithConnections} we add connections, in
Subsection~\ref{SS:SinSetWithInv} we further add inverses in each
dimension greater than $p$. This yields the categories
$\SinCatG{\omega}$ and $\SinCatG{(\omega,p)}$ as well as truncated
variants for each dimension $n$.  Inverses are relevant to
constructive proofs in homotopy type theory and higher-dimensional
rewriting.

In Subsection~\ref{SS:CubicalCategories} we recall the classical
axioms for cubical $\omega$- and $n$-categories, including connections
and inverses. This leads to the categories $\CubCat{\omega}$,
$\CubCatG{\omega}$ and $\CubCatG{(\omega,p)}$ with classical cubical
$\omega$-categories as objects and functors as morphisms.  We then
present proofs of the equivalences
$\SinCat{\omega} \simeq \CubCat{\omega}$ in
Theorem~\ref{T:EquivSinSetClass},
$\SinCatG{\omega} \simeq \CubCatG{\omega}$ in
Theorem~\ref{T:EquivSinSetClassConn} and
$\SinCatG{(\omega,p)} \simeq \CubCatG{(\omega,p)}$ in
Theorem~\ref{T:EquivSinSetClassConnInv} in
Subsections~\ref{SS:EquivalenceWithCubicalCategories},
\ref{SS:EquivalenceWithCubicalCategoriesWithConnections} and
\ref{SS:EquivalenceWithCubicalCategoriesWithInverses} respectively.
Straightforward modifications yield similar equivalences between
$n$-categories, which we do not list explicitly.

Subsection~\ref{SS:nutshell} contains a brief overview of
Isabelle, Subsection~\ref{SS:catoid-isa} recalls the formalisation
of single-set categories with Isabelle~\cite{Struth23}, which
underlies our formalisation of cubical $\omega$-categories with and
without connections in Subsection~\ref{SS:cubical-isabelle}. For
technical reasons, we do not formalise cubical
$(\omega,p)$-categories. But we show how a non-trivial proof about
$(\omega,0)$-categories (of Proposition~\ref{P:OmegaZeroInv}) can be
formalised with our axiomatisation at the same level of granularity.

While this article can be read as an exercise in formalised
mathematics, we did not aim to formalise all our results, as
  it would distract from our main goal: to showcase the unique
  benefits of Isabelle's proof automation in the analysis of higher
  categories.
Alternatively, disregarding Section~\ref{S:FormalisationIsabelle}, it
can be read as a mathematical paper with contributions beyond
Isabelle. While the calculational lemmas in
Section~\ref{SS:SingleSetHigherCategories} have been checked by
machine, and proofs therefore been omitted, the categorical
equivalences in Section~\ref{S:EquivalenceWithCubicalCategories} have
not been formalised, though that would have been possible at
  least in parts. Once again: our main use case for Isabelle in this work has been the
development of the axioms in $\SinCat{\omega}$, $\SinCatG{\omega}$ and
$\SinCatG{(\omega,p)}$. Further work with proof assistants on higher
categories, higher rewriting and higher automata is left for future
work.  Our Isabelle components for cubical categories, including a PDF
proof document, can be found in the Archive of Formal
Proofs~\cite{MassacrierStruth24}.

\section{Single-set cubical categories}
\label{SS:SingleSetHigherCategories}

In this section we introduce our axiomatisation of single-set cubical
categories. In Subsection~\ref{SS:singleSetCategories} we recall a
previous axiomatisation of single-set categories.  In Subsection
\ref{SS:SingleSetCubicalCategories} we introduce single-set cubical
$\omega$-categories and $n$-categories. Extensions of these categories
with connections and inverses are presented in
Subsections~\ref{SS:SingleSetWithConnections}
and~\ref{SS:SinSetWithInv}.

\subsection{Single-set categories}
\label{SS:singleSetCategories}

We start with recalling the definition and basic properties of
single-set categories. While any axiomatisation would work for our
purposes, we have chosen one in which the partiality of arrow
composition is captured by a multioperation that maps pairs of
elements to sets of elements, including the empty
set~\cite{CranchDS20,FahrenbergJSZ21}. It is already well-supported by
Isabelle components~\cite{Struth23} and has previously served as a
basis for formalising globular single-set
$\omega$-categories~\cite{CalkMalbosPousStruth23,CalkStruth24}.

\subsubsection{}
\label{SSS:SingleSetOneCategories}
A \emph{single-set category} $(\Scal,\delta^-,\delta^+,\odot)$
consists of the following data:
\begin{itemize}
\item a set $\Scal$ of \emph{cells},
\item \emph{face maps} $\delta^\alpha: \Scal\to\Scal$ for
  $\alpha \in \{-,+\}$, which are extended to
  $\mathcal{P}(\Scal) \to \mathcal{P}(\Scal)$ by taking images,
\item a \emph{composition map} $\odot: \Scal\times\Scal \to \mathcal{P}(\Scal)$, which is extended to $\mathcal{P}(\Scal) \times \mathcal{P}(\Scal) \to \mathcal{P}(\Scal)$ as
\begin{eqn}{equation*}
\label{E:FaceMapsCompoSets}
X \odot Y = \bigcup_{x \in X, y\in Y} x \odot y, \qquad \text{ for all } X,Y \subseteq \Scal.
\end{eqn}
\end{itemize}
It satisfies, for all $x,y,z\in\Scal$,
\begin{enumerate}[{\bf (i)}]
\item \label{I:AxiomAssoc} \emph{associativity}: $\{x\} \odot (y \odot z) =
  (x \odot y) \odot \{z\}$,
 \item \label{I:AxiomUnits} \emph{units}: $x \odot \delta^+ x = \{x\}$
   and $\delta^- x \odot x=\{x\}$,
\item \label{I:AxiomLocality} \emph{locality}: $x \odot y \neq \varnothing \Leftrightarrow \delta^+ x = \delta^- y$,
\item \label{I:AxiomFunctionality} \emph{functionality}: $\forall z,z' \in x
  \odot y,\  z=z'$.
\end{enumerate}

The cells of single-set categories correspond to arrows of classical
categories.  The face maps $\delta^-$ and $\delta^+$ send each cell in
$\Scal$ to its \emph{source cell} and \emph{target cell},
respectively, which are \emph{identity cells}. These are in bijective
correspondence with objects of classical categories.

Henceforth we tacitly assume that upper indices such as $\alpha$ in
all face maps $\delta^\alpha$ range over $\{-,+\}$. We also
  write $\delta^{-\alpha}$ to indicate that values of $\alpha$ are
  exchanged relative to an occurrence of $\delta^\alpha$.
Further, in order to avoid lengthy technical terms, we
  henceforth refer to single-set categories simply as categories
  wherever possible.

\begin{remark}
\label{R:RemarkCatoids}
Omitting the functionality axiom and the right-to-left direction of
the locality axiom from the definition of single-set categories yields
axioms for \emph{catoids}. Removing locality, functionality and the
unit axioms yields \emph{multisemigroups}. See~\cite{FahrenbergJSZ21}
for details.
\end{remark}

\subsubsection{Composition as a partial operation}
\label{SSS:PartialCompositionSingleSet}
Two cells $x$, $y$ of a category $\Scal$ are
\emph{composable} if $x \odot y \neq \varnothing$, in which case we
write $\Delta(x,y)$ for short.  Functionality makes $\odot$ a partial
operation $\pcomp : \Delta \hookrightarrow \Scal\times\Scal \to \Scal$, which
sends each $(x,y)\in \Scal\times\Scal$ to the unique $z \in x \odot y$
whenever $\Delta(x,y)$. As we can recover
\begin{eqn}{equation*}
x \odot y =
\begin{cases}
\{ x \pcomp y \} & \text{if } \Delta(x,y), \\
\varnothing & \text{otherwise}
\end{cases}
\end{eqn}
from $\delta^-$, $\delta^+$ and $\circ$, we henceforth write
$(\Scal,\delta^-,\delta^+,\circ)$ instead of
$(\Scal,\delta^-,\delta^+,\odot)$ and work with $\circ$ instead of $\odot$.

\subsubsection{}
A \emph{morphism} $f: \Scal \to \Scal'$ \emph{of categories} $\Scal$ and $\Scal'$ is a map satisfying, for all $x,y\in\Scal$,
\begin{eqn}{equation*}
\label{I:AxiomMorphFace}
f \delta^\alpha = {\delta'}^\alpha f
\qquad\text{ and }\qquad
\Delta(x,y) ~ \Rightarrow ~ f(x \pcomp y) = f(x) \pcomp' f(y).
\end{eqn}
Such morphisms correspond to functors between classical categories.

\begin{example}
With $\pcomp$, the associativity axiom of categories becomes
\begin{align*}
\Delta(x, y \pcomp z) \land \Delta(y,z) ~ & \Leftrightarrow ~ \Delta(x, y) \land \Delta(x \pcomp y, z),
\\
\Delta(x, y \pcomp z) \land \Delta(y,z) ~ & \Rightarrow ~ x \pcomp (y \pcomp z) = (x \pcomp y) \pcomp z.
\end{align*}
By the first law, the left-hand side of the associativity law is
defined if and only if its right-hand side is. By the second law, the
two sides of this law are equal if either side is defined.  Likewise,
the unit axioms simplify to $\Delta(x, \delta^+ x)$, $\Delta(\delta^- x, x)$, $x \pcomp \delta^+ x = x$ and
$\delta^- x \pcomp x = x$.
\end{example}

\begin{example}
\label{Ex:IdentityAxiom}
In preparation for the cubical categories below, suppose that the
cells of a category are formed by squares that
can be composed horizontally, for instance the commuting diagrams in
an arrow category. The right unit axiom $x \circ \delta^+ x = x$ above
can then be illustrated as
\begin{equation*}
  	\begin{tikzcd}[global scale = 1.3 and 1.3 and 1 and 1, longer arrows = 0.4em and 0.4em]
	\phantom{\circ}\ar[rr] \ar[dd]  && \phantom{\circ}\ar[dd, dotted]\\
	& x & \\
	\phantom{\circ}\ar[rr] && \phantom{\circ}
      \end{tikzcd}
\ \circ \
        	\begin{tikzcd}[global scale = 1.3 and 1.3 and 1 and 1, longer arrows = 0.4em and 0.4em]
	\phantom{\circ}\ar[rr, equal] \ar[dd, dotted]  && \phantom{\circ}\ar[dd, dotted]\\
	& \delta^+x & \\
	\phantom{\circ}\ar[rr, equal] && \phantom{\circ}
      \end{tikzcd}
      \ \ = \ \
  	\begin{tikzcd}[global scale = 1.3 and 1.3 and 1 and 1, longer arrows = 0.4em and 0.4em]
	\phantom{\circ}\ar[rr] \ar[dd]  && \phantom{\circ}\ar[dd, dotted]\\
	& x & \\
	\phantom{\circ}\ar[rr] && \phantom{\circ}
      \end{tikzcd}    
\end{equation*}
The upper and lower faces of $\delta^+ x$ are drawn as equality arrows
to indicate that the left and right faces of this cell, shown as
dotted arrows, are equal. We assign a more precise semantics to such
cubes below.
\end{example}

We frequently need the following laws in calculations. They have been
verified with Isabelle~\cite{Struth23}.

\begin{lemma}
\label{L:sscat-props}
Let $\Scal$ be a category. Then
\begin{enumerate}[{\bf (i)}]
\item \label{I:AxiomStability} $\delta^\alpha \delta^\beta = \delta^\beta$,
\item \label{I:AxiomFixPlusMinus} $\delta^- x = x ~\Leftrightarrow~ \delta^+ x = x$ for all $x \in \Scal$,
\item $\delta^- (x \pcomp y) = \delta^- x$ and $\delta^+ (x \pcomp y) = \delta^+ y$ for all $x,y \in \Scal$ such that $\Delta(x,y)$.
\end{enumerate}
\end{lemma}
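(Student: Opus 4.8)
The plan is to prove the three parts of Lemma~\ref{L:sscat-props} directly from the four axioms of \ref{SSS:SingleSetOneCategories}, working throughout with the partial operation $\pcomp$ as introduced in \ref{SSS:PartialCompositionSingleSet}, since locality and functionality let us pass freely between $\odot$ and $\pcomp$.

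\medskip

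For part \ref{I:AxiomStability}, the goal is to show $\delta^\alpha \delta^\beta = \delta^\beta$, i.e.\ that applying either face map to a cell already in the image of a face map leaves it unchanged. The key observation is that identity cells (fixed points of the face maps) act as their own source and target. Concretely, I would first show that $\delta^\beta x$ is composable with itself via the unit axiom: since $\delta^-(\delta^\beta x) \pcomp \delta^\beta x = \delta^\beta x$ by the left unit law applied to $\delta^\beta x$, and similarly $\delta^\beta x \pcomp \delta^+(\delta^\beta x) = \delta^\beta x$ by the right unit law. Then I would exploit that a face map applied to a cell yields something whose further faces collapse: the cleanest route is to use the unit axioms together with locality. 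Applying locality to the defined product $\delta^\beta x \pcomp \delta^{-\beta}(\delta^\beta x)$ (or to the self-composition that the unit laws guarantee) forces $\delta^+\delta^\beta x = \delta^-\delta^\beta x$, and combining this with the unit laws pins down both to equal $\delta^\beta x$ itself. I expect this to be the main obstacle: getting the bookkeeping of which unit axiom to apply to which cell exactly right, since the four cases $(\alpha,\beta)\in\{-,+\}^2$ must all reduce to $\delta^\beta$, and the argument hinges on the subtle interplay between the two unit laws and both directions of locality.

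\medskip

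For part \ref{I:AxiomFixPlusMinus}, I would argue by symmetry. Assuming $\delta^- x = x$, I want $\delta^+ x = x$. Applying part \ref{I:AxiomStability} with $\alpha = +$, $\beta = -$ gives $\delta^+\delta^- x = \delta^- x$, and substituting the hypothesis $\delta^- x = x$ yields $\delta^+ x = \delta^+\delta^- x = \delta^- x = x$. The reverse implication is entirely dual, using $\delta^-\delta^+ x = \delta^+ x$ and $\delta^+ x = x$. Thus part \ref{I:AxiomFixPlusMinus} follows immediately from part \ref{I:AxiomStability} once the latter is established.

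\medskip

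For part \ref{I:AxiomStability}'s consequence, part (iii), assume $\Delta(x,y)$, so $x \pcomp y$ is the unique element of $x \odot y$. The associativity axiom \ref{I:AxiomAssoc} together with the unit axiom is the natural tool: I would compute $\delta^-(x \pcomp y)$ by noting that $\delta^-(x\pcomp y) \pcomp (x \pcomp y) = x \pcomp y$ by the left unit law, so $\delta^-(x\pcomp y)$ is a left identity for $x \pcomp y$. On the other hand $\delta^- x \pcomp x = x$, and since $\Delta(x,y)$ one checks $\delta^- x \pcomp (x \pcomp y) = (\delta^- x \pcomp x) \pcomp y = x \pcomp y$ using associativity (whose definedness side-condition is met because $\Delta(\delta^- x, x)$ and $\Delta(x,y)$ both hold). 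Uniqueness of left identities — which follows from the unit and locality axioms, since $\delta^-(x \pcomp y) = \delta^-(\delta^- x \pcomp (x \pcomp y)) = \delta^- x$ using part \ref{I:AxiomStability} and the definedness of the composite — then forces $\delta^-(x \pcomp y) = \delta^- x$. The claim $\delta^+(x \pcomp y) = \delta^+ y$ is dual, using the right unit law and associativity from the other side. The only delicate point here is verifying the definedness conditions that license each application of associativity, which is handled by locality together with the facts established in parts \ref{I:AxiomStability} and \ref{I:AxiomFixPlusMinus}.
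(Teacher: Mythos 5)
The paper gives no written proof of this lemma (it is stated as verified with Isabelle), so your proposal can only be measured against the standard elementary argument. Parts (ii) and (iii) of your plan are essentially sound: (ii) is the correct substitution argument given (i), and in (iii) the associativity step showing $\delta^- x \pcomp (x \pcomp y) = x \pcomp y$ is exactly what is needed. However, the conclusion of (iii) should be drawn more carefully than you state it: from $\Delta(\delta^- x, x \pcomp y)$, locality gives $\delta^-(x \pcomp y) = \delta^+ \delta^- x$, which equals $\delta^- x$ by (i). Your written chain $\delta^-(x \pcomp y) = \delta^-(\delta^- x \pcomp (x \pcomp y)) = \delta^- x$ is circular if read literally, since the second equality is an instance of the very claim being proved.

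The genuine gap is in part (i), on which everything else depends, and two things go wrong there. First, locality applied to the product $\delta^\beta x \pcomp \delta^{-\beta}(\delta^\beta x)$ does not force $\delta^+ \delta^\beta x = \delta^- \delta^\beta x$: for $\beta = -$ that product is indeed defined (right unit applied to $\delta^- x$), but locality yields $\delta^+ \delta^- x = \delta^- \delta^+ \delta^- x$, a three-fold identity rather than the two-fold one you want; and for $\beta = +$ the product in the order you wrote it is not even known to be defined at this stage. Second, and more seriously, your intermediate goal is insufficient: $\delta^+ y = \delta^- y$ does not pin $y$ down as a fixed point. In a one-object single-set category (i.e.\ a monoid, with $\delta^- y = \delta^+ y = e$ for all $y$), every cell satisfies $\delta^+ y = \delta^- y$, yet only the unit is a fixed point; so no amount of ``combining with the unit laws'' can close the argument from self-composability. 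The correct route applies locality to the unit laws on $x$ itself rather than on $\delta^\beta x$: $\Delta(\delta^- x, x)$ gives $\delta^+ \delta^- x = \delta^- x$, and $\Delta(x, \delta^+ x)$ gives $\delta^- \delta^+ x = \delta^+ x$; the equal-sign cases then follow by substitution, e.g.\ $\delta^- \delta^- x = \delta^- \delta^+ (\delta^- x) = \delta^+ \delta^- x = \delta^- x$. This uses exactly the tools you name --- units plus locality --- but in the opposite order: the mixed-sign cases come first, and the self-composability of faces is a consequence, not a stepping stone.
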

\begin{example}
We can illustrate Lemma~\ref{L:sscat-props}(\ref{I:AxiomStability}) as
\begin{equation*}
  	\begin{tikzcd}[global scale = 1.3 and 1.3 and 1 and 1, longer arrows = 0.4em and 0.4em]
	\phantom{\circ}\ar[rr] \ar[dd]  && \phantom{\circ}\ar[dd]\\
	& {} & \\
	\phantom{\circ}\ar[rr] && \phantom{\circ}
      \end{tikzcd}
\ \stackrel{\delta^\beta}{\mapsto} \
\begin{tikzcd}[global scale = 1.3 and 1.3 and 1 and 1, longer arrows = 0.4em and 0.4em]
	\phantom{\circ}\ar[rr, equal] \ar[dd]  && \phantom{\circ}\ar[dd]\\
	& {} & \\
	\phantom{\circ}\ar[rr, equal] && \phantom{\circ}
      \end{tikzcd}
      \ \stackrel{\delta^\alpha}{\mapsto} \
          	\begin{tikzcd}[global scale = 1.3 and 1.3 and 1 and 1, longer arrows = 0.4em and 0.4em]
	\phantom{\circ}\ar[rr, equal] \ar[dd]  && \phantom{\circ}\ar[dd]\\
	& {} & \\
	\phantom{\circ}\ar[rr, equal] && \phantom{\circ}
      \end{tikzcd}
\end{equation*}
\end{example}

\subsubsection{Fixed points}
Lemma~\ref{L:sscat-props}(\ref{I:AxiomStability}) and
(\ref{I:AxiomFixPlusMinus}) imply that the sets of fixed points of
$\delta^-$ and $\delta^+$ in a category $\Scal$
are equal and also equal to the sets of all left and right identities
in $\Scal$. We write $\Scal^\delta$ for the resulting set of all
identities. It corresponds to the set of objects in (small) classical
categories. In our examples, identities are illustrated by degenerated
cubes, in which some opposite arrows are equality arrows.

\subsection{Single-set cubical categories}
\label{SS:SingleSetCubicalCategories}

Our single-set axiomatisation of cubical $\omega$-categories is based on a family of categories
$(\Scal,\delta_i^-,\delta_i^+,\circ_i)$ for each
$i \in \Nbb_+ = \Nbb \backslash \{0\}$. We thus equip our previous
notation with indices. 
In particular we write $\Scal^i$ for the set of fixed points of $\delta_i^\alpha$.

\subsubsection{}
\label{SSS:DefCubSingleSetWithoutConn}
A \emph{single-set cubical $\omega$-category} consists of a family of
single-set categories
$(\Scal,\delta_i^-,\delta_i^+,\circ_i)_{i\in \Nbb_+}$ with
\emph{symmetry maps} $s_i: \Scal\to\Scal$ and \emph{reverse symmetry
  maps} $\tilde{s}_i: \Scal\to\Scal$ for each $i \in \Nbb_+$. These
satisfy, for all $w,x,y,z\in\Scal$ and $i,j \in \Nbb_+$,
\begin{enumerate}[{\bf (i)}]
\item \label{I:AxiomCommutativity} $\delta_i^\alpha \delta_j^\beta = \delta_j^\beta \delta_i^\alpha$ if $i \neq j$,
\item \label{I:DeltaCompoCompat} $\delta_i^\alpha (x \circ_j y) = \delta_i^\alpha x \circ_j \delta_i^\alpha y$ if $i \neq j$ and $\Delta_j(x,y)$,
\item \label{I:ExchangeLaw} $(w \circ_i x) \circ_j (y \circ_i z)
  = (w \circ_j y) \circ_i (x \circ_j z)$ if $i \neq j$, $\Delta_i(w,x)$, $\Delta_i(y,z)$, $\Delta_j(w,y)$ and $\Delta_j(x,z)$,
\item \label{I:AxiomSymType} $s_i (\Scal^i) \subseteq \Scal^{i+1}$ and $\tilde{s}_i (\Scal^{i+1}) \subseteq \Scal^i$,
\item \label{I:AxiomSymInv} $\tilde{s}_i s_i x = x$ and $s_i \tilde{s}_i y = y$ if $x \in \Scal^i$ and $y \in \Scal^{i+1}$,
\item \label{I:AxiomFaceSym} $\delta_j^\alpha s_j x = s_j \delta_{j+1}^\alpha x$ and $\delta_i^\alpha s_j x = s_j \delta_i^\alpha x$ if $i \neq j,j+1$ and $x \in \Scal^j$,
\item \label{I:AxiomSymCompatCompo} $s_i (x \circ_{i+1} y) = s_i x \circ_i s_i y$ and $s_i (x \circ_j y) = s_i x \circ_j s_i y$ if $j \neq i,i+1$, $x,y \in \Scal^i$ and $\Delta_j(x,y)$,
\item \label{I:AxiomSymFix} $s_i x = x$ if $x \in \Scal^i \cap \Scal^{i+1}$,
\item \label{I:AxiomSymBraid} $s_i s_j x = s_j s_i x$ if $|i-j| \geq 2$ and $x \in \Scal^i \cap \Scal^j$,
\item \label{I:AxiomFiniteDimCells} $\exists k \in \Nbb \ \forall i \geq k+1, \ x \in \Scal^i$.
\end{enumerate}

A \emph{single-set cubical $n$-category}, for $n\in \Nbb$, is defined
by the same data, but the index $i\in \Nbb_+$ is restricted to
$1 \leq i\leq n$, and the $s_i$ and $\tilde{s}_i$ to
$1\leq i \leq n-1$. Likewise, all $\omega$-axioms are restricted to
these ranges, and Axiom~(\ref{I:AxiomFiniteDimCells}) is omitted, as
it is now entailed.  All results for
$\omega$-categories in this article restrict to $n$-categories. 
As the $\omega$-categories or
  $n$-categories considered in this article are usually cubical, we
  drop this adjective whenever possible. Hence we often refer to
  single-set cubical $\omega$-categories simply as $\omega$-categories
  and likewise for $n$-categories.

As previously, we call the elements of $\omega$-categories \emph{cells}. 
The face maps $\delta_i^-$ and
$\delta_i^+$ attach \emph{lower} and \emph{upper faces in direction
  $i$} to them. The symmetry maps $s_i$ and reverse symmetry
  maps $\tilde{s}_i$ rotate identities for $\circ_i$ to identities for
  $\circ_{i+1}$ and vice versa.

\subsubsection{Explanation of axioms}

The cells of categories model higher-dimensional
cubes, possibly with degenerate faces, which can be composed by gluing
them together along their faces.  While the examples in
Subsection~\ref{SS:singleSetCategories} provide some intuition for
squares in low dimensions, we can illustrate higher dimensional cells
and their compositions only through projections.

A cell $x$ and its faces in the directions $i$ and $j$ can be illustrated as
\begin{eqn}{equation*}
\label{E:FaceMapsDirections}
\begin{tikzcd}[global scale = 2 and 2 and 1 and 1]
\ar[r, shorten <= -.45em, shorten >= -.3em] \ar[d, shorten <= -.6em, shorten >= -.3em] & j \\
i & 
\end{tikzcd}
\hspace{5em}
\begin{tikzcd}[global scale = 3 and 2.2 and 1 and 1, longer arrows = 0.4em and 0.4em]
\delta_i^- \delta_j^- x \ar[rr, "\delta_i^- x"] \ar[dd, "\delta_j^- x"'] && \delta_i^- \delta_j^+ x \ar[dd, "\delta_j^+ x"] \\
 & x & \\
\delta_i^+ \delta_j^- x \ar[rr, "\delta_i^+ x"'] && \delta_i^+ \delta_j^+ x
\end{tikzcd}
\hspace{4em}
\begin{tikzcd}[global scale = 1.9 and 1.5 and 1 and 1, longer arrows = 0.4em and 0.4em]
\phantom{\circ} \ar[rr, equal] \ar[dd, equal] && \phantom{\circ} \ar[rr] \ar[dd, equal] && \phantom{\circ} \ar[rr, equal] \ar[dd, equal] && \phantom{\circ} \ar[dd, equal] \\
 & \delta_i^- \delta_j^- x && \delta_i^- x && \delta_i^- \delta_j^+ x & \\
\phantom{\circ} \ar[rr, equal] \ar[dd] && \phantom{\circ} \ar[rr] \ar[dd] && \phantom{\circ} \ar[rr, equal] \ar[dd] && \phantom{\circ} \ar[dd] \\
 & \delta_j^- x && x && \delta_j^+ x & \\
\phantom{\circ} \ar[rr, equal] \ar[dd, equal] && \phantom{\circ} \ar[rr] \ar[dd, equal] && \phantom{\circ} \ar[rr, equal] \ar[dd, equal] && \phantom{\circ} \ar[dd, equal] \\
 & \delta_i^+ \delta_j^- x && \delta_i^+ x && \delta_i^+ \delta_j^+ x & \\
\phantom{\circ} \ar[rr, equal] && \phantom{\circ} \ar[rr] && \phantom{\circ} \ar[rr, equal] && \phantom{\circ}
\end{tikzcd}
\end{eqn}
The arrows on the left indicate the directions $i$ and $j$.
The diagram on the right shows the faces of $x$ as degenerate
cells. They are identities of the composition in the same direction.
Many of the axioms $\omega$-categories can be illustrated by such diagrams.
\begin{itemize}
\item Axiom~{(\ref{I:AxiomCommutativity})} determines the cubical cell
  shape. It can be depicted as
\begin{equation*}
\begin{tikzcd}[global scale = 2 and 2 and 1 and 1]
	\phantom{\circ}\ar[r, shorten <= -.61em, shorten >= -.3em]
        \ar[d, shorten <= -.65em, shorten >= -.3em]  & j\\
	i & 
\end{tikzcd}
      \qquad\qquad
\begin{tikzcd}[global scale = 1.3 and 1.3 and 1 and 1, longer arrows = 0.4em and 0.4em]
	\phantom{\circ}\ar[rr] \ar[dd]  && \phantom{\circ}\ar[dd]\\
	& {} & \\
	\phantom{\circ}\ar[rr] &&
        \phantom{\circ}
\end{tikzcd}
\begin{tikzcd}[global scale = 1.3 and 1.3 and 1 and 1]
        &&\\
        \phantom{\circ}\ar[urr, |->, "\delta_i^\alpha"] &&\phantom{\circ}\\
        &&\\
        \phantom{\circ}\ar[drr, |->, "\delta_j^\beta"'] &&\phantom{\circ}\\
        &&
\end{tikzcd}
\begin{tikzcd}[global scale = 1.3 and 1.3 and 1 and 1, longer arrows = 0.4em and 0.4em]
  	\phantom{\circ}\ar[rr]
        \ar[dd, equal]  &&
        \phantom{\circ}\ar[dd, equal]\\
	& {} & \\
	\phantom{\circ}\ar[rr] && \phantom{\circ}\\
        &&\\
	\phantom{\circ}\ar[rr, equal] \ar[dd]  && \phantom{\circ}\ar[dd]\\
	& {} & \\
	\phantom{\circ}\ar[rr, equal] && \phantom{\circ}
\end{tikzcd}
\begin{tikzcd}[global scale = 1.3 and 1.3 and 1 and 1]
        \phantom{\circ}\ar[drr, |->, "\delta_j^\beta"] &&\phantom{\circ}\\
        &&\\
        &&\\
        &&\\
        \phantom{\circ}\ar[urr, |->, "\delta_i^\alpha"'] &&\phantom{\circ}
\end{tikzcd}
\begin{tikzcd}[global scale = 1.3 and 1.3 and 1 and 1, longer arrows = 0.4em and 0.4em]
	\phantom{\circ}\ar[rr, equal] \ar[dd, equal]  &&
        \phantom{\circ}\ar[dd, equal]\\
	& {} & \\
	\phantom{\circ}\ar[rr, equal] && \phantom{\circ}
\end{tikzcd}
\end{equation*}
\item Axioms~{(\ref{I:AxiomCommutativity})}
  and~(\ref{I:DeltaCompoCompat}) make face maps morphisms and hence
  functors with respect to the underlying categories in any
  other direction.
\item The \emph{interchange law} in Axiom~(\ref{I:ExchangeLaw}) makes
  composition in any direction bifunctorial with respect to the
  compositions in any other direction. It can be depicted as
\[
\begin{array}{cccc}
	\begin{tikzcd}[global scale = 2 and 2 and 1 and 1]
	\ar[r, shorten <= -.45em, shorten >= -.3em] \ar[d, shorten <= -.6em, shorten >= -.3em] & j \\
	i & 
	\end{tikzcd}
&
\qquad
	\begin{tikzcd}[global scale = 1.3 and 1.3 and 1 and 1, longer arrows = 0.4em and 0.4em]
	\phantom{\circ} \ar[rr] \ar[dd] && \phantom{\circ} \ar[dd] && \phantom{\circ} \ar[rr] \ar[dd] && \phantom{\circ} \ar[dd] \\
	 & w &&&& y & \\
	\phantom{\circ} \ar[rr] && \phantom{\circ} && \phantom{\circ} \ar[rr] && \phantom{\circ} \\
	 & \circ_i && \circ_j && \circ_i & \\
	\phantom{\circ} \ar[rr] \ar[dd] && \phantom{\circ} \ar[dd] && \phantom{\circ} \ar[rr] \ar[dd] && \phantom{\circ} \ar[dd] \\
	 & x &&&& z & \\
	\phantom{\circ} \ar[rr] && \phantom{\circ} && \phantom{\circ} \ar[rr] && \phantom{\circ}
	\end{tikzcd}
&
\quad = \quad
	\begin{tikzcd}[global scale = 1.3 and 1.3 and 1 and 1, longer arrows = 0.4em and 0.4em]
	\phantom{\circ} \ar[rr] \ar[dd] && \phantom{\circ} \ar[rr] \ar[dd] && \phantom{\circ} \ar[dd] \\
	 & w && y & \\
	\phantom{\circ} \ar[rr] \ar[dd] && \phantom{\circ} \ar[rr] \ar[dd] && \phantom{\circ} \ar[dd] \\
	 & x && z & \\
	\phantom{\circ} \ar[rr] && \phantom{\circ} \ar[rr] && \phantom{\circ}
	\end{tikzcd}
&
\quad = \quad
	\begin{tikzcd}[global scale = 1.3 and 1.3 and 1 and 1, longer arrows = 0.4em and 0.4em]
	\phantom{\circ} \ar[rr] \ar[dd] && \phantom{\circ} \ar[dd] && \phantom{\circ} \ar[rr] \ar[dd] && \phantom{\circ} \ar[dd] \\
	 & w && \circ_j && y & \\
	\phantom{\circ} \ar[rr] && \phantom{\circ} && \phantom{\circ} \ar[rr] && \phantom{\circ} \\
	 &&& \circ_i &&& \\
	\phantom{\circ} \ar[rr] \ar[dd] && \phantom{\circ} \ar[dd] && \phantom{\circ} \ar[rr] \ar[dd] && \phantom{\circ} \ar[dd] \\
	 & x && \circ_j && z & \\
	\phantom{\circ} \ar[rr] && \phantom{\circ} && \phantom{\circ} \ar[rr] && \phantom{\circ}
	\end{tikzcd}
\end{array}
\]

\item The \emph{typing axioms} in~(\ref{I:AxiomSymType}) restrict
  $s_i$ to $\Scal^i \to \Scal^{i+1}$ and $\tilde{s}_i$ to
  $\Scal^{i+1} \to \Scal^i$. Though the $s_i$ and $\tilde{s}_i$ are
  defined as total maps on $\Scal$, only their typed versions matter.
 
\item Axiom~(\ref{I:AxiomSymInv}) states that each
  $s_i: \Scal^i \to \Scal^{i+1}$ and
  $\tilde{s}_i:\Scal^{i+1} \to \Scal^i$ forms a bijective pair.
  
\item Axiom~(\ref{I:AxiomFaceSym}) captures the action of a symmetry
  map $s_j$ on a cell $x \in \Scal^j$: it rotates it together with
  its faces into a cell in $\Scal^{j+1}$:
\begin{eqn}{equation*}
\begin{array}{cccc}
	\begin{tikzcd}[global scale = 2.5 and 2 and 1 and 1]
	\ar[r, shorten <= -.45em, shorten >= -.3em] \ar[d, shorten <= -.6em, shorten >= -.3em] & j+1 \\
	j & 
	\end{tikzcd}
&
\qquad
	\begin{tikzcd}[global scale = 1.3 and 1.3 and 1 and 1, longer arrows = 0.4em and 0.4em]
	\phantom{\circ} \ar[rr] \ar[dd, equal] && \phantom{\circ} \ar[dd, equal] \\
	& x & \\
	\phantom{\circ} \ar[rr] && \phantom{\circ}
	\end{tikzcd}
&
\
\overset{s_j}{\mapsto} \
&
	\begin{tikzcd}[global scale = 1.3 and 1.3 and 1 and 1, longer arrows = 0.4em and 0.4em]
	\phantom{\circ} \ar[rr, equal] \ar[dd] && \phantom{\circ} \ar[dd] \\
	& s_j x & \\
	\phantom{\circ} \ar[rr, equal] && \phantom{\circ}
	\end{tikzcd}
\end{array}
\end{eqn}
\item Axioms~{(\ref{I:AxiomFaceSym})}
  and~(\ref{I:AxiomSymCompatCompo}) state that the symmetry maps $s_i$
  are morphisms, hence functors, with respect to the categories in any direction $j \neq i,i+1$.
\item The \emph{dimensionality} axiom~(\ref{I:AxiomSymFix}) imposes
  that $s_i$ is an identity map on $\Scal^i\cap\Scal^{i+1}$.
\item Axiom~(\ref{I:AxiomSymBraid}) is a \emph{braiding} axiom for
  symmetries. An illustration would require higher-dimensional cubes.
\item Axiom~(\ref{I:AxiomFiniteDimCells}) imposes that every cell has
  finite dimension, as defined below. It is important for the
    proof of equivalence between single-set cubical categories and
    their classical counterparts as well as for the coherence of the
    entire approach.
\end{itemize}
A more structural explanation of symmetries and their reverses is
given in the following subsections.

\subsubsection{Lattice of fixed points}
\label{SSS:Lattice}
Let $\Scal$ be an $\omega$-category.  
It is easy to see that the set of all
$\Scal^I = \bigcap_{i \in I} \Scal^i$, for $I\subseteq \Nbb_+$, forms
a lattice with respect to set inclusion. In fact,
$\Scal^I \subseteq \Scal^J$ if and only if $I \supseteq J$. In this
lattice, the $S^I$ with co-finite $I$ model sets of cells of finite
dimension.

Formally, a cell $x \in \Scal$ \emph{has dimension at most}
$k\in \Nbb_+$ if $x \in \Scal^I$ for some $I \subseteq \Nbb_+$ with
$|\Nbb_+\mathop{\setminus} I| = k$; it \emph{has dimension} $k$ if $k$
is the least positive integer for which it has dimension at most $k$.

If the complements of the co-finite sets $I,J \subseteq \Nbb_+$ have
the same (finite) cardinality, then the restrictions of symmetries,
their reverses and their compositions induce bijections
$\Scal^I \simeq \Scal^J$. These identify cells of the same dimension;
they respect face maps and compositions.

For finite $I=\{i_1,\dots,i_k\}$, we write $\Scal^{i_1,\dots,i_k}$. In
this case,
Axiom~\eqref{SSS:DefCubSingleSetWithoutConn}~\eqref{I:AxiomCommutativity}
implies that $x \in \Scal^{i_1,\dots,i_k}$ if and only if
$\delta_{i_1}^{\alpha_1} \dots \delta_{i_k}^{\alpha_k} x = x$ for all
$\alpha_1,\dots,\alpha_k$.

Finally, we write $\Scal^{>n}$ for $\Scal^I$ when
$I=\{i\in \Nbb\mid i > n\}$. All cells in $\Scal^{>n}$ have dimension
at most $n$.  Axiom~\eqref{I:AxiomFiniteDimCells} implies that
$\Scal = \bigcup_{n\geq 0} \Scal^{>n}$. Hence every cell $x \in \Scal$
has indeed some finite dimension, which is crucial for the
  constructions of the equivalence proof relating single-set cubical
  categories with their classical counterparts in
  Section~\ref{S:EquivalenceWithCubicalCategories}.
  In particular, Axiom~\eqref{I:AxiomFiniteDimCells} thus imposes 
  that the gradation of $\Scal$ in terms of $\Scal^{>n}$ corresponds 
  to the gradation of the cubical sets $K_n$ shown in the introduction.

\subsubsection{Remark}
The definition of symmetry in~\eqref{SSS:DefCubSingleSetWithoutConn}
differs from the notions of symmetric cubical monoid and category,
introduced by Grandis and Mauri~\cite{GrandisMauri2003} and
Grandis~\cite{Grandis2007}, respectively.

\begin{example}
  The lattice of fixed points for the $3$-category
  $\Scal$ is given by the Hasse diagram
\[
\begin{tikzcd}[global scale = 5 and 3.5 and 1 and 1]
 & \Scal^{1,2} \ar[r, hookrightarrow] \ar[dr, hookrightarrow] \ar[d, dotted, "s_2"', "{\rotatebox[origin=c]{90}{$\sim$}}"]
 & \Scal^1 \ar[dr, hookrightarrow] \ar[d, dotted, "s_1", "{\rotatebox[origin=c]{90}{$\sim$}}"']
 &
\\
\Scal^{1,2,3} \ar[ur, hookrightarrow] \ar[r, hookrightarrow] \ar[dr, hookrightarrow]
 & \Scal^{1,3} \ar[ur, hookrightarrow] \ar[dr, hookrightarrow] \ar[d, dotted, "s_1"', "{\rotatebox[origin=c]{90}{$\sim$}}"]
 & \Scal^2 \ar[r, hookrightarrow] \ar[d, dotted, "s_2", "{\rotatebox[origin=c]{90}{$\sim$}}"']
 & \Scal
\\
 & \Scal^{2,3} \ar[ur, hookrightarrow] \ar[r, hookrightarrow]
 & \Scal^3 \ar[ur, hookrightarrow]
 &
\end{tikzcd}
\]
All cells in $\Scal^{1,2,3}$ have dimension $0$, those in
$\Scal^{1,2}$, $\Scal^{1,3}$ and $\Scal^{2,3}$ have dimension at most
$1$, those in $\Scal^1$, $\Scal^2$ and $\Scal^3$ dimension at
most $2$, and those in $\Scal$ dimension at most $3$.
\end{example}

\subsubsection{Categories of cubical categories}
A \emph{morphism} $f: \Scal \to \Scal'$ \emph{of $\omega$-categories} $\Scal$ and $\Scal'$ is a morphism of the
underlying categories, for each $i \in \Nbb_+$, which
preserves symmetries restricted to their types. Hence, for $i \geq 1$
and $x \in \Scal^i$, $f s_i x = s'_i f x$.  This defines the category
$\SinCat{\omega}$ of single-set cubical $\omega$-categories.

Owing to the definition in~\eqref{I:AxiomMorphFace}, a morphism $f: \Scal \to \Scal'$ of
$\omega$-categories restricts to a map
$\Scal^I \to {\Scal'}^I$ for each $I \subseteq \Nbb_+$. These
restrictions also commute with the symmetries.

A \emph{morphism of $n$-categories} is defined as
above, but with symmetries ranging over $1 \leq i \leq n-1$.  This
defines the category $\SinCat{n}$ of single-set cubical
$n$-categories.

\begin{lemma}
\label{L:LemmaInvSymPropMorph}
Morphisms in $\SinCat{\omega}$ preserve typed reverse symmetry maps:
  $f \tilde{s}_i x = \tilde{s}'_i f x$, for all $f: \Scal \to \Scal'$,
  $i \in \Nbb_+$ and $x \in \Scal^{i+1}$.
\end{lemma}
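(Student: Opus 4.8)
The plan is to exploit that, by Axioms~\eqref{I:AxiomSymType} and~\eqref{I:AxiomSymInv}, the maps $s_i$ and $\tilde{s}_i$ restrict to mutually inverse bijections between $\Scal^i$ and $\Scal^{i+1}$, together with the two facts already recorded about a morphism $f\colon \Scal \to \Scal'$ of $\omega$-categories: it preserves the typed symmetry maps, i.e.\ $f s_i x = s'_i f x$ for $x \in \Scal^i$, and it restricts to a map $\Scal^I \to {\Scal'}^I$ for each $I \subseteq \Nbb_+$. The strategy is to rewrite a cell $x \in \Scal^{i+1}$ as $s_i$ applied to a cell of $\Scal^i$, push this expression through $f$ using the known preservation of $s_i$, and then invert on the target side with $\tilde{s}'_i$.

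Concretely, I would fix $i \in \Nbb_+$ and $x \in \Scal^{i+1}$ and set $y := \tilde{s}_i x$. First, Axiom~\eqref{I:AxiomSymType} gives $y \in \Scal^i$, and Axiom~\eqref{I:AxiomSymInv}, applicable since $x \in \Scal^{i+1}$, gives $s_i y = s_i \tilde{s}_i x = x$. Next, because $f$ restricts to $\Scal^i \to {\Scal'}^i$, we have $f y \in {\Scal'}^i$, so the symmetry-preservation clause of a morphism may legitimately be applied to $y$, yielding $f x = f(s_i y) = s'_i(f y)$.

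Finally, I would apply $\tilde{s}'_i$ to both sides. Since $f y \in {\Scal'}^i$, Axiom~\eqref{I:AxiomSymInv} holds in $\Scal'$ and gives $\tilde{s}'_i s'_i (f y) = f y$, so that $\tilde{s}'_i (f x) = f y = f \tilde{s}_i x$, which is exactly the claimed identity.

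I do not expect a genuine obstacle here: the argument is a short diagram chase through the bijection $\Scal^i \simeq \Scal^{i+1}$. The only point requiring care is the bookkeeping of the typing constraints, since each invocation of Axiom~\eqref{I:AxiomSymInv} and of the symmetry-preservation clause is valid only on the appropriate fixed-point set. Thus the crux is to verify the memberships $y \in \Scal^i$ and $f y \in {\Scal'}^i$ \emph{before} applying these laws; once they are in place, the computation is immediate.
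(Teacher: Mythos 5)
Your proof is correct: each typing hypothesis ($y=\tilde{s}_i x \in \Scal^i$ via Axiom~\ref{SSS:DefCubSingleSetWithoutConn}\eqref{I:AxiomSymType}, and $fy \in {\Scal'}^i$ via the restriction property of morphisms) is established before the corresponding law is invoked, so the chain $\tilde{s}'_i f x = \tilde{s}'_i s'_i f(\tilde{s}_i x) = f \tilde{s}_i x$ goes through. The paper states this lemma without proof, but your argument — conjugating the symmetry-preservation clause of morphisms by the mutually inverse bijections of Axiom~\ref{SSS:DefCubSingleSetWithoutConn}\eqref{I:AxiomSymInv} — is exactly the mechanism the paper uses elsewhere to derive reverse-symmetry facts from symmetry ones (cf.\ Lemma~\ref{L:LemmaInvSymProp}), so there is nothing to correct.
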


\subsubsection{Truncations}
There is a truncation functor $U^n_m: \SinCat{n} \to \SinCat{m}$, for
all $1 \leq m \leq n$, which forgets the $k$-dimensional structure
for each $k > m$. It sends each $n$-category
$\Scal$ to the $m$-category on the set
$\Scal^{m+1,\dots,n}$ and restricts the ranges of face maps and
compositions to $1 \leq i \leq m$, as well as that of the symmetry
maps to $1\leq i \leq m-1$.

There is also a truncation functor
$U_m: \SinCat{\omega} \to \SinCat{m}$, which sends each $\omega$-category $\Scal$ to the $m$-category with set $\Scal^{>m}$ and restricts face and symmetry
maps as well as compositions as above.

\subsubsection{Properties of symmetries}

The following facts provide further structural properties of
symmetries and reverse symmetries. They have been proved with Isabelle.

\begin{lemma}
\label{L:LemmaInvSymProp0}
Let $\Scal$ be  an $\omega$-category.  For all
$i,j\in \Nbb_+$,
\begin{enumerate}[{\bf (i)}]
\item $\delta_{j+1}^\alpha s_j x = s_j \delta_j^\alpha x$ if $x \in \Scal^j$,
\item \label{I:LemmaSymCompo} $s_i (x \circ_i y) = s_i x \circ_{i+1} s_i y$ if $x,y \in \Scal^i$ and $\Delta_j(x,y)$,
\item Yang-Baxter: $s_i s_{i+1} s_i x = s_{i+1} s_i s_{i+1} x$ if $x \in \Scal^{i,i+1}$.
\end{enumerate}
\end{lemma}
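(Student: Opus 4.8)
The plan is to prove all three parts by bookkeeping of membership in the fixed-point sets $\Scal^I$ of Subsection~\ref{SSS:Lattice}, exploiting the single equivalence that $x\in\Scal^i$ holds exactly when $\delta_i^\alpha x = x$ for both $\alpha$ (the per-direction instance of Lemma~\ref{L:sscat-props}(\ref{I:AxiomFixPlusMinus})). Parts~(i) and~(iii) are then \emph{dimension-collapse} arguments: the hypotheses place the cell low enough in the lattice that the relevant symmetries reduce to identities. Part~(ii) is a \emph{transfer-of-composition} argument through the bijection between $s_i$ and $\tilde s_i$. The recurring difficulty in all three is keeping track of which fixed-point set each intermediate cell belongs to, so that each map is applied only where it is typed.

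For part~(i) I would argue directly, filling the one face-compatibility that axiom~(\ref{I:AxiomFaceSym}) omits (it treats only $\delta_j^\alpha s_j$ and $\delta_k^\alpha s_j$ for $k\neq j,j+1$). Since $x\in\Scal^j$ we have $\delta_j^\alpha x = x$, so the right-hand side equals $s_j x$. By the typing axiom~(\ref{I:AxiomSymType}), $s_j x\in\Scal^{j+1}$, hence $\delta_{j+1}^\alpha s_j x = s_j x$ as well; the two sides coincide.

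For part~(ii), the idea is that $s_i$ and $\tilde s_i$ exhibit $(\Scal^i,\circ_{i+1})$ and $(\Scal^{i+1},\circ_i)$ as mutually inverse images under axioms~(\ref{I:AxiomSymType}) and~(\ref{I:AxiomSymInv}), so the composition-compatibility of axiom~(\ref{I:AxiomSymCompatCompo}) transports to the intertwining of $\circ_i$ and $\circ_{i+1}$ asserted here. Concretely I would start from axiom~(\ref{I:AxiomSymCompatCompo}), rewrite the arguments via $\tilde s_i s_i = \id$ and $s_i\tilde s_i=\id$ on the appropriate typed domains, and read off the stated identity. The composability side-conditions are discharged using axiom~(\ref{I:DeltaCompoCompat}), which shows each $\Scal^i$ is closed under the compositions $\circ_k$ with $k\neq i$; this guarantees every intermediate cell remains in the fixed-point set required for the next map. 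I expect the only real work here to be this bookkeeping of domains and composability, rather than any new structural input.

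The part that looks hardest, the Yang--Baxter relation in~(iii), in fact collapses because $x\in\Scal^{i,i+1}=\Scal^i\cap\Scal^{i+1}$ is low-dimensional; recognising this is the key observation, and no genuine braid computation is needed. By the dimensionality axiom~(\ref{I:AxiomSymFix}), $s_i x = x$, so the left-hand side reduces to $s_i s_{i+1}x$ and it remains to see $s_i s_{i+1} x = s_{i+1}s_i s_{i+1} x$. Writing $y=s_{i+1}x$, axiom~(\ref{I:AxiomFaceSym}) in the form $\delta_i^\alpha s_{i+1}=s_{i+1}\delta_i^\alpha$ (valid since $i\neq i+1,i+2$ and $x\in\Scal^{i+1}$) together with $x\in\Scal^i$ gives $y\in\Scal^i$, while typing gives $y\in\Scal^{i+2}$; hence $y\in\Scal^i\cap\Scal^{i+2}$. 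Setting $z:=s_i y$, typing gives $z\in\Scal^{i+1}$, and axiom~(\ref{I:AxiomFaceSym}) once more, now as $\delta_{i+2}^\alpha s_i=s_i\delta_{i+2}^\alpha$ (valid since $i+2\neq i,i+1$ and $y\in\Scal^i$), gives $z\in\Scal^{i+2}$. A final application of~(\ref{I:AxiomSymFix}) at index $i+1$ yields $s_{i+1}z=z$, so both sides equal $z=s_i s_{i+1}x$. The genuine obstacle is thus not the braiding but the correct threading of fixed-point memberships through each map.
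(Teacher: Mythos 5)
Your proofs of parts (i) and (iii) are correct: both are collapse arguments in which the hypotheses force every term to a single value ($s_jx$ in (i), $s_is_{i+1}x$ in (iii)), and your threading of fixed-point memberships through Axioms~\ref{SSS:DefCubSingleSetWithoutConn}(\ref{I:AxiomSymType}), (\ref{I:AxiomFaceSym}) and (\ref{I:AxiomSymFix}) is exactly what is needed. (The paper prints no pencil proof of this lemma, as all three parts were verified with Isabelle; but it does display the machine proof of part (ii), \isa{sym-func1}, whose lemma list -- locality, source/target fixed-point compatibility, unit absorption, symmetry typing -- reveals the intended argument.)

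Part (ii) is where you have a genuine gap, and ironically the missing idea is the very degeneracy collapse you used in (i) and (iii). With $x,y\in\Scal^i$ and $\Delta_i(x,y)$ (the $\Delta_j$ in the statement is a typo for $\Delta_i$, as the Isabelle version confirms), locality gives $\delta_i^+x=\delta_i^-y$, and the fixed-point property turns this into $x=y$. Hence $x\circ_iy=x\circ_i\delta_i^+x=x$ by the unit law, so the left-hand side is $s_ix$; since $s_ix\in\Scal^{i+1}$ by typing, the right-hand side is $s_ix\circ_{i+1}\delta_{i+1}^+(s_ix)=s_ix$ as well. Your proposed transfer argument cannot replace this. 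Axiom~\ref{SSS:DefCubSingleSetWithoutConn}(\ref{I:AxiomSymCompatCompo}) makes $s_i$ carry $\circ_{i+1}$-composites of $\Scal^i$-cells to $\circ_i$-composites; transporting it through the bijection $(s_i,\tilde{s}_i)$ yields a statement about $\tilde{s}_i$ carrying $\circ_i$-composites of $\Scal^{i+1}$-cells to $\circ_{i+1}$-composites (this is Lemma~\ref{L:LemmaInvSymProp}(\ref{I:LemmaInvSymCompo}), case $j=i$), never a statement about $s_i$ acting on $\circ_i$-composites of $\Scal^i$-cells, which is what (ii) asserts; the two statements live on different domains and neither is the transport of the other. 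Worse, the axiom is not even applicable under the given hypotheses: it requires $\Delta_{i+1}(x,y)$, which does not follow from $\Delta_i(x,y)$ -- even knowing $x=y$, the condition $\Delta_{i+1}(x,x)$ demands $\delta_{i+1}^+x=\delta_{i+1}^-x$, which fails for a general $x\in\Scal^i$. Finally, your bookkeeping device -- closure of $\Scal^i$ under $\circ_k$ via Axiom~\ref{SSS:DefCubSingleSetWithoutConn}(\ref{I:DeltaCompoCompat}) -- holds only for $k\neq i$, whereas the composition at issue in (ii) is $\circ_i$ itself, exactly the case that axiom excludes. So the plan stalls at its first step; the degenerate collapse above is, in effect, the only proof.
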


An automatic proof of Lemma~\ref{L:LemmaInvSymProp0}\eqref{I:LemmaSymCompo}, called
\isa{sym-func1} in our Isabelle component, is shown in
Section~\ref{SS:cubical-isabelle}.

Using Axiom~\ref{SSS:DefCubSingleSetWithoutConn}\eqref{I:AxiomSymInv},
we can further derive properties for $\tilde{s}$ that are dual to the
symmetry axioms.

\begin{lemma}
\label{L:LemmaInvSymProp}
Let $\Scal$ be an $\omega$-category.
For all $i,j\in \Nbb_+$, 
\begin{enumerate}[{\bf (i)}]
\item \label{I:LemmaFaceInvSym} if $x \in \Scal^{j+1}$, then
\begin{eqn}{equation*}
\delta_i^\alpha \tilde{s}_j x =
\begin{cases}
\tilde{s}_j \delta_{j+1}^\alpha x & \text{if } i=j, \\
\tilde{s}_j \delta_j^\alpha x & \text{if } i=j+1, \\
\tilde{s}_j \delta_i^\alpha x & \text{otherwise}.
\end{cases}
\end{eqn}
\item \label{I:LemmaInvSymCompo} if $x,y \in \Scal^{i+1}$ and $\Delta_j(x,y)$, then
\begin{eqn}{equation*}
\tilde{s}_i (x \circ_j y) =
\begin{cases}
\tilde{s}_i x \circ_{i+1} \tilde{s}_i y & \text{if } j=i, \\
\tilde{s}_i x \circ_i \tilde{s}_i y & \text{if } j=i+1, \\
\tilde{s}_i x \circ_j \tilde{s}_i y & \text{otherwise},
\end{cases}
\end{eqn}
\item $\tilde{s}_i x = x$ if $x \in \Scal^{i,i+1}$,
\item if $|i-j| \geq 2$, $x \in \Scal^{i,j+1}$, $y \in \Scal^{i+1,j}$ and $z \in \Scal^{i+1,j+1}$, then
\begin{align*}
s_i \tilde{s}_j x = \tilde{s}_j s_i x, \qquad
\tilde{s}_i s_j y = s_j \tilde{s}_i y, \qquad
\tilde{s}_i \tilde{s}_j z = \tilde{s}_j \tilde{s}_i z,
\end{align*}
\item $\tilde{s}_i \tilde{s}_{i+1} \tilde{s}_i x = \tilde{s}_{i+1} \tilde{s}_i \tilde{s}_{i+1} x$ if $x \in \Scal^{i+1,i+2}$.
\end{enumerate}
\end{lemma}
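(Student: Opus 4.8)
The plan is to use Axiom~\eqref{I:AxiomSymInv}, which makes the typed maps $s_i\colon\Scal^i\to\Scal^{i+1}$ and $\tilde{s}_i\colon\Scal^{i+1}\to\Scal^i$ mutually inverse bijections. Each clause is the ``dual'' of a property of $s$ that is already available --- the face law~\eqref{I:AxiomFaceSym}, the composition law~\eqref{I:AxiomSymCompatCompo}, the fixing law~\eqref{I:AxiomSymFix}, the braiding law~\eqref{I:AxiomSymBraid}, and Lemma~\ref{L:LemmaInvSymProp0} --- and I would derive each by the same conjugation scheme: given $x\in\Scal^{i+1}$, write $x=s_i u$ with $u=\tilde{s}_i x\in\Scal^i$, apply the corresponding $s$-property to $u$, then apply $\tilde{s}_i$ and cancel via $\tilde{s}_i s_i=\id$ on $\Scal^i$ (or $s_i\tilde{s}_i=\id$ on $\Scal^{i+1}$). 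The single auxiliary fact needed throughout is that faces do not disturb untouched directions: if $w\in\Scal^m$ and $k\neq m$, then $\delta_k^\alpha w\in\Scal^m$, which is immediate from commutativity~\eqref{I:AxiomCommutativity}. This is what guarantees that every element to which a cancellation law is applied genuinely lies in the fixed-point set that the law requires.

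For part~\eqref{I:LemmaFaceInvSym}, recall the three face laws of $s_j$ on $\Scal^j$, namely $\delta_j^\alpha s_j=s_j\delta_{j+1}^\alpha$ and $\delta_{j+1}^\alpha s_j=s_j\delta_j^\alpha$ (from Axiom~\eqref{I:AxiomFaceSym} and Lemma~\ref{L:LemmaInvSymProp0}(i)), and $\delta_i^\alpha s_j=s_j\delta_i^\alpha$ for $i\neq j,j+1$; thus $s_j$ swaps the face directions $j$ and $j+1$ and fixes the others. Writing $x=s_j u$ with $u\in\Scal^j$ and conjugating each law by $\tilde{s}_j$ yields precisely the three cases of the claim (with the roles of $j$ and $j+1$ interchanged, as expected for the inverse). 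Part~\eqref{I:LemmaInvSymCompo} is obtained the same way from the three composition laws of $s_i$ on $\Scal^i$ (Lemma~\ref{L:LemmaInvSymProp0}\eqref{I:LemmaSymCompo} and Axiom~\eqref{I:AxiomSymCompatCompo}), which likewise swap the directions $i$ and $i+1$ and fix the others; here I would additionally use injectivity of $s_i$ on $\Scal^i$ to transport the hypothesis $\Delta_j(x,y)$ to the corresponding $\Delta$-condition on $u=\tilde{s}_i x$ and $v=\tilde{s}_i y$, and check in each case that the resulting composite lies again in $\Scal^i$ (via Axiom~\eqref{I:DeltaCompoCompat}) so that $\tilde{s}_i$ cancels $s_i$. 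Part~(iii) is immediate: for $x\in\Scal^{i,i+1}$, Axiom~\eqref{I:AxiomSymFix} gives $s_i x=x$, hence $\tilde{s}_i x=\tilde{s}_i s_i x=x$.

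For part~(iv), the hypothesis $|i-j|\geq2$ gives $j\neq i,i+1$, so the ``otherwise'' case of part~\eqref{I:LemmaFaceInvSym} shows that $\tilde{s}_i$ and $\tilde{s}_j$ preserve the fixed-point sets in the directions they do not touch; this places the conjugated cells in the $\Scal^{i,j}$ needed to invoke the braiding Axiom~\eqref{I:AxiomSymBraid}. For the first identity I would set $w=\tilde{s}_j x\in\Scal^{i,j}$, apply $s_i s_j w=s_j s_i w$, substitute $s_j w=x$, and cancel with $\tilde{s}_j$; the second identity is the mirror computation, and the third reduces to the second by one further conjugation, with every intermediate cell staying in the appropriate fixed-point set by the preservation remark.

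The delicate part is part~(v), and, more broadly, the index bookkeeping underlying every step. Here the cleanest route avoids the forward Yang--Baxter relation altogether: since $x\in\Scal^{i+1,i+2}$, part~(iii) (at level $i+1$) gives $\tilde{s}_{i+1}x=x$, so the right-hand side collapses to $\tilde{s}_{i+1}\tilde{s}_i x$; and part~\eqref{I:LemmaFaceInvSym} shows successively that $\tilde{s}_i x\in\Scal^{i,i+2}$ and $\tilde{s}_{i+1}\tilde{s}_i x\in\Scal^{i,i+1}$, whence part~(iii) (at level $i$) forces the outermost $\tilde{s}_i$ on the left-hand side to act as the identity, collapsing it to $\tilde{s}_{i+1}\tilde{s}_i x$ as well. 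Conceptually this is the statement that $\tilde{s}_i\tilde{s}_{i+1}\tilde{s}_i$ and $\tilde{s}_{i+1}\tilde{s}_i\tilde{s}_{i+1}$ are the two inverses of the single bijection $s_i s_{i+1}s_i=s_{i+1}s_i s_{i+1}$ of Lemma~\ref{L:LemmaInvSymProp0}(iii), and hence coincide. The main obstacle throughout is ensuring that at each cancellation the intermediate cell really lies in the fixed-point set demanded by Axiom~\eqref{I:AxiomSymInv}; discharging these side conditions systematically, via commutativity~\eqref{I:AxiomCommutativity} and part~\eqref{I:LemmaFaceInvSym}, is exactly the kind of combinatorial bookkeeping for which the Isabelle formalisation is indispensable.
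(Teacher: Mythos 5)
Your proposal is correct and follows essentially the same route as the paper: the paper derives these identities as duals of the symmetry axioms via the bijective pair of Axiom~\eqref{I:AxiomSymInv}, and its displayed Isabelle proof of the third case of part~\eqref{I:LemmaFaceInvSym} (\isa{inv-sym-face}) is exactly your conjugate-and-cancel scheme, with the fixed-point side conditions discharged through commutativity, just as you describe. The only nitpick is terminological: transporting $\Delta_j(x,y)$ to the conjugated cells in part~\eqref{I:LemmaInvSymCompo} uses part~\eqref{I:LemmaFaceInvSym} applied to the equality $\delta_j^+x=\delta_j^-y$ rather than injectivity of $s_i$, but this does not affect the argument.
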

Lemma~\ref{L:LemmaInvSymProp}(\ref{I:LemmaFaceInvSym}) can be
depicted, for $x \in \Scal^{j+1}$, as
\begin{eqn}{equation*}
\begin{array}{cccc}
	\begin{tikzcd}[global scale = 2.5 and 2 and 1 and 1]
	\ar[r, shorten <= -.45em, shorten >= -.3em] \ar[d, shorten <= -.6em, shorten >= -.3em] & j+1 \\
	j & 
	\end{tikzcd}
&
\qquad
	\begin{tikzcd}[global scale = 1.3 and 1.3 and 1 and 1, longer arrows = 0.4em and 0.4em]
	\phantom{\circ} \ar[rr, equal] \ar[dd] && \phantom{\circ} \ar[dd] \\
	& x & \\
	\phantom{\circ} \ar[rr, equal] && \phantom{\circ}
	\end{tikzcd}
&
\
\overset{\tilde{s_j}}{\mapsto} \
&
	\begin{tikzcd}[global scale = 1.3 and 1.3 and 1 and 1, longer arrows = 0.4em and 0.4em]
	\phantom{\circ} \ar[rr] \ar[dd, equal] && \phantom{\circ} \ar[dd, equal] \\
	& \tilde{s}_j x & \\
	\phantom{\circ} \ar[rr] && \phantom{\circ}
	\end{tikzcd}
\end{array}
\end{eqn}

An interactive proof of the third case in this part of
Lemma~\ref{L:LemmaInvSymProp}, called \isa{inv-sym-face} in our
Isabelle component, is also shown in
Subsection~\ref{SS:cubical-isabelle}.

\subsection{Connections}
\label{SS:SingleSetWithConnections}

We now add connections to cubical categories, translating
the approach of Al-Agl, Brown and Steiner~\cite{AlAglBrownSteiner2002}.

\subsubsection{}
\label{SSS:DefCubSingleSetWithConn}
\emph{Connection maps} for a cubical $\omega$-category $\Scal$ are 
  maps $\gamma_i^\alpha : \Scal\to\Scal$, for $i\in
\Nbb_+$ and $\alpha \in \{-,+\}$, satisfying, for all $i,j\in \Nbb_+$,
\begin{enumerate}[{\bf (i)}]
\item \label{I:AxiomConnFaces}
$\delta_j^\alpha \gamma_j^\alpha x = x$,
$\delta_{j+1}^\alpha \gamma_j^\alpha x = s_j x$ and 
$\delta_i^\alpha \gamma_j^\beta x = \gamma_j^\beta \delta_i^\alpha x$ if $i \neq j,j+1$ and $x \in \Scal^j$, 
\item \label{I:AxiomConnCompoCorner} if $j \neq i, i+1$ and $x,y \in \Scal^i$, then
\begin{align*}
\Delta_{i+1}(x,y) &~\Rightarrow~ \gamma_i^+ (x \circ_{i+1} y) = (\gamma_i^+ x \circ_{i+1} s_i x) \circ_i (x \circ_{i+1} \gamma_i^+ y), \\
\Delta_{i+1}(x,y) &~\Rightarrow~ \gamma_i^- (x \circ_{i+1} y) = (\gamma_i^- x \circ_{i+1} y) \circ_i (s_i y \circ_{i+1} \gamma_i^- y), \\
\Delta_j(x,y) &~\Rightarrow~ \gamma_i^\alpha (x \circ_j y) = \gamma_i^\alpha x \circ_j \gamma_i^\alpha y,
\end{align*}
\item \label{I:AxiomConnStab} $\gamma_i^\alpha x = x$ if $x \in \Scal^{i,i+1}$, 
\item \label{I:AxiomConnZigzag} $\gamma_i^+ x \circ_{i+1} \gamma_i^- x = x$ and $\gamma_i^+ x \circ_i \gamma_i^- x = s_i x$ if $x \in \Scal^i$,
\item \label{I:AxiomConnBraid} $\gamma_i^\alpha \gamma_j^\beta x = \gamma_j^\beta
  \gamma_i^\alpha x$ if $|i-j| \geq 2$ and $x \in \Scal^{i,j}$,
\item \label{I:AxiomConnShift} $s_{i+1} s_i \gamma_{i+1}^\alpha x = \gamma_i^\alpha s_{i+1} x$ if $x \in \Scal^{i,i+1}$.
\end{enumerate}

\emph{Connection maps} for a cubical $n$-category $\Scal$ are maps 
$\gamma_i^\alpha: \Scal \to \Scal$ in the range $1 \leq i \leq
n-1$ satisfying the above axioms within appropriate ranges.

As for symmetries, we henceforth assume that upper indices of
connection maps range over $\{-,+\}$.

Connection maps $\gamma_i^\alpha$ restrict to the type
$\Scal^i \to \Scal$, and we restrict our attention to these
types.

\subsubsection{Explanation of axioms}

Henceforth we illustrate connection maps as
\begin{eqn}{equation*}
\begin{tikzcd}[global scale = 2.5 and 2 and 1 and 1]
\ar[r, shorten <= -.45em, shorten >= -.3em] \ar[d, shorten <= -.6em, shorten >= -.3em] & i+1 \\
i & 
\end{tikzcd}
\qquad
\begin{tikzcd}[global scale = 1.3 and 1.3 and 1 and 1, longer arrows = 0.4em and 0.4em]
\phantom{\circ} \ar[rr, equal] \ar[dd, equal] && \phantom{\circ} \ar[dd] \\
 & \gamma_i^+ x & \\
\phantom{\circ} \ar[rr, "x"'] && \phantom{\circ}
\end{tikzcd}
\ = \
\vcenter{\hbox{
\begin{tikzpicture}[global scale = 1 and 1]
\node [] () at (0.5,-0.25) {$x$};
\node [] () at (0.5,1.25) {};
\draw [-] (0,0) -- (1,0);
\draw [-] (0,1) -- (1,1);
\draw [-] (0,0) -- (0,1);
\draw [-] (1,0) -- (1,1);
\draw [-] (0.5,0.1) -- (0.5,0.5) -- (0.9,0.5);
\end{tikzpicture}
}}
\qquad
\begin{tikzcd}[global scale = 1.3 and 1.3 and 1 and 1, longer arrows = 0.4em and 0.4em]
\phantom{\circ} \ar[rr, "x"] \ar[dd] && \phantom{\circ} \ar[dd, equal] \\
 & \gamma_i^- x & \\
\phantom{\circ} \ar[rr, equal] && \phantom{\circ}
\end{tikzcd}
\ = \
\vcenter{\hbox{
\begin{tikzpicture}[global scale = 1 and 1]
\node [] () at (0.5,1.25) {$x$};
\node [] () at (0.5,-0.25) {};
\draw [-] (0,0) -- (1,0);
\draw [-] (0,1) -- (1,1);
\draw [-] (0,0) -- (0,1);
\draw [-] (1,0) -- (1,1);
\draw [-] (0.1,0.5) -- (0.5,0.5) -- (0.5,0.9);
\end{tikzpicture}
}}
\end{eqn}
The two diagrams on the left of the equations follow the style of
previous sections, those on the right are standard in the
literature~\cite{BrownSpencer1976,BrownHiggins1981,AlAglBrownSteiner2002}.
Fixed points in direction $i$ are shown as
\begin{eqn}{equation*}
\begin{tikzcd}[global scale = 2.5 and 2 and 1 and 1]
\ar[r, shorten <= -.45em, shorten >= -.3em] \ar[d, shorten <= -.6em, shorten >= -.3em] & i+1 \\
i & 
\end{tikzcd}
\qquad
\vcenter{\hbox{
\begin{tikzpicture}[global scale = 1 and 1]
\node [] () at (0.5,-0.25) {$x$};
\draw [-] (0,0) -- (1,0);
\draw [-] (0,1) -- (1,1);
\draw [-] (0,0) -- (0,1);
\draw [-] (1,0) -- (1,1);
\draw [-] (0.5,0.1) -- (0.5,0.9);
\end{tikzpicture}
}}
\end{eqn}

Using these diagrams, we can explain some of the axioms with
connections.

\begin{itemize}
\item Axiom~(\ref{I:AxiomConnFaces}) determines the cubical shape of $\gamma_i^\alpha x$:
\begin{eqn}{equation*}
\begin{tikzcd}[global scale = 2.5 and 2 and 1 and 1]
\ar[r, shorten <= -.45em, shorten >= -.3em] \ar[d, shorten <= -.6em, shorten >= -.3em] & i+1 \\
i & 
\end{tikzcd}
\qquad
\gamma_i^+ x
\ = \
\begin{tikzcd}[global scale = 1.3 and 1.3 and 1 and 1, longer arrows = 0.4em and 0.4em]
\phantom{\circ} \ar[rr, equal] \ar[dd, equal] && \phantom{\circ} \ar[dd, "s_i x"] \\
 & \gamma_i^+ x & \\
\phantom{\circ} \ar[rr, "x"'] && \phantom{\circ}
\end{tikzcd}
\qquad
\gamma_i^- x
\ = \
\begin{tikzcd}[global scale = 1.3 and 1.3 and 1 and 1, longer arrows = 0.4em and 0.4em]
\phantom{\circ} \ar[rr, "x"] \ar[dd, "s_i x"'] && \phantom{\circ} \ar[dd, equal] \\
 & \gamma_i^- x & \\
\phantom{\circ} \ar[rr, equal] && \phantom{\circ}
\end{tikzcd}
\end{eqn}
\item The first two equations of the \emph{corner
    axiom}~(\ref{I:AxiomConnCompoCorner}) can be shown as
\begin{eqn}{equation*}
\begin{tikzcd}[global scale = 2.5 and 2 and 1 and 1]
\ar[r, shorten <= -.45em, shorten >= -.3em] \ar[d, shorten <= -.6em, shorten >= -.3em] & i+1 \\
i & 
\end{tikzcd}
\qquad
\vcenter{\hbox{
\begin{tikzpicture}[global scale = 1 and 1]
\node [] () at (0.5,-0.25) {$x \circ_{i+1} y$};
\node [] () at (0.5,1.25) {};
\draw [-] (0,0) -- (1,0);
\draw [-] (0,1) -- (1,1);
\draw [-] (0,0) -- (0,1);
\draw [-] (1,0) -- (1,1);
\draw [-] (0.5,0.1) -- (0.5,0.5) -- (0.9,0.5);
\end{tikzpicture}
}}
\ = \
\vcenter{\hbox{
\begin{tikzpicture}[global scale = 1 and 1]
\node [] () at (0.5,-0.25) {$x$};
\node [] () at (1.5,-0.25) {$y$};
\node [] () at (1,2.25) {};
\draw [-] (0,0) -- (2,0);
\draw [-] (0,0) -- (0,2);
\draw [-] (0,1) -- (2,1);
\draw [-] (0,2) -- (2,2);
\draw [-] (1,0) -- (1,2);
\draw [-] (2,0) -- (2,2);
\draw [-] (0.5,0.1) -- (0.5,0.9);
\draw [-] (1.1,1.5) -- (1.9,1.5);
\draw [-] (0.5,1.1) -- (0.5,1.5) -- (0.9,1.5);
\draw [-] (1.5,0.1) -- (1.5,0.5) -- (1.9,0.5);
\end{tikzpicture}
}}
\end{eqn}
\item Axioms~(\ref{I:AxiomConnFaces})
  and~(\ref{I:AxiomConnCompoCorner}) impose that connection maps are
  morphisms and hence functors with respect to the underlying
 categories in any direction $j \neq i, i+1$.
\item By Axiom~(\ref{I:AxiomConnStab}), $\gamma_i^\alpha$ is an
  identity map on $\Scal^{i,i+1}$.
\item The \emph{zigzag axiom},~(\ref{I:AxiomConnZigzag}) is depicted
  as
\begin{eqn}{equation*}
\begin{tikzcd}[global scale = 2.5 and 2 and 1 and 1]
\ar[r, shorten <= -.45em, shorten >= -.3em] \ar[d, shorten <= -.6em, shorten >= -.3em] & i+1 \\
i & 
\end{tikzcd}
\qquad
\vcenter{\hbox{
\begin{tikzpicture}[global scale = 1 and 1]
\node [] () at (0.5,-0.25) {$x$};
\node [] () at (0.5,1.25) {};
\draw [-] (0,0) -- (2,0);
\draw [-] (0,1) -- (2,1);
\draw [-] (0,0) -- (0,1);
\draw [-] (1,0) -- (1,1);
\draw [-] (2,0) -- (2,1);
\draw [-] (0.5,0.1) -- (0.5,0.5) -- (0.9,0.5);
\draw [-] (1.1,0.5) -- (1.5,0.5) -- (1.5,0.9);
\end{tikzpicture}
}}
\ = \
\vcenter{\hbox{
\begin{tikzpicture}[global scale = 1 and 1]
\node [] () at (0.5,-0.25) {$x$};
\node [] () at (0.5,1.25) {};
\draw [-] (0,0) -- (1,0);
\draw [-] (0,1) -- (1,1);
\draw [-] (0,0) -- (0,1);
\draw [-] (1,0) -- (1,1);
\draw [-] (0.5,0.1) -- (0.5,0.9);
\end{tikzpicture}
}}
\end{eqn}
\item The \emph{braiding axiom} (\ref{I:AxiomConnBraid}) is hard to
  illustrate: four dimensions would be required for drawing it.
\item Finally, Axiom~\eqref{I:AxiomConnShift} relates connections in
  different directions:
\begin{eqn}{equation*}
\begin{tikzcd}[global scale = 1.3 and 1.2 and 1 and 1]
& i+2 & \\
\ar[ur, shorten <= -.25em, shorten >= -.2em] \ar[rr, shorten <= -.45em, shorten >= -.3em] \ar[dd, shorten <= -.6em, shorten >= -.3em] && i+1 \\\\
i && 
\end{tikzcd}
\qquad
\gamma_{i+1}^+ x \  = \ 
\vcenter{\hbox{
\begin{tikzpicture}[global scale = 1 and 1]
\draw [-] (0,0) -- (1,0);
\draw [-] (0,0) -- (0,1);
\draw [-] (0,1) -- (1,1);
\draw [-] (1,0) -- (1,1);
\draw [-] (0.5,1.5) -- (1.5,1.5);
\draw [-] (1.5,0.5) -- (1.5,1.5);
\draw [-] (0,1) -- (0.5,1.5);
\draw [-] (1,0) -- (1.5,0.5);
\draw [-] (1,1) -- (1.5,1.5);
\draw [-] (0.95,0.95) -- (0.75,0.75) -- (1.15,0.75);
\end{tikzpicture}
}}
\ \overset{s_i}{\mapsto} \
\vcenter{\hbox{
\begin{tikzpicture}[global scale = 1 and 1]
\draw [-] (0,0) -- (1,0);
\draw [-] (0,0) -- (0,1);
\draw [-] (0,1) -- (1,1);
\draw [-] (1,0) -- (1,1);
\draw [-] (0.5,1.5) -- (1.5,1.5);
\draw [-] (1.5,0.5) -- (1.5,1.5);
\draw [-] (0,1) -- (0.5,1.5);
\draw [-] (1,0) -- (1.5,0.5);
\draw [-] (1,1) -- (1.5,1.5);
\draw [-] (0.95,0.95) -- (0.75,0.75) -- (0.75,0.35);
\end{tikzpicture}
}}
\ \overset{s_{i+1}}{\mapsto} \
\vcenter{\hbox{
\begin{tikzpicture}[global scale = 1 and 1]
\draw [-] (0,0) -- (1,0);
\draw [-] (0,0) -- (0,1);
\draw [-] (0,1) -- (1,1);
\draw [-] (1,0) -- (1,1);
\draw [-] (0.5,1.5) -- (1.5,1.5);
\draw [-] (1.5,0.5) -- (1.5,1.5);
\draw [-] (0,1) -- (0.5,1.5);
\draw [-] (1,0) -- (1.5,0.5);
\draw [-] (1,1) -- (1.5,1.5);
\draw [-] (1.15,0.75) -- (0.75,0.75) -- (0.75,0.35);
\end{tikzpicture}
}}
\  = \  \gamma_i^+ s_{i+1} x
\end{eqn}
\end{itemize}

\subsubsection{Category of $\omega$-categories with connections}
A \emph{morphism} $f: \Scal \to \Scal'$ \emph{of $\omega$-categories with connections} 
is a morphism of $\omega$-categories that preserves
connections: $f \gamma_i^\alpha x = {\gamma'}_i^\alpha f x$, for all
$i\in \Nbb_+$ and~$x \in \Scal^i$. This defines the category
$\SinCatG{\omega}$ of single-set cubical $\omega$-categories with
connections.
In a same way, we define the category $\SinCatG{n}$ of $n$-categories with connections.

For $1 \leq m \leq n$, there is a truncation functor
$U^n_m: \SinCatG{n} \to \SinCatG{m}$ that forgets the $k$-dimensional
structure for $k > m$. It sends any $n$-category with connections $\Scal$ to the 
$m$-category with connections with set $\Scal^{m+1,\dots,n}$, keeping
only the face maps and compositions up to $m$ and the symmetries and
connections up to $m-1$.

There is also a truncation functor
$U_m: \SinCatG{\omega} \to \SinCatG{m}$ that forgets the
$k$-dimensional structure for $k > m$. It sends any 
$\omega$-category with connections $\Scal$ to the 
$m$-category with connections with set~$\Scal^{>m}$, keeping only the
face maps and compositions indexed up to $m$ and the symmetries and
connections indexed up to $m-1$.

\medskip

The following properties of connections have been proved using
Isabelle.
\begin{lemma}
\label{L:LemmaInvConnProp0}
Let $\Scal$ be an $\omega$-category with
connections. For all $i,j\in \Nbb_+$,
\begin{enumerate}[{\bf (i)}]
\item $\delta_j^\alpha \gamma_j^{-\alpha} x = \delta_{j+1}^\alpha x$ and $\delta_{j+1}^\alpha \gamma_j^{-\alpha} x = \delta_{j+1}^\alpha x$ if $x \in \Scal^i$,
\item if $j \neq i, i+1$, $x,y \in \Scal^i$ and $\Delta_{i+1}(x,y)$, then
\begin{align*}
\gamma_i^+ (x \circ_{i+1} y) = (\gamma_i^+ x \circ_i x)
  \circ_{i+1} (s_i x \circ_i \gamma_i^+ y), \quad\text{ and }\quad
\gamma_i^- (x \circ_{i+1} y) = (\gamma_i^- x \circ_i s_i y) \circ_{i+1} (y \circ_i \gamma_i^- y),
\end{align*}
\item \label{I:LemmaConnSym} $\gamma_i^\alpha s_j x = s_j \gamma_i^\alpha x$ and $\gamma_i^\alpha \tilde{s}_j y = \tilde{s}_j \gamma_i^\alpha y$ if $|i-j| \geq 2$, $x \in \Scal^{i,j}$ and $y \in \Scal^{i,j+1}$,
\item $\tilde{s}_i \tilde{s}_{i+1} \gamma_i^\alpha x = \gamma_{i+1}^\alpha \tilde{s}_{i+1} x$ if $x \in \Scal^{i,i+2}$.
\end{enumerate}
\end{lemma}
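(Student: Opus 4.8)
The plan is to derive all four parts as purely algebraic consequences of the connection axioms of~\ref{SSS:DefCubSingleSetWithConn}, the symmetry axioms of~\ref{SSS:DefCubSingleSetWithoutConn} and the basic single-set laws of Lemma~\ref{L:sscat-props}, with the only recurring source of friction being the bookkeeping needed to check that each intermediate cell lies in the fixed-point set required by the next axiom (for which the commutation of faces with symmetries, Axiom~(\ref{I:AxiomFaceSym}), and the lattice structure of~\ref{SSS:Lattice} are the main tools).

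For part~(i) (where $x\in\Scal^j$, the set on which $\gamma_j^\alpha$ is defined), I would read the four ``wrong-sign'' faces off the two zigzag composites of Axiom~(\ref{I:AxiomConnZigzag}). Taking both faces of $\gamma_j^+x\circ_{j+1}\gamma_j^-x=x$ in direction $j+1$ via the face-of-composite law (Lemma~\ref{L:sscat-props}) gives $\delta_{j+1}^+\gamma_j^-x=\delta_{j+1}^+x$ and $\delta_{j+1}^-\gamma_j^+x=\delta_{j+1}^-x$; taking both faces of $\gamma_j^+x\circ_j\gamma_j^-x=s_jx$ in direction $j$ gives $\delta_j^+\gamma_j^-x=\delta_j^+s_jx$ and $\delta_j^-\gamma_j^+x=\delta_j^-s_jx$. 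Finally I rewrite $\delta_j^\alpha s_jx=s_j\delta_{j+1}^\alpha x$ by Axiom~(\ref{I:AxiomFaceSym}) and collapse it to $\delta_{j+1}^\alpha x$ using Axiom~(\ref{I:AxiomSymFix}), since $\delta_{j+1}^\alpha x\in\Scal^{j,j+1}$ by Axiom~(\ref{I:AxiomCommutativity}). This yields exactly the four claimed identities. Part~(ii) is then immediate: its right-hand sides and those of the corner axiom~(\ref{I:AxiomConnCompoCorner}) differ by a single application of the interchange law~(\ref{I:ExchangeLaw}) with direction pair $i,i+1$ (match $w=\gamma_i^+x$, $x=x$, $y=s_ix$, $z=\gamma_i^+y$ for the $+$ case, and analogously for $-$), so one starts from the corner axiom and rewrites; the only work is verifying the four $\Delta_i,\Delta_{i+1}$ composability hypotheses of the interchange law from the faces computed in~(i), and the $j$-side hypotheses stated in the lemma are not actually needed.

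Part~(iv) is the dual of the shift axiom~(\ref{I:AxiomConnShift}), obtained by inverting symmetries with Axiom~(\ref{I:AxiomSymInv}). On $\Scal^{i,i+1}$ I would rewrite $s_{i+1}s_i\gamma_{i+1}^\alpha x=\gamma_i^\alpha s_{i+1}x$ as $\gamma_{i+1}^\alpha x=\tilde{s}_i\tilde{s}_{i+1}\gamma_i^\alpha s_{i+1}x$, then substitute $x=\tilde{s}_{i+1}y$ for $y\in\Scal^{i,i+2}$ and cancel $s_{i+1}\tilde{s}_{i+1}y=y$. The delicate point is the typing: one must check $\tilde{s}_{i+1}y\in\Scal^{i,i+1}$, which follows from Lemma~\ref{L:LemmaInvSymProp}(\ref{I:LemmaFaceInvSym}) and the typing axiom~(\ref{I:AxiomSymType}), so that every $s/\tilde{s}$ cancellation is legitimate. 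The $\tilde{s}_j$ half of part~(iii) will follow from its $s_j$ half by the same inversion trick.

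The $s_j$ half of part~(iii), the commutation $\gamma_i^\alpha s_jx=s_j\gamma_i^\alpha x$ for $|i-j|\geq2$, is the step I expect to be the main obstacle. Since $j,j+1\notin\{i,i+1\}$, the map $s_j$ commutes with $\delta_i^\alpha,\delta_{i+1}^\alpha$ by Axiom~(\ref{I:AxiomFaceSym}), is a $\circ_i$- and $\circ_{i+1}$-functor by Axiom~(\ref{I:AxiomSymCompatCompo}), and braids with $s_i$ by Axiom~(\ref{I:AxiomSymBraid}); applying these to the two zigzag relations shows that both $s_j\gamma_i^\alpha x$ and $\gamma_i^\alpha s_jx$ carry the same $i$- and $(i+1)$-faces and satisfy the same pair of zigzag identities with $s_jx$ in place of $x$. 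The difficulty is that in a single-set cubical category a cell is \emph{not} determined by its boundary, so matching faces and zigzags is not by itself conclusive. The equality must instead be forced by a genuine calculational identity that unfolds $\gamma_i^\alpha s_jx$ into compositions which $s_j$-functoriality converts termwise into $s_j\gamma_i^\alpha x$. Pinning down this rewriting, while maintaining all the fixed-point memberships, is where I expect the real effort to lie and where Isabelle's automated search is most valuable.
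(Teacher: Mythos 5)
The paper gives no written proof of this lemma (it is one of the facts stated as ``proved using Isabelle'', with proofs omitted), so your proposal must be judged on its own merits. Most of it holds up: reading the wrong-sign faces off the two zigzag composites via Lemma~\ref{L:sscat-props}, then collapsing $s_j\delta_{j+1}^\alpha x$ to $\delta_{j+1}^\alpha x$ by Axioms~(\ref{I:AxiomFaceSym}) and~(\ref{I:AxiomSymFix}), does give part~(i); the corner axiom~(\ref{I:AxiomConnCompoCorner}) followed by one application of the interchange law~(\ref{I:ExchangeLaw}), with the composability conditions checked from~(i), gives part~(ii); the substitution $x=\tilde{s}_{i+1}y$ in the shift axiom~(\ref{I:AxiomConnShift}), with the typing checks you indicate, gives part~(iv); and the inversion trick correctly reduces the $\tilde{s}_j$ half of part~(iii) to its $s_j$ half.

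The genuine gap is the $s_j$ half of part~(iii), which you explicitly leave open: you rightly observe that matching faces and zigzag equations cannot force equality of cells in a single-set cubical category, but you never supply the ``genuine calculational identity'' you say is needed, so your text is a statement of difficulty rather than a proof. The identity exists and uses only tools already on your list. For $x\in\Scal^{i,j}$ with $|i-j|\geq 2$, the zigzag axiom~(\ref{I:AxiomConnZigzag}) gives $s_jx=\gamma_j^+x\circ_j\gamma_j^-x$. Since $\gamma_j^{\pm}x\in\Scal^i$ (third clause of Axiom~(\ref{I:AxiomConnFaces}), as $i\neq j,j+1$) and $j\neq i,i+1$, the third clause of the corner axiom~(\ref{I:AxiomConnCompoCorner}) lets $\gamma_i^\alpha$ distribute over this composite:
\begin{equation*}
\gamma_i^\alpha s_jx \;=\; \gamma_i^\alpha\gamma_j^+x\circ_j\gamma_i^\alpha\gamma_j^-x \;=\; \gamma_j^+\gamma_i^\alpha x\circ_j\gamma_j^-\gamma_i^\alpha x \;=\; s_j\gamma_i^\alpha x,
\end{equation*}
where the middle step is the braid axiom~(\ref{I:AxiomConnBraid}) (applicable since $x\in\Scal^{i,j}$ and $|i-j|\geq 2$) and the last step is the zigzag axiom applied to $\gamma_i^\alpha x$, which lies in $\Scal^j$ by Axiom~(\ref{I:AxiomConnFaces}) again. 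With this chain inserted, your proposal becomes a complete proof; without it, part~(iii) — and hence the half of it you derive from it — is unproved.
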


\subsection{Inverses}
\label{SS:SinSetWithInv}

Applications in higher rewriting and homotopy type theory require
cubical $(\omega,p)$-categories, where cells of dimension strictly
greater than $p$ are invertible. More specifically, $p=0$ is needed
for homotopy type theory~ \cite{BezemCoquandHuber2014} and the
categorification of abstract rewriting, while the categorification of
string rewriting requires $p=1$, and that of term and diagram
rewriting $p=2$~\cite{Polybook2024}. Here, we translate an approach
introduced by Lucas~\cite{LucasPhD2017} to single-sets, following his notation closely.

\subsubsection{Single-set cubical categories with inverses}
\label{SSS:DefInverse}
A cell $x$ of an $\omega$-category with connections
$\Scal$ is \emph{$r_i$-invertible}, for $i\in \Nbb_+$, if there exists
a cell $y\in \Scal$ such that
\[
\Delta_i(x,y),
\qquad
x \circ_i y = \delta_i^- x,
\qquad
\Delta_i(y,x),
\qquad
y \circ_i x = \delta_i^+ x.
\]
A cell $x \in \Scal^{>n}$ \emph{has an $r_i$-invertible
  $n-1$-shell}, for $k,i\in \Nbb_+$, if $\delta_j^\alpha x$ is
$r_i$-invertible for each $1 \leq j \leq n$ with $j \neq i$.

A \emph{single-set cubical $(\omega,p)$-category (with connections)}
is a single-set cubical $(\omega,p)$-category (with connections)
$\Scal$ such that every cell in $\Scal^{>n}$ with an $r_i$-invertible
$n-1$-shell is $r_i$-invertible for all $n\geq p+1$ and
$1 \leq i \leq n$.

We have shown with Isabelle that the $r_i$-inverse of any $x$ is
uniquely defined. We therefore write $r_i x$ for the $r_i$-inverse of
$x$.

We restrict these definitions to $n$-categories
with connections as usual, removing the indices outside the range
$1\leq i\leq n$. In particular, a \emph{single-set cubical
  $(n,p)$-category} is a single-set cubical
$n$-category~$\Scal$ such that every cell in
$\Scal^{k+1,\dots,n}$ with an $r_i$-invertible $k-1$-shell is
$r_i$-invertible for all $p+1 \leq k \leq n$ and~$1 \leq i \leq k$,

\subsubsection{Category of $(\omega,p)$-categories}
  A \emph{morphism of $(\omega,p)$-categories} is
  simply a morphism in $\SinCatG{\omega}$.  Inverses are preserved
  because $\Delta_i(x, r_i x)$ implies $\Delta_i(f(x), f(r_i x))$ and
  $f(x) \circ_i f(r_i x) = \delta_i^- f(x)$, and likewise for the
  opposite order of composition. Thus $f(x)$ is $r_i$-invertible with
  inverse $f(r_i x)$.  The situation for $(n,p)$-categories is
  similar. This defines the categories $\SinCatG{(\omega,p)}$ and
  $\SinCatG{(n,p)}$.

\subsubsection{Explanation of inverse maps}
A $(1,0)$-category $\Scal$ is a category
with a map $r_1:\Scal\to\Scal$ such that $\Delta_1(x, r_1 x)$,
$\Delta_1(r_1 x, x)$, $x \circ_1 r_1 x = \delta_1^- x$ and
$r_1 x \circ_1 x = \delta_1^+ x$, for every $x \in \Scal$. This
defines a groupoid. The cell $x$ is sent by $r_1$ to the backward
arrow in the following diagram
\[
r_1 \left(\delta_1^- x \oto{x} \delta_1^+ x\right) ~=~ \delta_1^+ x \ofrom{} \delta_1^- x.
\]

A $(2,0)$-category $\Ccal$ is a
$2$-category with maps $r_1,r_2:\Scal\to\Scal$ such that
$\Delta_i(x,r_i x)$, $\Delta_i(r_i x,x)$,
$x \circ_i r_i x = \delta_i^- x$,
$r_i x \circ_i x = \delta_i^+ x$, for $i \in \{1, 2\}$ and
$x \in \Scal$. The inverses of $x$ are depicted as
\[
r_1
\left(\begin{tikzcd}[global scale = 1.3 and 1.3 and 1 and 1.1, longer arrows = 0.4em and 0.4em]
\phantom{\circ} \ar[rr, "\delta_1^- x"] \ar[dd, "\delta_2^- x"'] && \phantom{\circ} \ar[dd, "\delta_2^+ x"] \\
 & x & \\
\phantom{\circ} \ar[rr, "\delta_1^+ x"'] && \phantom{\circ}
\end{tikzcd}\right) \
=
\begin{tikzcd}[global scale = 1.3 and 1.3 and 1 and 1.1, longer arrows = 0.4em and 0.4em]
\phantom{\circ} \ar[rr, "\delta_1^+ x"] && \phantom{\circ} \\
 & r_1 x & \\
\phantom{\circ} \ar[uu, "r_1 \delta_2^- x"] \ar[rr, "\delta_1^- x"'] && \phantom{\circ} \ar[uu, "r_1 \delta_2^+ x"']
\end{tikzcd},
\qquad
r_2
\left(\begin{tikzcd}[global scale = 1.3 and 1.3 and 1 and 1.1, longer arrows = 0.4em and 0.4em]
\phantom{\circ} \ar[rr, "\delta_1^- x"] \ar[dd, "\delta_2^- x"'] && \phantom{\circ} \ar[dd, "\delta_2^+ x"] \\
 & x & \\
\phantom{\circ} \ar[rr, "\delta_1^+ x"'] && \phantom{\circ}
\end{tikzcd}\right) \
=
\begin{tikzcd}[global scale = 1.3 and 1.3 and 1 and 1.1, longer arrows = 0.4em and 0.4em]
\phantom{\circ} \ar[dd, "\delta_2^+ x"'] && \phantom{\circ} \ar[ll, "r_2 \delta_1^- x"'] \ar[dd, "\delta_2^- x"] \\
 & r_2 x & \\
\phantom{\circ} && \phantom{\circ} \ar[ll, "r_2 \delta_1^+ x"]
\end{tikzcd}.
\]

\subsubsection{Truncation}
The truncation functors $U^n_m: \SinCatG{n} \to \SinCatG{m}$ and $U_m:
\SinCatG{\omega} \to \SinCatG{m}$ induce functors $U^n_m:
\SinCatG{(n,p)} \to \SinCatG{(m,p)}$ and $U_m: \SinCatG{(\omega,p)}
\to \SinCatG{(m,p)}$ for $p \leq m \leq n$.

\begin{remark}
  For $i \in \Nbb_+$, every cell in $\Scal^i$ is its own
  $r_i$-inverse.
\end{remark}

\begin{lemma}
  Let $\Scal$ be an $(\omega,p)$-category with
  connections. For every $i,j\in \Nbb_+$ and every $r_i$-invertible
  $x,y \in \Scal$,
\begin{enumerate}[{\bf (i)}]
\item \label{I:PropRiCompatFace} $\delta_i^\alpha r_i x = \delta_i^{-\alpha} x$ and $\delta_j^\alpha r_i x = r_i \delta_j^\alpha x$ if $j \neq i$,
\item if $j \neq i$, then $r_i (x \circ_i y) = r_i y \circ_i r_i x$ if $\Delta_i(x,y)$, and $r_i (x \circ_j y) = r_i x \circ_j r_i y$ if $\Delta_j(x,y)$,
\item $r_i s_{i-1} x = s_{i-1} x$ and $r_i s_j y = s_j r_i y$ if $j \neq i-1$, $x \in \Scal^{i-1}$ and $y \in \Scal^j$,
\item $r_i \tilde{s}_i x = \tilde{s}_i x$ and $r_i \tilde{s}_j y = \tilde{s}_j r_i y$ if $j \neq i$, $x \in \Scal^{i+1}$ and $y \in \Scal^{j+1}$,
\item $r_i \gamma_j^\alpha x = \gamma_j^\alpha r_i x$ if $i \neq j,j+1$ and $x \in \Scal^j$.
\end{enumerate}
\end{lemma}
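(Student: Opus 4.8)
The unifying tool is the uniqueness of $r_i$-inverses recorded in~\ref{SSS:DefInverse}: to establish an identity of the shape $r_i(z) = w$ it suffices to verify that $w$ satisfies the four defining conditions of the $r_i$-inverse of $z$, namely $\Delta_i(z,w)$, $z \circ_i w = \delta_i^- z$, $\Delta_i(w,z)$ and $w \circ_i z = \delta_i^+ z$. Every equation in the statement is obtained this way, except the two self-inverse identities $r_i s_{i-1} x = s_{i-1} x$ and $r_i \tilde{s}_i x = \tilde{s}_i x$, which follow at once from the preceding remark that each cell of $\Scal^i$ is its own $r_i$-inverse, together with the typing Axiom~\eqref{I:AxiomSymType} ($s_{i-1}(\Scal^{i-1}) \subseteq \Scal^i$ and $\tilde{s}_i(\Scal^{i+1}) \subseteq \Scal^i$), which places $s_{i-1}x$ and $\tilde{s}_i x$ in $\Scal^i$.

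I would begin with (i), since it is used repeatedly later. The first identity $\delta_i^\alpha r_i x = \delta_i^{-\alpha} x$ is not a uniqueness argument: it is read off directly from the locality conditions $\Delta_i(x, r_i x)$ and $\Delta_i(r_i x, x)$ built into the definition of an $r_i$-inverse, which give $\delta_i^-(r_i x) = \delta_i^+ x$ and $\delta_i^+(r_i x) = \delta_i^- x$. For $\delta_j^\alpha r_i x = r_i \delta_j^\alpha x$ with $j \neq i$, I would apply $\delta_j^\alpha$ to $x \circ_i r_i x = \delta_i^- x$ and $r_i x \circ_i x = \delta_i^+ x$: Axiom~\eqref{I:DeltaCompoCompat} turns the left-hand sides into $\delta_j^\alpha x \circ_i \delta_j^\alpha r_i x$ and $\delta_j^\alpha r_i x \circ_i \delta_j^\alpha x$, while Axiom~\eqref{I:AxiomCommutativity} rewrites the right-hand sides as $\delta_i^- \delta_j^\alpha x$ and $\delta_i^+ \delta_j^\alpha x$; these are exactly the defining equations of $r_i(\delta_j^\alpha x)$, so uniqueness concludes. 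I would immediately record the consequence that $x \in \Scal^j$ with $j \neq i$ forces $r_i x \in \Scal^j$, since then $\delta_j^\alpha r_i x = r_i \delta_j^\alpha x = r_i x$.

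Parts (ii)--(iv) follow the same template, differing only in which functoriality linearises the composite $z \circ_i w$. The reversal $r_i(x \circ_i y) = r_i y \circ_i r_i x$ needs only associativity and the unit laws of $(\Scal,\delta_i^-,\delta_i^+,\circ_i)$, cancelling the inner pair $y \circ_i r_i y = \delta_i^- y$ and then using $x \circ_i \delta_i^+ x = x$; the cross-direction identity $r_i(x \circ_j y) = r_i x \circ_j r_i y$ ($j \neq i$) comes from applying the interchange law~\eqref{I:ExchangeLaw} to $(x \circ_j y) \circ_i (r_i x \circ_j r_i y)$, which becomes $(x \circ_i r_i x) \circ_j (y \circ_i r_i y) = \delta_i^- x \circ_j \delta_i^- y = \delta_i^-(x \circ_j y)$ by Axiom~\eqref{I:DeltaCompoCompat}. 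The commuting identities $r_i s_j = s_j r_i$ and $r_i \tilde{s}_j = \tilde{s}_j r_i$ are again uniqueness arguments, fed by the functoriality of $s_j$ (Axioms~\eqref{I:AxiomFaceSym} and~\eqref{I:AxiomSymCompatCompo}) and of $\tilde{s}_j$ (Lemma~\ref{L:LemmaInvSymProp}(\ref{I:LemmaFaceInvSym}),(\ref{I:LemmaInvSymCompo})) in their direction-preserving clauses, i.e.\ where $i \neq j,j+1$; the operand membership $r_i y \in \Scal^j$ (resp.\ $\Scal^{j+1}$) needed to invoke these clauses is supplied by the consequence of (i).

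Finally, (v) is the most involved case and the one I expect to demand the most care. To show $\gamma_j^\alpha r_i x$ is the $r_i$-inverse of $\gamma_j^\alpha x$, I would apply $\gamma_j^\alpha$ to $x \circ_i r_i x = \delta_i^- x$ through the direction-preserving clause of the corner Axiom~\eqref{I:AxiomConnCompoCorner} (valid for $i \neq j,j+1$) and then rewrite $\gamma_j^\alpha \delta_i^- x = \delta_i^- \gamma_j^\alpha x$ via the face Axiom~\eqref{I:AxiomConnFaces}, and symmetrically on the $+$ side. The subtlety is that Axiom~\eqref{I:AxiomConnCompoCorner} requires both operands of $\circ_i$ to lie in $\Scal^j$: this holds for $x$ by hypothesis and for $r_i x$ precisely by the consequence of (i), which is why (i) must be proved first. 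The main obstacle throughout is thus bookkeeping rather than depth: at every step one must discharge the composability predicates $\Delta_i$ and the fixed-point memberships required to license the relevant axiom, and must separate the direction-preserving index configurations (where the template applies verbatim) from the coincident-index cases (absorbed by the self-inverse remark).
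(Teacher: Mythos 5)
Your overall strategy is exactly the paper's: the paper proves item (i) by checking that $\delta_j^\alpha r_i x$ satisfies the four defining conditions of the $r_i$-inverse of $\delta_j^\alpha x$ and invoking uniqueness of $r_i$-inverses, and then dismisses items (ii)--(v) as ``similar''. Your treatment of (i), (ii), (v), and of the generic index configurations $i \neq j, j+1$ in (iii) and (iv) is correct, and is in fact more explicit than the paper about the side conditions that must be discharged — in particular the observation that (i) yields $r_i x \in \Scal^j$ for $j \neq i$, which is precisely what licenses the corner and functoriality axioms later.

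There is, however, a genuine gap in your closing claim that the coincident-index cases are ``absorbed by the self-inverse remark''. That remark applies only when the map in question lands \emph{in} $\Scal^i$, which is true for $s_{i-1}x$ (with $x \in \Scal^{i-1}$) and $\tilde{s}_i x$ (with $x \in \Scal^{i+1}$), i.e.\ for the two first identities. But the second identities of (iii) and (iv), as stated, also contain coincident configurations: $j = i$ in (iii) (only $j \neq i-1$ is excluded) and $j = i-1$ in (iv) (only $j \neq i$ is excluded). There $s_i y \in \Scal^{i+1}$ and $\tilde{s}_{i-1} y \in \Scal^{i-1}$, which need not lie in $\Scal^i$, so self-inverseness is unavailable; and your uniqueness template fails as well, because the relevant functoriality clause is not direction-preserving: by Lemma~\ref{L:LemmaInvSymProp0}(\ref{I:LemmaSymCompo}), applying $s_i$ to $y \circ_i r_i y = \delta_i^- y$ gives $s_i y \circ_{i+1} s_i r_i y = \delta_{i+1}^- s_i y$, which exhibits $s_i r_i y$ as the $r_{i+1}$-inverse of $s_i y$, not its $r_i$-inverse. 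Moreover, no argument can close these cases, because the identities are false there: since every $y \in \Scal^i$ is its own $r_i$-inverse, the $j=i$ case of (iii) would assert that $s_i y$ is its own $r_i$-inverse for every $y \in \Scal^i$; yet in the cubical $(\omega,0)$-category of commutative cubes in a group $G$, taking $y = \epsilon_{2,1} f$ with $f \in G$ not an involution (say $G = \Zbb/3$, $f = 1$) gives $r_1 s_1 y = \epsilon_{2,2}(f^{-1}) \neq \epsilon_{2,2}(f) = s_1 r_1 y$. The true identities in these configurations carry an index shift, e.g.\ $r_{i+1} s_i y = s_i r_i y$ for $y \in \Scal^i$. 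So your argument proves (iii) and (iv) only under the stronger hypotheses $j \neq i-1, i$ and $j \neq i, i-1$ respectively; you should either prove the index-shifted identities for the remaining configurations or note explicitly that the statement cannot hold there as written, rather than folding them into the self-inverse remark.
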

\begin{proof}
  We prove the first item of \eqref{I:PropRiCompatFace} as an
  example. By $r_i$-invertibility of $x$, $\Delta_i(x,r_i x)$ and
  $\Delta_i(r_i x,x)$. Hence $\delta_i^+ x = \delta_i^- r_i x$ and
  $\delta_i^+ r_i x = \delta_i^- x$. Suppose $j \neq i$. Then
  $z = \delta_j^\alpha r_i x$ satisfies
\[
\Delta_i(\delta_j^\alpha x,z),
\qquad
\delta_j^\alpha x \circ_i z = \delta_i^- \delta_j^\alpha x,
\qquad
\Delta_i(z,\delta_j^\alpha x),
\qquad
z \circ_i \delta_j^\alpha x = \delta_i^+ \delta_j^\alpha x.
\]
Thus $z = r_i \delta_j^\alpha x$, because $r_i$-inverses are unique.
The remaining proofs are similar.
\end{proof}

\begin{lemma}
\label{L:SingleSetInverseFaceStab}
Let $\Scal$ be an $(\omega,p)$-category. If
$x \in \Scal^{>n}$ is $r_i$-invertible, then $r_i x \in \Scal^{>n}$.
\end{lemma}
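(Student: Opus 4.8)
The plan is to unfold both membership conditions and reduce the claim to a single face identity, then split on a direction index. Recall that $x \in \Scal^{>n}$ means $x \in \Scal^j$ for every $j > n$, that is $\delta_j^\alpha x = x$ for all $j > n$ and all $\alpha$, and that $r_i x \in \Scal^{>n}$ is exactly the statement $\delta_j^\alpha r_i x = r_i x$ for all such $j$ and $\alpha$. Since $x$ is $r_i$-invertible, $r_i x$ is a well-defined cell, so after fixing an arbitrary $j > n$ and $\alpha$ it suffices to prove the single identity $\delta_j^\alpha r_i x = r_i x$. I would distinguish the cases $j \neq i$ and $j = i$.

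For $j \neq i$ I would use the compatibility of $r_i$ with face maps from the preceding lemma, namely $\delta_j^\alpha r_i x = r_i \delta_j^\alpha x$ (whose proof rests only on $r_i$-invertibility and the underlying category axioms, hence applies here). As $j > n$ and $x \in \Scal^{>n}$ give $\delta_j^\alpha x = x$, the right-hand side collapses to $r_i x$, so $\delta_j^\alpha r_i x = r_i x$. The composite $r_i \delta_j^\alpha x$ is meaningful precisely because $\delta_j^\alpha x = x$ is $r_i$-invertible by hypothesis.

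The diagonal case $j = i$ is the only one requiring a separate argument, and I expect it to be the \emph{main, if mild, obstacle}, since the commutation $\delta_j^\alpha r_i = r_i \delta_j^\alpha$ is unavailable when $j = i$. This case can arise only if $i > n$, and then $i \in \{j : j > n\}$ forces $x \in \Scal^i$. By the Remark above, a cell of $\Scal^i$ is its own $r_i$-inverse, so $r_i x = x$ and hence $\delta_i^\alpha r_i x = \delta_i^\alpha x = x = r_i x$. Equivalently, one may invoke the other identity of the same lemma, $\delta_i^\alpha r_i x = \delta_i^{-\alpha} x$, together with $\delta_i^{-\alpha} x = x$ and $r_i x = x$.

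Combining the two cases gives $\delta_j^\alpha r_i x = r_i x$ for every $j > n$ and every $\alpha$, which is exactly $r_i x \in \Scal^{>n}$. In sum, the result is a short corollary of the already-established face/inverse compatibility; the only real care is the bookkeeping of the diagonal direction $i$, where invertibility degenerates because an $i$-fixed cell coincides with its own $r_i$-inverse.
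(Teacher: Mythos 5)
Your proof is correct and takes essentially the same route as the paper's: the paper applies $\delta_m^\alpha$ ($m \geq n+1$) directly to the two defining equations $x \circ_i r_i x = \delta_i^- x$ and $r_i x \circ_i x = \delta_i^+ x$ and concludes by uniqueness of $r_i$-inverses, which is exactly the computation underlying the commutation identity $\delta_j^\alpha r_i x = r_i \delta_j^\alpha x$ that you cite from the preceding lemma. One point in your favour: you treat the diagonal direction $j = i$ explicitly (via $x \in \Scal^i \Rightarrow r_i x = x$), whereas the paper's argument silently requires $m \neq i$ -- the compatibility $\delta_m^\alpha(u \circ_i v) = \delta_m^\alpha u \circ_i \delta_m^\alpha v$ is an axiom only for $m \neq i$ -- so your version closes a small case the paper glosses over.
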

\begin{proof}
  Suppose $x$ is $r_i$-invertible. For any $m \geq n+1$, apply
  $\delta_m^\alpha$ to $x \circ_i r_i x = \delta_i^- x$ and
  $x \circ_i r_i x = \delta_i^- x$. This yields
  $\delta_m^\alpha r_i x \circ_i x = \delta_i^+ x$. Uniqueness of
  $r_i x$ then implies that $\delta_m^\alpha r_i x = r_i x$.
\end{proof}

\begin{proposition}
  \label{P:OmegaZeroInv}
Every cell in an $(\omega,0)$-category  is
$r_i$-invertible for each $i\in \Nbb_+$.
\end{proposition}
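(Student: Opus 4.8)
The plan is to argue by induction on the dimension of a cell, which is well-defined and finite by Axiom~\ref{SSS:DefCubSingleSetWithoutConn}\eqref{I:AxiomFiniteDimCells} and the discussion in~\ref{SSS:Lattice} (so that $\Scal = \bigcup_{n} \Scal^{>n}$). Concretely, I would establish the statement that every cell of dimension at most $d$ is $r_i$-invertible for every $i \in \Nbb_+$, and induct on $d$. For the base case $d=0$, a cell lies in $\Scal^i$ for every $i$, so it is its own $r_i$-inverse by the Remark preceding the statement. More generally, whenever $i$ is \emph{not} a free direction of $x$, i.e.\ $x \in \Scal^i$, that same Remark immediately yields $r_i$-invertibility; so the only work is for directions $i$ in which $x$ is non-degenerate.

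For the inductive step, fix a cell $x$ of dimension $d \geq 1$ and a free direction $i$ of $x$, and write $F \subseteq \{1,\dots,d\}$-sized subset of $\Nbb_+$ for the set of free directions of $x$. The idea is to bring $x$ into a normal form to which the defining axiom of single-set cubical $(\omega,0)$-categories in~\ref{SSS:DefInverse} applies cleanly. Using symmetries I would transport $x$ to a cell $x'$ whose free directions are exactly $\{1,\dots,d\}$, that is $x' \in \Scal^{>d}$: by~\ref{SSS:Lattice} the symmetries, their reverses and their composites induce dimension-preserving bijections $\Scal^I \simeq \Scal^J$ respecting faces and compositions, and since each $s_k$ interchanges the directions $k$ and $k+1$ (Axiom~\eqref{I:AxiomSymCompatCompo} together with Lemma~\ref{L:LemmaInvSymProp0}\eqref{I:LemmaSymCompo}), a suitable composite realises a permutation $\sigma$ of directions with $\sigma(F) = \{1,\dots,d\}$. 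Because such a bijection sends $\circ_k$ to $\circ_{\sigma(k)}$ and $\delta_k^\alpha$ to $\delta_{\sigma(k)}^\alpha$, $r_i$-invertibility of $x$ is equivalent to $r_{\sigma(i)}$-invertibility of $x'$, with $\sigma(i) \in \{1,\dots,d\}$.

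It then remains to show that $x' \in \Scal^{>d}$ is $r_{\sigma(i)}$-invertible, and here I would invoke the axiom with $n=d$ and index $\sigma(i)$. Since $x' \in \Scal^{>d}$ and $1 \leq \sigma(i) \leq d$, it suffices to verify that $x'$ has an $r_{\sigma(i)}$-invertible $(d-1)$-shell, i.e.\ that each face $\delta_j^\alpha x'$ with $1 \leq j \leq d$, $j \neq \sigma(i)$, is $r_{\sigma(i)}$-invertible. Every such $j$ is a free direction of $x'$, so $\delta_j^\alpha x'$ is fixed in direction $j$ and in all directions $>d$, hence has dimension at most $d-1$ and is $r_{\sigma(i)}$-invertible by the induction hypothesis. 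The axiom then gives $r_{\sigma(i)}$-invertibility of $x'$, and transporting back along the symmetry bijection yields $r_i$-invertibility of $x$.

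The main obstacle is the normalisation step and the transport of $r_i$-invertibility along it: everything rests on the claims from~\ref{SSS:Lattice} that symmetry composites give dimension-preserving bijections respecting faces and compositions, and on carefully tracking how the composition index is permuted, so that invertibility in direction $i$ corresponds to invertibility in direction $\sigma(i)$. This normalisation is genuinely needed rather than cosmetic: without it, the shell of $x$ computed with $n = \max F$ would contain faces $\delta_j^\alpha x$ in directions $j \notin F$, for which $\delta_j^\alpha x = x$, rendering the axiom vacuous for $x$; forcing $F$ to be the initial segment $\{1,\dots,d\}$ removes these degenerate directions from the range $1 \leq j \leq d$ and makes every shell face strictly lower-dimensional, so that the induction hypothesis applies and no circularity arises.
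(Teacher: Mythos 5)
Your proposal is correct and rests on the same essential ingredients as the paper's proof — induction on dimension (grounded in Axiom~\ref{SSS:DefCubSingleSetWithoutConn}\eqref{I:AxiomFiniteDimCells}), transport along symmetries, and the shell axiom of $(\omega,0)$-categories — but the decomposition is genuinely different, namely in \emph{where} the transport is applied. The paper organises the induction over the graded pieces $\Scal^{>n}$, so the cell $x$ is already in normal position and is never moved; instead each shell face $\delta_j^\alpha x\in\Scal^{j,n+1,n+2,\dots}$ is pushed by $s_{n-1}\cdots s_j$ into $\Scal^{>n-1}$, the induction hypothesis produces an inverse there (in direction $i$ or $i-1$ according as $i<j$ or $j<i$ — the same index bookkeeping you flag), and that inverse is pulled back by $\tilde{s}_j\cdots\tilde{s}_{n-1}$ and verified to be an $r_i$-inverse by explicit computation with Lemma~\ref{L:LemmaInvSymProp}\eqref{I:LemmaInvSymCompo}. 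You invert this scheme: you normalise the cell itself so that its free directions become the initial segment $\{1,\dots,d\}$, after which the shell faces of $x'$ automatically have dimension at most $d-1$ and the induction hypothesis applies to them verbatim, with no transport of faces. What your route buys is a one-line verification of the shell condition; what it costs is a standalone transport lemma stating that composites of (reverse) symmetries carry $r_i$-invertibility to $r_{\sigma(i)}$-invertibility and back, which requires exactly the computations the paper performs inline (Axiom~\ref{SSS:DefCubSingleSetWithoutConn}\eqref{I:AxiomSymCompatCompo}, Lemma~\ref{L:LemmaInvSymProp0}\eqref{I:LemmaSymCompo}, Lemma~\ref{L:LemmaInvSymProp}\eqref{I:LemmaInvSymCompo}, uniqueness of inverses) plus one point your sketch leaves implicit: to apply the reverse symmetries to an inverse $y'$ of $x'$, you must first know that $y'$ itself lies in $\Scal^{>d}$, i.e.\ the analogue of Lemma~\ref{L:SingleSetInverseFaceStab}; this should be made explicit in a full write-up. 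Finally, your closing observation about why normalisation is unavoidable in your setup (faces in fixed directions equal $x$ itself, making the shell condition circular) is precisely the reason the paper chooses to run its induction over $\Scal^{>n}$ rather than over raw dimension.
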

\begin{proof}
By Axiom~\eqref{I:AxiomFiniteDimCells},
$\Scal = \bigcup_{n\geq 0} \Scal^{>n}$. We show by induction on the
dimension $n$ of cells that every cell $x \in \Scal^{>n}$ is
$r_i$-invertible for all $i\in \Nbb_+$. 

For $n=0$, suppose $x \in \Scal^{>0}$. Then $x$ is its own
$r_i$-inverse for all $i\in \Nbb_+$.
  
Suppose the property holds for $n-1$. Let $i \in \Nbb_+$ and $x \in \Scal^{>n}$. If $i \geq n+1$, then $x$ is its own $r_i$-inverse, so suppose $i \leq n$. Let $1 \leq j \leq n$ with $i \neq j$. We have $\delta_j^\alpha x \in \Scal^{j,n+1,n+2,\dots}$ and thus $s_{n-1} \dots s_j \delta_j^\alpha x \in \Scal^{>n}$ using Axioms~(\ref{I:AxiomSymType}) and~(\ref{I:AxiomFaceSym}) of Definition~\ref{SSS:DefCubSingleSetWithoutConn}. There are two cases depending on the value of $i$.
\begin{itemize}
\item If $j < i \leq n$, then $s_{n-1} \dots s_j \delta_j^\alpha x$
  has an $r_{i-1}$-inverse $y$ by the induction hypothesis. Writing $z =
  \tilde{s}_j \dots \tilde{s}_{n-1} y$ and using property~\eqref{I:LemmaInvSymCompo} of \cref{L:LemmaInvSymProp},
\begin{align*}
z \circ_i \delta_j^\alpha x
&= \tilde{s}_j \dots \tilde{s}_{n-1} y \circ_i \tilde{s}_j \dots \tilde{s}_{n-1} s_{n-1} \dots s_j \delta_j^\alpha x \\
&= \tilde{s}_j \dots \tilde{s}_{n-1} (y \circ_{i-1} s_{n-1} \dots s_j \delta_j^\alpha x) \\
&= \tilde{s}_j \dots \tilde{s}_{n-1} \delta_{i-1}^+ s_{n-1} \dots s_j \delta_j^\alpha x \\
&= \delta_i^+ \delta_j^\alpha x.
\end{align*}
Similarly $\delta_j^\alpha x \circ_i z = \delta_i^- \delta_j^\alpha x$, $\Delta_i(z, \delta_j^\alpha x)$ and $\Delta_i(\delta_j^\alpha x, z)$. So $z$ is the $r_i$-inverse of $\delta_j^\alpha x$.
\item If $i < j$ then $s_{n-1} \dots s_j \delta_j^\alpha x$ has an
  $r_i$-inverse $y$ by the induction hypothesis. Using the same
  abbreviation and lemma as before, 
\begin{align*}
z \circ_i \delta_j^\alpha x
&= \tilde{s}_j \dots \tilde{s}_{n-1} y \circ_i \tilde{s}_j \dots \tilde{s}_{n-1} s_{n-1} \dots s_j \delta_j^\alpha x \\
&= \tilde{s}_j \dots \tilde{s}_{n-1} (y \circ_i s_{n-1} \dots s_j \delta_j^\alpha x) \\
&= \tilde{s}_j \dots \tilde{s}_{n-1} \delta_i^+ s_{n-1} \dots s_j \delta_j^\alpha x \\
&= \delta_i^+ \delta_j^\alpha x,
\end{align*}
using property~\eqref{I:LemmaInvSymCompo} of
\cref{L:LemmaInvSymProp}. Likewise,
$\delta_j^\alpha x \circ_i z = \delta_i^- \delta_j^\alpha x$,
$\Delta_i(z, \delta_j^\alpha x)$ and $\Delta_i(\delta_j^\alpha x, z)$,
and it follows again that $z$ is the $r_i$-inverse of
$\delta_j^\alpha x$.
\end{itemize}
This shows that $x$ has an $r_i$-invertible $n-1$-shell and is therefore $r_i$-invertible.
\end{proof}

A formalisation of this proof with Isabelle is shown in
Subsection~\ref{SS:isaatwork}.

\section{Equivalence with classical cubical categories}
\label{S:EquivalenceWithCubicalCategories}

We now present our main theorems: a series of equivalences which
justify our single-set axioms relative to the classical ones.
We begin in Subsection~\ref{SS:CubicalCategories} by recalling the classical 
axioms for cubical categories. For cubical categories with connection, we use the axioms of
Al-Agl, Brown and Steiner~\cite{AlAglBrownSteiner2002}; for inverses,
we follow Lucas~\cite{LucasPhD2017}. These categories, with
appropriate functors between them, form the categories
$\CubCat{\omega}$ without connections, $\CubCatG{\omega}$ with
connections and $\CubCatG{(\omega,p)}$ with inverses.  In
Subsection~\ref{SS:EquivalenceWithCubicalCategories}, we prove an
equivalence of categories $\SinCat{\omega} \simeq \CubCat{\omega}$, in
Subsection~\ref{SS:EquivalenceWithCubicalCategoriesWithConnections} we
extend it to $\SinCatG{\omega}\simeq \CubCatG{\omega}$ and in
Subsection~\ref{SS:EquivalenceWithCubicalCategoriesWithInverses}
further to $\SinCatG{(\omega,p)} \simeq \CubCatG{(\omega,p)}$.

Straightforward modifications of these proofs lead to equivalences
between the corresponding $n$-categories. We do not list
them explicitly.

\subsection{Cubical categories}
\label{SS:CubicalCategories}

First we recall the classical definitions of cubical categories as
cubical sets with cell compositions.

\subsubsection{}
\label{SSS:DefClassCubCat}
A \emph{(cubical) $\omega$-category} is a family
$\Ccal=(\Ccal_n)_{n \in \Nbb}$ of sets of \emph{$n$-cells}
with \emph{face maps}
$\partial_{n,i}^\alpha : \Ccal_n \to \Ccal_{n-1}$, \emph{degeneracy
  maps} $\epsilon_{n,i}:\Ccal_{n-1}\to\Ccal_n$ and \emph{compositions}
$\star_{n,i}:\Ccal_n\times_{n,i}\Ccal_n \fl \Ccal_n$, for
$1\leq i\leq n$, where $\Ccal_n\times_{n,i}\Ccal_n$ denotes the
pullback of the cospan
$\partial_{n,i}^+:\Ccal_n\to \Ccal_{n-1}\leftarrow \Ccal_n:
\partial_{n,i}^-$. These satisfy, for all $1\leq i,j \leq n$,
\begin{enumerate}[{\bf (i)}]
\label{E:AxiomClassCubCat}
\item $a\star_{n,i}(b\star_{n,i}c) = (a\star_{n,i}b)\star_{n,i}c$ if either side is defined,
\item $a\star_{n,i}\epsilon_{n,i}\partial_{n,i}^+a = \epsilon_{n,i}\partial_{n,i}^-a\star_{n,i}a = a$,
\item $\partial_{n-1,i}^\alpha\partial_{n,j}^\beta = \partial_{n-1,j-1}^\beta\partial_{n,i}^\alpha$ if $i < j$,
\item \label{I:AxiomFaceCompoClass} if $a,b$ are $\star_{n,j}$-composable then
\begin{eqn}{equation*}
\partial_{n,i}^\alpha(a\star_{n,j}b) =
\begin{cases}
\partial_{n,i}^\alpha a\star_{n,j-1}\partial_{n,i}^\alpha b & \quad\mathrm{if~}i<j, \\
\partial_{n,i}^-a & \quad\mathrm{if~}i=j\mathrm{~and~}\alpha=-, \\
\partial_{n,i}^+b & \quad\mathrm{if~}i=j\mathrm{~and~}\alpha=+, \\
\partial_{n,i}^\alpha a\star_{n,j}\partial_{n,i}^\alpha b & \quad\mathrm{if~}i>j,
\end{cases}
\end{eqn}
\item if $a,b$ are $\star_{n,i}$-composable, $c,d$ are $\star_{n,i}$-composable, $a,c$ are $\star_{n,j}$-composable and $b,d$ are $\star_{n,j}$-composable, then $(a\star_{n,i}b)\star_{n,j}(c\star_{n,i}d) = (a\star_{n,j}c)\star_{n,i}(b\star_{n,j}d)$,
\item
\begin{eqn}{equation*}
\partial_{n,i}^\alpha\epsilon_{n,j} =
\begin{cases}
\epsilon_{n-1,j-1}\partial_{n-1,i}^\alpha & \quad\text{if~}i<j, \\
\id_{C_{n-1}} & \quad\text{if~}i=j, \\
\epsilon_{n-1,j}\partial_{n-1,i-1}^\alpha & \quad\text{if~}i>j,
\end{cases}
\end{eqn}
\item if $a,b$ are $\star_{n,j}$-composable then
\begin{eqn}{equation*}
\epsilon_{n+1,i}(a\star_{n,j}b) =
\begin{cases}
\epsilon_{n+1,i}a\star_{n+1,j+1}\epsilon_{n+1,i}b & \quad\mathrm{if~}i\leq j, \\
\epsilon_{n+1,i}a\star_{n+1,j}\epsilon_{n+1,i}b & \quad\mathrm{if~}i>j,
\end{cases}
\end{eqn}
\item $\epsilon_{n+1,i}\epsilon_{n,j} = \epsilon_{n+1,j+1}\epsilon_{n,i}$ if $i\leq j$.
\end{enumerate}

A \emph{(cubical) $n$-category} is a family
$\Ccal=(\Ccal_k)_{k \leq n}$ of sets of \emph{$k$-cells} with
\emph{face maps} $\partial_{k,i}^\alpha : \Ccal_k \to \Ccal_{k-1}$,
\emph{degeneracy maps} $\epsilon_{k,i}:\Ccal_{k-1}\to\Ccal_k$ and
\emph{composition maps}
$\star_{k,i}:\Ccal_k\times_{k,i}\Ccal_k \fl \Ccal_k$, for
$1\leq i\leq k \leq n$. These satisfy the above axioms, removing those
involving face maps, compositions and degeneracies outside the
appropriate ranges.

Each degeneracy map $\epsilon_{n,i}$ yields identities for
$\star_{n,i}$.

\subsubsection{}
\label{SSS:DefClassCubCatConn}
A \emph{(cubical) $\omega$-category with connections} $\Ccal$ is an
$\omega$-category with \emph{connection maps} \linebreak 
$\Gamma_{n,i}^\alpha:\Ccal_{n-1} \to \Ccal_n$ for $1\leq i<n$,  such that
\begin{enumerate}[{\bf (i)}]
\item \label{I:AxiomFaceConnClass}
\begin{eqn}{equation*}
\partial_{n,i}^\alpha\Gamma_{n,j}^\beta =
\begin{cases}
\Gamma_{n-1,j-1}^\beta\partial_{n-1,i}^\alpha & \quad\mathrm{if~}i<j, \\
\id_{C_{n-1}} & \quad\mathrm{if~}i=j,j+1\mathrm{~and~}\alpha=\beta, \\
\epsilon_{n-1,j}\partial_{n-1,j}^\alpha & \quad\mathrm{if~}i=j,j+1\mathrm{~and~}\alpha=-\beta, \\
\Gamma_{n-1,j}^\beta\partial_{n-1,i-1}^\alpha & \quad\mathrm{if~}i>j+1,
\end{cases}
\end{eqn}
\item if $a,b$ are $\star_{n,j}$-composable then
\begin{eqn}{equation*}
\Gamma_{n+1,i}^\alpha(a\star_{n,j}b)=
\begin{cases}
\Gamma_{n+1,i}^\alpha a\star_{n+1,j+1}\Gamma_{n+1,i}^\alpha b & \quad\mathrm{if~}i<j, \\
(\Gamma_{n+1,i}^-a\star_{n+1,i}\epsilon_{n+1,i+1}b)\star_{n+1,i+1}(\epsilon_{n+1,i}b\star_{n+1,i}\Gamma_{n+1,i}^-b) & \quad\mathrm{if~}i=j\mathrm{~and~}\alpha=-, \\
(\Gamma_{n+1,i}^+a\star_{n+1,i}\epsilon_{n+1,i}a)\star_{n+1,i+1}(\epsilon_{n+1,i+1}a\star_{n+1,i}\Gamma_{n+1,i}^+b) & \quad\mathrm{if~}i=j\mathrm{~and~}\alpha=+, \\
\Gamma_{n+1,i}^\alpha a\star_{n+1,j}\Gamma_{n+1,i}^\alpha b & \quad\mathrm{if~}i>j,
\end{cases}
\end{eqn}
\item $\Gamma_{n,i}^+a\star_{n,i}\Gamma_{n,i}^-a=\epsilon_{n,i+1}a$ and $\Gamma_{n,i}^+a\star_{n,i+1}\Gamma_{n,i}^-a=\epsilon_{n,i}a$,
\item
\begin{eqn}{equation*}
\Gamma_{n+1,i}^\alpha\epsilon_{n,j}=
\begin{cases}
\epsilon_{n+1,j+1}\Gamma_{n,i}^\alpha & \quad\mathrm{if~}i<j, \\
\epsilon_{n+1,i}\epsilon_{n,i} & \quad\mathrm{if~}i=j, \\
\epsilon_{n+1,j}\Gamma_{n,i-1}^\alpha & \quad\mathrm{if~}i>j,
\end{cases}
\end{eqn}
\item
\begin{eqn}{equation*}
\Gamma_{n+1,i}^\alpha\Gamma_{n,j}^\beta=
\begin{cases}
\Gamma_{n+1,j+1}^\beta\Gamma_{n,i}^\alpha & \quad\mathrm{if~}i<j, \\
\Gamma_{n+1,i+1}^\alpha\Gamma_{n,i}^\alpha & \quad\mathrm{if~}i=j\mathrm{~and~}\alpha=\beta.
\end{cases}
\end{eqn}
\end{enumerate}

A \emph{(cubical) $n$-category with connections} is an
$n$-category with \emph{connection maps}
$\Gamma_{k,i}^\alpha:\Ccal_{k-1}\to\Ccal_k$ for~$1\leq i<k \leq n$,
satisfying axioms with the appropriate index restrictions.

There is some index shifting in the above axioms. For instance in the axiom describing the faces of compositions~\ref{SSS:DefClassCubCat}\eqref{I:AxiomFaceCompoClass} and the one describing the faces of connections~\ref{SSS:DefClassCubCatConn}\eqref{I:AxiomFaceConnClass}, there is a $j-1$ index appearing in the case where $j>i$, which does not appear in the other cases. Furthermore, there are two indices, for the dimension $k$ and the direction $i$. Their single-set counterparts~\ref{SSS:DefCubSingleSetWithoutConn}\eqref{I:DeltaCompoCompat} and \ref{SSS:DefCubSingleSetWithConn}\eqref{I:AxiomConnFaces} do not have such explicit index shifting and the dimension index is missing.

\subsubsection{Cubical categories with inverses}
\label{SS:DefGrpdCubCat}
Let $1 \leq i \leq n$. An $n$-cell $a$ of an $\omega$-category with connections $\Ccal$ is
\emph{$R_{n,i}$-invertible} if there is a $n$-cell $b$ such that
\[
a \star_{n,i} b = \epsilon_{n,i} \partial_{n,i}^- a
\quad \text{and}\quad
b \star_{n,i} a = \epsilon_{n,i} \partial_{n,i}^+ a. 
\]
Each $R_{n,i}$-inverse of an $n$-cell $a$ is unique and denoted
$R_{n,i} a$. A $n$-cell $a$ \emph{has an $R_{n-1,i}$-invertible shell}
if the cells $\partial_{n,j}^\alpha a$ are $R_{n-1,i-1}$-invertible
for all $1 \leq j < i$, and the cells $\partial_{n,j}^\alpha a$ are
$R_{n-1,i}$-invertible for all~$i < j \leq n$.

A \emph{(cubical) $(\omega,p)$-category (with connections)} $\Ccal$,
for $p\in \Nbb$, is an $\omega$-category with connections in
which, for all $n > p$ and $1 \leq i \leq n$, every $n$-cell with an
$R_{n-1,i}$-invertible shell is $R_{n,i}$-invertible.

The above definitions of invertibility and
invertible shell extend to $n$-categories with
connections, by removing the $R_{k,i}$ with indices $(k,i)$ whenever $k > n$.  In
particular, for $0\leq p\leq n$, a \emph{(cubical) $(n,p)$-category
  (with connections)} $\Ccal$ is an $n$-category with
connections in which, for all $p+1 \leq k \leq n$ and~$1 \leq i \leq k$, every $k$-cell with an $R_{k-1,i}$-invertible shell
is $R_{k,i}$-invertible.

\subsubsection{Categories of cubical categories}
A \emph{functor} $F:\Ccal\to\Dcal$ \emph{of $\omega$-categories} is a family of maps
$(F_n:\Ccal_n\to\Dcal_n)_{n \in \Nbb}$ that preserve all face,
degeneracy and composition maps:
\begin{align*}
  F_{n-1} \partial_{n,i}^\alpha
  = \partial_{n,i}^\alpha F_n,\qquad
  F_n (a \star_{n,i} b)
  = F_n a \star_{n,i} F_n b,\qquad
  F_n \epsilon_{n,i}
  = \epsilon_{n,i} F_{n-1},
\end{align*}
for all $1\leq i\leq n$ and $\star_{n,i}$-composable
$a,b \in \Ccal_n$.  Cubical $\omega$-categories and their functors
form the category $\CubCat{\omega}$.

Further, a \emph{functor of $\omega$-categories with
  connections} is a functor between the underlying 
  $\omega$-categories that preserves the connection maps:
$F_n \Gamma_{n,i}^\alpha = \Gamma_{n,i}^\alpha F_{n-1}$, for all
$1\leq i < n$.  Cubical $\omega$-categories with connections and their
functors form the category $\CubCatG{\omega}$.

Finally, a \emph{functor of $(\omega,p)$-categories}
$F:\Ccal\to\Dcal$ is a functor between the underlying 
$\omega$-categories with connections. As in the single-set case,
inverses are preserved by such functors.  Cubical
$(\omega,p)$-categories and their functors form the category
$\CubCatG{(\omega,p)}$.

The categories $\CubCat{n}$, $\CubCatG{n}$ and $\CubCatG{(n,p)}$ of
cubical $n$-categories and their morphisms are defined by truncation
as usual.

\subsection{Equivalence for cubical $\omega$-categories}
\label{SS:EquivalenceWithCubicalCategories}

We are now prepared for the following result.
\begin{theorem}
\label{T:EquivSinSetClass}
There is an equivalence of categories
\[
\begin{tikzcd}[global scale = 10 and 4 and 1 and 1]
\FC{(-)}  :  \SinCat{\omega} \ar[r, shift left] & \CubCat{\omega}  :  \FS{(-)} \ar[l, shift left].
\end{tikzcd}
\]
\end{theorem}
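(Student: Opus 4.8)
The plan is to exhibit the two functors $\FC{(-)}$ and $\FS{(-)}$ explicitly, check that they are well defined and functorial, and then produce natural isomorphisms $\FC{\FS{\Ccal}} \cong \Ccal$ and $\FS{\FC{\Scal}} \cong \Scal$. The guiding idea is that the grading of a classical cubical category by dimension is recorded on the single-set side by the descending chain of fixed-point sets $\Scal^{>n}$, and that the single-set symmetry maps $s_i$ and $\tilde{s}_i$ play the role of the implicit rotations hidden in the classical index-shifting highlighted at the end of \ref{SSS:DefClassCubCatConn}.

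For $\FC{(-)} : \SinCat{\omega} \to \CubCat{\omega}$, I would set $\FC{\Scal}_n := \Scal^{>n}$, the cells of dimension at most $n$, and take $\star_{n,i}$ to be $\circ_i$ restricted to this level. The subtle point is the faces and degeneracies. For $x \in \Scal^{>n}$ and $1 \leq i \leq n$ the cell $\delta_i^\alpha x$ lies in $\Scal^i$ but not in the canonical set $\Scal^{>n-1} = \Scal^{\{n,n+1,\dots\}}$, so I normalise it by rotating the degenerate direction from $i$ up to $n$, putting $\partial_{n,i}^\alpha x := s_{n-1} \cdots s_i\, \delta_i^\alpha x$; dually the degeneracy is $\epsilon_{n,i} x := \tilde{s}_i \cdots \tilde{s}_{n-1} x$, moving the degeneracy of a cell of $\Scal^{>n-1}$ from direction $n$ down to $i$. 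Axioms \ref{SSS:DefCubSingleSetWithoutConn}\eqref{I:AxiomSymType} and \eqref{I:AxiomSymInv} guarantee the correct typing and that $\partial_{n,i}^\alpha \epsilon_{n,i} = \id$, while \cref{L:LemmaInvSymProp0,L:LemmaInvSymProp} supply the commutations of symmetries with faces and compositions needed to verify the classical relations of \ref{SSS:DefClassCubCat}, in particular the index-shifted face-of-composition and degeneracy-of-composition laws.

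For $\FS{(-)} : \CubCat{\omega} \to \SinCat{\omega}$ I would take the colimit $\FS{\Ccal} := \colim_n \Ccal_n$ along the top degeneracies $\epsilon_{n+1,n+1}$; since these are split monomorphisms this is the increasing union, each class having a unique representative of minimal dimension. The face maps are induced by $\partial_{n,i}^\alpha$ (with $\delta_i^\alpha$ acting as the identity on a class once $i$ exceeds its level), and $\circ_i$ by $\star_{n,i}$ after lifting both arguments to a common level $n \geq i$. Crucially, the symmetry maps are not primitive classically and must be manufactured: on an identity $\epsilon_{n,i} a$ I set $s_i(\epsilon_{n,i} a) := \epsilon_{n,i+1} a$ and let $\tilde{s}_i$ be the reverse assignment, extending both by the identity off the fixed-point sets, since only the typed versions matter. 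One then verifies the ten axioms of \ref{SSS:DefCubSingleSetWithoutConn}; Axiom \eqref{I:AxiomFiniteDimCells} holds because each class sits at a finite level.

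Finally, for the equivalence: the composite $\FS{\FC{\Scal}}$ is the colimit of the inclusions $\Scal^{>n-1} \hookrightarrow \Scal^{>n}$, which is exactly the top degeneracy $\epsilon_{n,n}$ under the construction above (the product $\tilde{s}_i \cdots \tilde{s}_{n-1}$ is empty when $i=n$), so it equals $\bigcup_n \Scal^{>n}$; by Axiom \eqref{I:AxiomFiniteDimCells} this is all of $\Scal$, and one checks the transported operations agree with the originals. Conversely $(\FS{\Ccal})^{>n}$ is identified with $\Ccal_n$ through minimal-dimension representatives, and the symmetry-conjugated faces and degeneracies recover the original $\partial_{n,i}^\alpha$ and $\epsilon_{n,i}$. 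Naturality in both directions is immediate from the definitions of morphisms, which preserve exactly the relevant structure. I expect the main obstacle to be the axiom verifications that cross the two formalisms: proving that the symmetry maps defined from degeneracies in $\FS{\Ccal}$ satisfy the braiding Axiom \eqref{I:AxiomSymBraid} and the compatibility Axioms \eqref{I:AxiomFaceSym}--\eqref{I:AxiomSymCompatCompo}, and dually that the symmetry-conjugated faces and degeneracies of $\FC{\Scal}$ satisfy the index-shifted classical relations of \ref{SSS:DefClassCubCat} --- this is precisely the bookkeeping that the single-set axiomatisation was designed to absorb, and where the commutation lemmas \cref{L:LemmaInvSymProp0,L:LemmaInvSymProp} do the heavy lifting.
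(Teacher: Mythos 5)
You follow the paper's route exactly: $\FC{\Scal}_n = \Scal^{>n}$ with classical faces $s_{n-1}\cdots s_i\,\delta_i^\alpha$ and degeneracies $\tilde{s}_i\cdots\tilde{s}_{n-1}$, the colimit of the $\Ccal_n$ along top degeneracies for $\FS{(-)}$, symmetries manufactured from degeneracies, and the same two natural isomorphisms ($\FS{(\FC{\Scal})} = \bigcup_n \Scal^{>n} = \Scal$ via Axiom~\ref{SSS:DefCubSingleSetWithoutConn}\eqref{I:AxiomFiniteDimCells}, and $\Ccal_n \cong (\FS{\Ccal})^{>n}$ via minimal representatives). There is, however, one definition that fails as written: the face maps of $\FS{\Ccal}$.

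The map on the colimit induced by the classical faces, i.e.\ by the cocone $(\phi_{n-1}\,\partial_{n,i}^\alpha)_{n\geq i}$, sends $[a]_\sim$ with $a\in\Ccal_n$ to $[\partial_{n,i}^\alpha a]_\sim$, whose representative at level $n$ is $\epsilon_{n,n}\partial_{n,i}^\alpha a$ --- a cell degenerate in the top direction $n$, not in direction $i$. Such a cell is not an identity for $\circ_i$, so the single-set axioms break: the unit axiom requires $\Delta_i(x,\delta_i^+x)$, but composability at level $n$ would force $\partial_{n,i}^+a = \partial_{n,i}^-\epsilon_{n,n}\partial_{n,i}^+a = \epsilon_{n-1,n-1}\partial_{n-1,i}^-\partial_{n,i}^+a$, which is false whenever $\partial_{n,i}^+a$ is itself non-degenerate; equivalently, the derived law $\delta_i^-\delta_i^+ = \delta_i^+$ of Lemma~\ref{L:sscat-props}\eqref{I:AxiomStability} fails. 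The correct definition, used in \ref{SSS:FS}, re-degenerates in the \emph{same} direction: $\delta_i^\alpha[a]_\sim = [\epsilon_{n,i}\partial_{n,i}^\alpha a]_\sim$. This is also what consistency with your own $\FC{(-)}$ dictates: inverting $\partial_{n,i}^\alpha = s_{n-1}\cdots s_i\,\delta_i^\alpha$ gives $\delta_i^\alpha = \tilde{s}_i\cdots\tilde{s}_{n-1}\,\partial_{n,i}^\alpha = \epsilon_{n,i}\partial_{n,i}^\alpha$, not $\partial_{n,i}^\alpha$ alone. With this repair the rest of your plan goes through as in the paper; in particular your extension of $s_i$ by the identity off $\Scal^i$ (instead of the paper's total map $s_i[a]_\sim = [\epsilon_{n,i+1}\partial_{n,i}^-a]_\sim$) is harmless, since the two agree on $\Scal^i$ and every symmetry axiom of \ref{SSS:DefCubSingleSetWithoutConn} is guarded by fixed-point hypotheses.
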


We develop the proof of $\SinCat{\omega}\simeq \CubCat{\omega}$ in the
remainder of this subsection.  The functors $\FC{(-)}$ and $ \FS{(-)}$
are defined in \eqref{SSS:FC} and \eqref{SSS:FS}, the natural
isomorphisms in \eqref{SSS:NaturalIsomorphism1} and
\eqref{SSS:NaturalIsomorphism2} below. 
We henceforth refer to classical and single-set $\omega$-categories to distinguish between
  objects in $\CubCat{\omega}$ and~$\SinCat{\omega}$.

\subsubsection{The functor $\FC{(-)}$}
\label{SSS:FC}
For each category $(\Scal,\delta,\circ,s)$ in $\SinCat{\omega}$, we
define the category
$\FC{(\Scal,\delta,\circ,s)} = (\FC{\Scal},\partial,\epsilon,\star)$
in $\CubCat{\omega}$ with
\begin{enumerate}[{\bf (i)}]
\item sets of $n$-cells  $\FC{\Scal}_n = \Scal^{>n}$ for all $n \in \Nbb$,
\item  $\partial_{n,i}^\alpha: \FC{\Scal}_n \to
  \FC{\Scal}_{n-1}$ such that $\partial_{n,i}^\alpha = s_{n-1} \dots
  s_i \delta_i^\alpha$ for all $1 \leq i \leq n$,
\item  $\epsilon_{n,i}: \FC{\Scal}_{n-1} \to\FC{\Scal}_n$
  such that $\epsilon_{n,i} = \tilde{s}_i \dots \tilde{s}_{n-1}$
  for all $1 \leq i \leq n$,
\item compositions $\star_{n,i}$ such that
  $x \star_{n,i} y = x \circ_i y$ if $\Delta_i (x, y)$ and 
  undefined otherwise, for all $1 \leq i \leq n$ and
  $x,y \in \FC{\Scal}_n$.
\end{enumerate}
With each morphism $f : \Scal \to \Scal'$ in $\SinCat{\omega}$,
$\FC{(-)}$ associates the functor $\FC{f}: \FC{\Scal} \to \FC{\Scal'}$
on $n$-cells in $\CubCat{\omega}$ as the restriction of $f$ to a map
from $\FC{\Scal}_n=\Scal^{>n}$ to $\FC{\Scal'}_n=\Scal'^{>n}$, for
each $n \in \Nbb$.

\begin{lemma}
\label{L:FC-well-defined}
The functor $\FC{(-)}$ is well-defined.
\end{lemma}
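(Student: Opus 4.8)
The plan is to check the three conditions needed for $\FC{(-)}$ to be a well-defined functor $\SinCat{\omega}\to\CubCat{\omega}$. First, that the data $(\FC{\Scal},\partial,\epsilon,\star)$ is type-correct, i.e.\ that $\partial_{n,i}^\alpha$, $\epsilon_{n,i}$ and $\star_{n,i}$ really land in the prescribed sets $\Scal^{>n-1}$, $\Scal^{>n}$ and $\Scal^{>n}$; second, that this data satisfies the eight axioms of~\ref{SSS:DefClassCubCat}; and third, that each $\FC{f}$ is a functor and that $\FC{(-)}$ preserves identities and composites of morphisms. The third point is essentially immediate: the restriction of $f$ to $\Scal^{>n}=\FC{\Scal}_n$ lands in $\Scal'^{>n}$ because $f$ restricts to each fixed-point set $\Scal^I\to\Scal'^I$, and since $\partial$, $\epsilon$ and $\star$ are built solely from $\delta$, $s$, $\tilde s$ and $\circ$ — all preserved by $f$, the $\tilde s$ by Lemma~\ref{L:LemmaInvSymPropMorph} — the functor equations hold termwise, while preservation of identities and composites is trivial since $\FC{(-)}$ is just restriction.

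For type-correctness I would argue via the lattice of fixed points of~\ref{SSS:Lattice}. The key computation is a telescoping of symmetries: if $x\in\Scal^{>n}$ and $1\leq i\leq n$, then $\delta_i^\alpha x\in\Scal^{i,n+1,n+2,\dots}$ by stability (Lemma~\ref{L:sscat-props}) and commutativity (Axiom~\eqref{I:AxiomCommutativity}); applying $s_i,\dots,s_{n-1}$ in turn moves the single fixed direction $i$ up to $n$ while leaving the directions $>n$ untouched, by the typing and face-symmetry axioms~\eqref{I:AxiomSymType} and~\eqref{I:AxiomFaceSym}, so $\partial_{n,i}^\alpha x\in\Scal^{>n-1}$. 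The same bookkeeping run with $\tilde s$ and Lemma~\ref{L:LemmaInvSymProp}\eqref{I:LemmaFaceInvSym} shows $\epsilon_{n,i}$ maps $\Scal^{>n-1}$ into $\Scal^{i,n+1,\dots}\subseteq\Scal^{>n}$, while closure of $\star_{n,i}$ under $\Scal^{>n}$ follows from $\delta_m^\alpha(x\circ_i y)=x\circ_i y$ for $m>n\geq i$, using Axiom~\eqref{I:DeltaCompoCompat}.

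Among the eight axioms, several transfer almost directly. Associativity and the units reduce to the category axioms of~\ref{SSS:singleSetCategories} once one checks $\epsilon_{n,i}\partial_{n,i}^\alpha=\delta_i^\alpha$, which is the telescoping $\tilde s_i\cdots\tilde s_{n-1}s_{n-1}\cdots s_i\delta_i^\alpha=\delta_i^\alpha$ obtained from Axiom~\eqref{I:AxiomSymInv}; the interchange law is Axiom~\eqref{I:ExchangeLaw} verbatim; and the non-shifted cases of the face-of-composition and degeneracy-of-composition axioms follow from Axioms~\eqref{I:DeltaCompoCompat} and~\eqref{I:AxiomSymCompatCompo} together with Lemmas~\ref{L:LemmaInvSymProp0}\eqref{I:LemmaSymCompo} and~\ref{L:LemmaInvSymProp}\eqref{I:LemmaInvSymCompo}, after handling the $i=j$ cases through the unit computation above.

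The hard part, which I expect to consume most of the work, is the relations carrying an explicit index shift: the face--face commutation $\partial_{n-1,i}^\alpha\partial_{n,j}^\beta=\partial_{n-1,j-1}^\beta\partial_{n,i}^\alpha$ for $i<j$, the $i<j$ cases of the face-of-composition and face-of-degeneracy axioms, and the degeneracy commutation $\epsilon_{n+1,i}\epsilon_{n,j}=\epsilon_{n+1,j+1}\epsilon_{n,i}$ for $i\leq j$. Expanding each side into a string of $s$'s (or $\tilde s$'s) followed by $\delta$'s, one must commute face maps past symmetries and symmetries past one another; this is precisely where the braiding axiom~\eqref{I:AxiomSymBraid}, the Yang--Baxter relation of Lemma~\ref{L:LemmaInvSymProp0}, and the face-symmetry axiom~\eqref{I:AxiomFaceSym} enter, and where the shift $j\mapsto j-1$ emerges because deleting one symmetry from a string shortens the block of symmetries acting below direction $j$. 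I would organise this as a short sequence of rewriting lemmas on such strings, proving each shift relation by induction on $n-j$ and restricting throughout to the relevant fixed-point sets so that the typed axioms apply. The finiteness axiom~\eqref{I:AxiomFiniteDimCells} is needed only to ensure, as in~\ref{SSS:Lattice}, that the sets $\FC{\Scal}_n=\Scal^{>n}$ exhaust $\Scal$.
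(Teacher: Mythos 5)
Your proposal is correct and follows essentially the same route as the paper: expand $\partial_{n,i}^\alpha = s_{n-1}\cdots s_i\delta_i^\alpha$, $\epsilon_{n,i} = \tilde{s}_i \cdots \tilde{s}_{n-1}$ and $\star_{n,i} = \circ_i$, verify the classical axioms by commuting face maps and symmetries through these strings (the index-shifted relations being exactly the laborious cases), and check that restrictions of morphisms preserve the structure, using Lemma~\ref{L:LemmaInvSymPropMorph} for the degeneracies. The only differences are organizational: you make type-correctness of the data an explicit preliminary step and package the telescoping computations as inductions on string length, where the paper performs the same manipulations as direct chains of equalities (main text and Appendix~\ref{A:FC-well-defined}).
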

\begin{proof}
  We need to show that $\FC{(-)}: \SinCat{\omega} \to \CubCat{\omega}$
  sends each category in $\SinCat{\omega}$ to a category in
  $\CubCat{\omega}$ and each morphism in the former category to a
  functor in the latter, and that it is itself a functor.

  First we show that $(\FC{\Scal},\partial,\epsilon,\star)$ is indeed
  a classical $\omega$-category, that is, we check the axioms in
  \refeq{SSS:DefClassCubCat}.  Here we only consider some
  representative examples. Proofs for the remaining axioms shown in
  Appendix~\ref{A:FC-well-defined}.  Suppose $1 \leq i,j \leq n$ and
  $a,b,c,d \in (\FC{\Scal})_n$.
\begin{enumerate}[{\bf (i)}]
\item If $a \star_{n,i} (b \star_{n,i} c)$ is defined, then
  $a \star_{n,i} (b \star_{n,i} c) = a \circ_i (b \circ_i c) = (a
  \circ_i b) \circ_i c = (a \star_{n,i} b) \star_{n,i} c$ because
  $\Delta_i (a, b \star_{n,i} c)$ and $\Delta_i (b, c)$, using
  Axioms~\ref{SSS:SingleSetOneCategories}\eqref{I:AxiomAssoc},
  \eqref{I:AxiomLocality}, \eqref{I:AxiomFunctionality} among
  others. The same holds if $(a \star_{n,i} b) \star_{n,i} c$ is
  defined.
\item The compositions
  $a \star_{n,i} \epsilon_{n,i} \partial_{n,i}^+ a$ and
  $\epsilon_{n,i} \partial_{n,i}^- a \star_{n,i} a$ are defined because
  $\Delta_i(a, \epsilon_{n,i} \partial_{n,i}^+ a)$ and
  $\Delta_i (\epsilon_{n,i} \partial_{n,i}^- a, a)$, using
  Axioms~\ref{SSS:SingleSetOneCategories}\eqref{I:AxiomUnits},
  \eqref{I:AxiomLocality}, \eqref{I:AxiomFunctionality} among
  others. It follows that
  $a \star_{n,i} \epsilon_{n,i} \partial_{n,i}^+ a = a \circ_i
  \delta_i^+ a = a$ and
  $\epsilon_{n,i} \partial_{n,i}^- a \star_{n,i} a = \delta_i^- a
  \circ_i a = a$ because
  $\epsilon_{n,i} \partial_{n,i}^\alpha a = \delta_i^\alpha a$.
\item If $i<j$, then
  Axiom~\ref{SSS:DefCubSingleSetWithoutConn}\eqref{I:AxiomCommutativity}
  and other facts imply that
\[
\begin{array}{rcl}
\partial_{n-1,j-1}^\beta \partial_{n,i}^\alpha & = & s_{n-2} \dots s_{j-1} \delta_{j-1}^\beta s_{n-1} \dots s_i \delta_i^\alpha \\
 & = & s_{n-2} \dots s_{j-1} s_{n-1} \dots s_j \delta_{j-1}^\beta s_{j-1} \dots s_i \delta_i^\alpha \\
 & = & s_{n-2} \dots s_{j-1} s_{n-1} \dots s_j \delta_j^\beta s_{j-2} \dots s_i \delta_i^\alpha \\
 & = & s_{n-2} \dots s_i \delta_i^\alpha s_{n-1} \dots s_j \delta_j^\beta \\
 & = & \partial_{n-1,i}^\alpha \partial_{n,j}^\beta.
\end{array}
\]
\end{enumerate}

Next we show that $\FC{f}$ is a morphism in $\CubCat{\omega}$. Suppose
$1 \leq i \leq n$ and $a,b \in \FC{\Scal}_n$. Then
\begin{enumerate}[{\bf (i)}]
\item $\FC{f}_{n-1} \partial_{n,i}^\alpha = f s_{n-1} \dots s_i \delta_i^\alpha = s_{n-1} \dots s_i \delta_i^\alpha f = \partial_{n,i}^\alpha \FC{f}_n$,
\item if $\partial_{n,i}^+ a = \partial_{n,i}^- b$ then $\Delta_i(a,b)$ and $\FC{f}_n (a \star_{n,i} b) = f (a \circ_i b) = f (a) \circ_i f (b) = \FC{f}_n a \star_{n,i} \FC{f}_n b$,
\item $\FC{f}_n \epsilon_{n,i} = f \tilde{s}_i \dots \tilde{s}_{n-1} = \tilde{s}_i \dots \tilde{s}_{n-1} f = \epsilon_{n,i} \FC{f}_{n-1}$.
\end{enumerate}
Further, it is clear from the definition that $\FC{f}$ preserves
sources, targets and compositions of morphisms in $\CubCat{\omega}$.
\end{proof}

All calculations in this proof are performed within
$\SinCat{\omega}$. Their formalisation with our Isabelle
  component seems therefore routine.

\subsubsection{The functor $\FS{(-)}$}
\label{SSS:FS}
Next, we define the category
$\FS{(\Ccal,\partial,\epsilon,\star)} = (\Scal,\delta,\circ,s)$ in
$\SinCat{\omega}$ for
each cubical $(\Ccal,\partial,\epsilon,\star)$ in $\CubCat{\omega}$ 
\begin{enumerate}[{\bf (i)}]
\item The underlying set is the following colimit in the category $\catego{Set}$:
\begin{equation*}
\Scal = \colim(\Ccal_0 \oto{\epsilon_{1,1}} \Ccal_1 \oto{\epsilon_{2,2}} \Ccal_2 \oto{\epsilon_{3,3}} \dots) = \bigsqcup_{n \in \Nbb} \Ccal_n / \sim,
\end{equation*}
where $a \in \Ccal_m$ and $b \in \Ccal_n$ with $m \leq n$ are
equivalent if and only if
$b = \epsilon_{n,n} \dots \epsilon_{m+1,m+1} a$ by injectivity of the
degeneracy maps. We write $\phi_n: \Ccal_n \to \Scal$ for the maps
forming a cocone to the colimit. They sends each $n$-cell $a$ to its
equivalence class in the above quotient.
\item The $\delta_i^\alpha: \Scal \to \Scal$, for $i \in \Nbb_+$, are
  the unique morphisms in $\catego{Set}$ induced by the cocone
  $(\phi_n \epsilon_{n,i} \partial_{n,i}^\alpha)_{n \geq i}$. They
  send the equivalence class of $a \in \Ccal_n$, $n \geq i$, to the
  set
  $\delta_i^\alpha [a]_\sim = [\epsilon_{n,i} \partial_{n,i}^\alpha
  a]_\sim$.
\item The $s_i: \Scal \to \Scal$, for $i \in\Nbb_+$, are the unique
  morphisms in $\catego{Set}$ induced by the cocone
  $(\phi_n \epsilon_{n,i+1} \partial_{n,i}^-)_{n \geq i+1}$. They send
  the equivalence class of $a \in \Ccal_n$ to
  $s_i [a]_\sim = [\epsilon_{n,i+1} \partial_{n,i}^- a]_\sim$.
\item The $\tilde{s}_i: \Scal \to \Scal$, for $i \in \Nbb_+$, are the
  unique morphisms in $\catego{Set}$ induced by the cocone
  $(\phi_n \epsilon_{n,i} \partial_{n,i+1}^-)_{n \geq i+1}$. They send
  the equivalence class of $a \in \Ccal_n$ to
  $\tilde{s}_i [a]_\sim = [\epsilon_{n,i} \partial_{n,i+1}^- a]_\sim$.
\item The $\circ_i: \Scal \times_{\Delta_i} \Scal \to \Scal$, for
  $i \in \Nbb_+$, send the equivalence classes of $a,b \in \Ccal_n$,
  to $[a]_\sim \circ_i [b]_\sim = [a \star_{n,i} b]_\sim$.
\end{enumerate}

For each functor $g : \Ccal \to \Ccal'$ in
$\CubCat{\omega}$ we define the morphism
$\FS{g}: \FS{\Ccal} \to \FS{\Ccal'}$ in $\SinCat{\omega}$ as the
unique morphism in $\catego{Set}$ induced by the cocone
$(\phi'_n \circ g_n : \Ccal_n \to \FS{\Ccal'})_{n \in \Nbb}$, where
the $\phi'_n$ are the inclusion maps $\Ccal'_n \to \FS{\Ccal'}$. It
sends the equivalence class of $a \in \Ccal_n$ to
$\FS{g}([a]_\sim) = [g(a)]_\sim$ in $\Ccal'_n$.

In the following, we do not distinguish between equivalence classes
and their representatives.

\begin{lemma}
\label{L:FS-well-defined}
The functor $\FS{(-)}$ is well-defined.
\end{lemma}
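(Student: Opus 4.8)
The plan is to follow the same three-part template as the proof of \cref{L:FC-well-defined}: first show that the structure maps $\delta_i^\alpha$, $s_i$, $\tilde{s}_i$ and $\circ_i$ are well defined on the colimit $\Scal$ and assemble into a single-set cubical $\omega$-category; then show that $\FS{g}$ is a morphism in $\SinCat{\omega}$; and finally that $\FS{(-)}$ respects identities and composition. The genuinely new ingredient relative to $\FC{(-)}$ is the colimit, which forces an extra layer of well-definedness.

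First I would verify that each defining family is genuinely a cocone on a cofinal tail of the diagram $\Ccal_0 \oto{\epsilon_{1,1}} \Ccal_1 \oto{\epsilon_{2,2}} \Ccal_2 \oto{\epsilon_{3,3}} \cdots$, so that it factors uniquely through $\Scal$. For $\delta_i^\alpha$ this amounts to checking $\phi_{n+1}\epsilon_{n+1,i}\partial_{n+1,i}^\alpha\epsilon_{n+1,n+1} = \phi_n\epsilon_{n,i}\partial_{n,i}^\alpha$ for $n \geq i$, which I would obtain by pushing $\epsilon_{n+1,n+1}$ past $\partial_{n+1,i}^\alpha$ with the face--degeneracy relation \ref{SSS:DefClassCubCat}(vi) (case $i<n+1$), then commuting the two degeneracies with \ref{SSS:DefClassCubCat}(viii), and finally using $\phi_{n+1}\epsilon_{n+1,n+1}=\phi_n$; the analogous computations for $s_i$ and $\tilde{s}_i$ use the same two relations. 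For the binary map $\circ_i$ there is a second obligation: since it is defined on representatives of a common dimension $n$, I must check that lifting two composable cells along $\epsilon_{n+1,n+1}$ and composing in dimension $n+1$ returns the lift of their composite, which is exactly axiom \ref{SSS:DefClassCubCat}(vii) in the case where the degeneracy index $n+1$ exceeds the direction $i$.

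Next I would check the axioms of \ref{SSS:DefCubSingleSetWithoutConn}. The key bookkeeping fact is that a class $[a]$ with $a\in\Ccal_n$ lies in $\Scal^i$ exactly when its representative is degenerate in direction $i$, i.e. equal to $\epsilon_{n,i}\partial_{n,i}^\alpha a$; in particular, once lifted to a sufficiently high dimension, a class coming from $\Ccal_n$ is degenerate in every direction $>n$, which yields the finite-dimensionality axiom~\ref{I:AxiomFiniteDimCells} directly. That each $(\Scal,\delta_i^-,\delta_i^+,\circ_i)$ is a single-set category translates the classical associativity, unit and pullback (locality) axioms through $\phi_n$, with functionality automatic since each $\star_{n,i}$ is a function; the commutativity axiom~\ref{I:AxiomCommutativity} and the compatibility axioms~\ref{I:DeltaCompoCompat} and~\ref{I:ExchangeLaw} are the images under $\phi_n$ of the classical face-commutativity, face-of-composition and interchange axioms. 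The symmetry axioms~\ref{I:AxiomSymType}--\ref{I:AxiomSymBraid} are obtained by unfolding $s_i$ and $\tilde{s}_i$ as $[\epsilon_{n,i+1}\partial_{n,i}^-]$ and $[\epsilon_{n,i}\partial_{n,i+1}^-]$ and again reducing to the classical degeneracy and face relations. Throughout, because representatives live in varying dimensions, I would systematically lift every cell occurring in a given equation to a common dimension before computing.

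Finally, $\FS{g}$ factors through the colimit because $g$ preserves top degeneracies, and it commutes with $\delta_i^\alpha$, $s_i$ and $\circ_i$ because $g_n$ preserves $\partial$, $\epsilon$ and $\star$; functoriality of $\FS{(-)}$ is then immediate from the representative formula $\FS{g}[a]=[g(a)]$. I expect the main obstacle to be the verification of the symmetry axioms~\ref{I:AxiomFaceSym}--\ref{I:AxiomSymBraid} together with the interchange law: these require translating the classical relations with their characteristic index shifts (such as the $j-1$ in \ref{SSS:DefClassCubCat}\eqref{I:AxiomFaceCompoClass}) into the single-set form, while keeping track of the common-dimension lifting that makes $\circ_i$ and the pullback conditions meaningful. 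This is precisely the step where the mismatch between the doubly-indexed classical maps and the singly-indexed single-set maps must be reconciled.
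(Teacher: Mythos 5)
Your proposal is correct and follows essentially the same route as the paper: verify the single-set axioms of \ref{SSS:DefCubSingleSetWithoutConn} by unfolding $\delta_i^\alpha$, $s_i$, $\tilde{s}_i$ and $\circ_i$ on representatives lifted to a common dimension and invoking the classical cubical relations (including the index-shifted ones), then check that $\FS{g}$ is a morphism and that $\FS{(-)}$ respects identities and composition via the formula $\FS{g}[a]_\sim = [g(a)]_\sim$ and properties of colimits. Your explicit verification of the cocone conditions (compatibility with the top degeneracies $\epsilon_{n+1,n+1}$) is a point the paper absorbs into the definition of $\FS{(-)}$ in \ref{SSS:FS} rather than into the proof of the lemma, but it is the same underlying argument.
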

\begin{proof}
  The proof is similar to that of
  Lemma~\ref{L:FC-well-defined}. First we check that
  $(\Scal,\delta,\circ,s)$ is a category in $\SinCat{\omega}$,  
  verifying the axioms in \refeq{SSS:DefCubSingleSetWithoutConn}. Once
  again we only present some cases and refer to
  Appendix~\ref{A:FS-well-defined} for the remaining ones. 
  Suppose $i,j \in \Nbb_+$ and $w,x,y,z \in \Scal$ with
  representatives in~$\Ccal_n$ for $n \geq i,j$.
\begin{enumerate}[{\bf (i)}]
\item
  $\Delta_i(x, y) \Leftrightarrow \partial_{n,i}^+ x =
  \partial_{n,i}^- y$, hence $\Delta_i(x, y \circ_i z)$ and
  $\Delta_i(y, z)$ if and only $\Delta_i(x \circ_i y, z)$ and
  $\Delta_i(x, y)$. It follows that
  $x \circ_i (y \circ_i z) = x \star_{n,i} (y \star_{n,i} z) = (x
  \star_{n,i} y) \star_{n,i} z = (x \circ_i y) \circ_i z$.
\item $x \circ_i \delta_i^+ x = x \star_{n,i} \epsilon_{n,i} \partial_{n,i}^+ x = x$ and $\delta_i^- x \circ_i x = \epsilon_{n,i} \partial_{n,i}^- x \star_{n,i} x = x$.
\item
	\begin{itemize}
	\item If $i<j$ then
	\[
	\begin{array}{rcl}
	\delta_i^\alpha \delta_j^\beta x & = & \epsilon_{n,i} \partial_{n,i}^\alpha \epsilon_{n,j} \partial_{n,j}^\beta x \\
	 & = & \epsilon_{n,i} \epsilon_{n-1,j-1} \partial_{n-1,i}^\alpha \partial_{n,j}^\beta x \\
	 & = & \epsilon_{n,j} \epsilon_{n-1,i} \partial_{n-1,j-1}^\beta \partial_{n,i}^\alpha x \\
	 & = & \epsilon_{n,j} \partial_{n,j}^\beta \epsilon_{n,i} \partial_{n,i}^\alpha x \\
	 & = & \delta_j^\beta \delta_i^\alpha x,
	\end{array}
	\]
	\item if $i>j$ then
	\[
	\begin{array}{rcl}
	\delta_i^\alpha \delta_j^\beta x & = & \epsilon_{n,i} \partial_{n,i}^\alpha \epsilon_{n,j} \partial_{n,j}^\beta x \\
	 & = & \epsilon_{n,i} \epsilon_{n-1,j} \partial_{n-1,i-1}^\alpha \partial_{n,j}^\beta x \\
	 & = & \epsilon_{n,j} \epsilon_{n-1,i-1} \partial_{n-1,j}^\beta \partial_{n,i}^\alpha x \\
	 & = & \epsilon_{n,j} \partial_{n,j}^\beta \epsilon_{n,i} \partial_{n,i}^\alpha x \\
	 & = & \delta_j^\beta \delta_i^\alpha x.
	\end{array}
	\]
	\end{itemize}
\end{enumerate}

Next we show that $\FS{g}$ is a morphism in $\SinCat{\omega}$. For all
$1 \leq i \leq n$ and $x,y \in \Scal$ with representatives $a,b$ in
$\Ccal_n$,
\begin{enumerate}[{\bf (i)}]
\item $\FS{g} \delta_i^\alpha x = [g \epsilon_{n,i} \partial_{n,i}^\alpha x]_\sim = [\epsilon'_{n,i} {\partial'}_{n,i}^\alpha g x]_\sim = {\delta'}_i^\alpha \FS{g} x$,
\item if $\Delta_i(x,y)$, then $\partial_{n,i}^+ a = \partial_{n,i}^- b$, so $\FS{g} (x \circ_i y) = [g (a \star_{n,i} b)]_\sim = [g(a) \star_{n,i} g(b)]_\sim = \FS{g} (x) \circ_i \FS{g} (y)$,
\item if $n \geq i+1$, then $\FS{g} s_i x = [g \epsilon_{n,i+1} \partial_{n,i}^- x]_\sim = [\epsilon'_{n,i+1} {\partial'}_{n,i}^- g x]_\sim = s'_i \FS{g} x$.
\end{enumerate}
Finally, $\FS{g}$ is indeed a morphism: it is clear from its
definition that it preserves sources and targets, while reservation of
compositions of morphisms in $\SinCat{\omega}$ follows from properties
of colimits.
\end{proof}

This time, the calculations in the proof above are performed in
$\CubCat{\omega}$. Their formalisation with Isabelle is
  beyond the scope of this work.

\subsubsection{Natural isomorphism $\FS{(-)} \circ \FC{(-)} \Rightarrow \id$}
\label{SSS:NaturalIsomorphism1}
Let $(\Scal,\delta,\circ,s)$ be a category of $\SinCat{\omega}$ and
$(\Ccal,\partial,\epsilon,\star) = \FC{(\Scal,\delta,\circ,s)}$ as
previously.  The category
$(\Scal',\delta',\circ',s') := \FS{(\FC{(\Scal,\delta,\circ,s)})}$ in
$\SinCat{\omega}$ is computed as follows. Let $i \in \Nbb_+$.
\begin{enumerate}[{\bf (i)}]
\item
  $\Scal' = \colim (\Ccal_0 \oto{\epsilon_{1,1}} \Ccal_1
  \oto{\epsilon_{2,2}} \dots) = \colim (\Scal^{>0} \oto{\id}
  \Scal^{>1} \oto{\id}
  \dots)$. Axiom~\ref{SSS:DefCubSingleSetWithoutConn}\eqref{I:AxiomFiniteDimCells}
  then implies that
\[
\Scal' \; = \;  \bigsqcup_{n \in \Nbb} \Scal^{>n} / \sim \; = \; \bigcup_{n \in \Nbb} \Scal^{>n} = \Scal.
\]

\item The face maps are ${\delta'}_i^\alpha = \delta_i^\alpha$,
  because they send each $x \in \Scal^{>n}$ with $n \geq i$ to
  ${\delta'}_i^\alpha x = \epsilon_{n,i} \partial_{n,i}^\alpha x =
  \delta_i^\alpha x$. Thus in particular ${\Scal'}^i = \Scal^i$.
\item The symmetries and reverse symmetries are
  $s'_i = s_i \delta_i^-$ and
  $\tilde{s}'_i = \tilde{s}_i \delta_{i+1}^-$, because they send
  $x \in \Scal^{>n}$ with $n \geq i+1$ to
  $s'_i x = \epsilon_{n,i+1} \partial_{n,i}^- x = s_i \delta_i^- x$
  and
  $\tilde{s}'_i x = \epsilon_{n,i} \partial_{n,i+1}^- x = \tilde{s}_i
  \delta_{i+1}^-$. Therefore, $s'_i = s_i$ on $\Scal^i$ and
  $\tilde{s}'_i = \tilde{s}_i$ on $\Scal^{i+1}$.
\item The compositions are $\circ'_i = \circ_i$, because for all
  $x,y \in \Scal^{>n}$ with $n \geq i$ such that $\Delta'_i(x,y)$, we
  have $\Delta_i(x,y)$ and the compositions $\circ'_i$ send $x,y$ to
  $x \circ'_i y = x \star_{n,i} y = x \circ_i y$.
\end{enumerate}
Further, any morphism $f: \Scal \to \mathcal{T}$ in $\SinCat{\omega}$
satisfies $\FS{(\FC{f})} = f$.

\begin{lemma}
\label{L:NatIsoFSFCId}
The maps $\id_{\Scal}: \Scal' \to \Scal$ induce a natural isomorphism
$\mu: \FS{(-)} \circ \FC{(-)} \Rightarrow \id$.
\end{lemma}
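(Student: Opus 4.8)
The plan is to lean on the explicit computation of $\FS{(\FC{(\Scal,\delta,\circ,s)})} = (\Scal',\delta',\circ',s')$ carried out just above, which already does essentially all the work: it identifies the underlying set $\Scal'$ with $\Scal$ (using Axiom~\eqref{I:AxiomFiniteDimCells} to collapse the colimit to a union), and it records that $\delta'^\alpha_i = \delta_i^\alpha$ and $\circ'_i = \circ_i$ hold on the nose, while $s'_i = s_i\delta_i^-$ and $\tilde{s}'_i = \tilde{s}_i\delta_{i+1}^-$ agree with $s_i$ and $\tilde{s}_i$ after restriction to the types $\Scal^i$ and $\Scal^{i+1}$, respectively. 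Given this, the lemma splits into three tasks: showing that each component $\mu_\Scal = \id_\Scal\colon \Scal' \to \Scal$ is a morphism in $\SinCat{\omega}$, that it is invertible there, and that the family is natural.

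First I would verify that $\mu_\Scal$ is a morphism. A morphism of single-set cubical $\omega$-categories must commute with all face maps, preserve compositions wherever defined, and preserve symmetries restricted to their types, i.e.\ satisfy $\mu_\Scal s'_i x = s_i \mu_\Scal x$ for $x \in {\Scal'}^i$. Since the underlying map is the identity and $\delta'^\alpha_i = \delta_i^\alpha$ and $\circ'_i = \circ_i$, the face and composition conditions hold trivially; in particular ${\Scal'}^i = \Scal^i$. For the symmetry condition the key observation is that on $\Scal^i$ one has $\delta_i^- x = x$, whence $s'_i x = s_i\delta_i^- x = s_i x$, so $\id_\Scal$ indeed preserves typed symmetries.

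Next, for invertibility, the inverse of $\mu_\Scal$ is again the identity, now regarded as $\id_\Scal\colon \Scal \to \Scal'$, which is a morphism by the same argument since the two structures coincide on the relevant types in both directions; hence each $\mu_\Scal$ is an isomorphism in $\SinCat{\omega}$. Finally, naturality at a morphism $f\colon \Scal \to \mathcal{T}$ reduces to the square $\mu_{\mathcal{T}} \circ \FS{(\FC{f})} = f \circ \mu_\Scal$. The preceding computation already records $\FS{(\FC{f})} = f$, and both $\mu_\Scal$ and $\mu_{\mathcal{T}}$ are identities, so this collapses to $f = f$ and commutes on the nose.

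The only point demanding any care — and the closest thing to an obstacle — is the mismatch between $s'_i = s_i\delta_i^-$ and $s_i$ as total maps. This is precisely defused by the typing convention for morphisms in $\SinCat{\omega}$: symmetries need only be preserved on $\Scal^i$, where the two maps agree. Once that bookkeeping is settled, there is nothing deeper to prove.
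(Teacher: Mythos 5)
Your proposal is correct and follows essentially the same route as the paper: both proofs lean on the explicit computation of $\FS{(\FC{(-)})}$ preceding the lemma (in particular $\Scal'=\Scal$, $\delta'=\delta$, $\circ'=\circ$, $s'_i = s_i\delta_i^- = s_i$ on $\Scal^i$, and $\FS{(\FC{f})}=f$), verify that the identity map is a morphism in $\SinCat{\omega}$, and then observe that invertibility and naturality are immediate. Your write-up merely spells out the typing point about $s'_i$ versus $s_i$ and the collapse of the naturality square, which the paper leaves implicit.
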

\begin{proof}
  First, $\id {\delta'}_i^\alpha = \delta_i^\alpha \id$ and
  $\Delta_i(x,y)$ implies
  $\id(x \circ'_i y) = \id(x) \circ_i \id(y)$. Second,
  $\id(s'_i x) = s_i \id(x)$ for all $x \in {\Scal'}^i$. So, as
  $\Scal'=\Scal$, $\mu_{\Scal} = \id : \Scal' \to \Scal$ is a morphism
  in $\SinCat{\omega}$.

  Naturality of the family $\mu$ and the invertibility of each
  component $\mu_{\Scal}$ are clear.
\end{proof}

\subsubsection{Natural isomorphism $\id \Rightarrow \FC{(-)} \circ \FS{(-)}$}
\label{SSS:NaturalIsomorphism2}
Let $(\Ccal,\partial,\epsilon,\star)$ be a category in
$\CubCat{\omega}$ and let
$(\Scal,\delta,\circ,s) = \FS{(\Ccal,\partial,\epsilon,\star)}$ as
before. The category
$(\Ccal',\partial',\epsilon',\star') =
\FC{(\FS{(\Ccal,\partial,\epsilon,\star)})}$ in $\CubCat{\omega}$ is
computed as follows. Let $1 \leq i \leq n$.
\begin{enumerate}[{\bf (i)}]
\item The sets of $n$-cells are
\begin{align*}
\Ccal'_n &= \Scal^{>n} = \left( \bigsqcup_{m \in \Nbb} \Ccal_m / \sim \right)^{>n}.
\end{align*}
\item The face maps ${\partial'}_{n,i}^\alpha$ send each
  $a \in \Ccal'_n$, represented by some $a_0 \in \Ccal_m$ with
  $m \geq n$, to
\begin{align*}
{\partial'}_{n,i}^\alpha a
= s_{n-1} \dots s_i \delta_i^\alpha a
= [\epsilon_{m,n} \partial_{m,n-1}^- \dots \epsilon_{m,i+1} \partial_{m,i}^- \epsilon_{m,i} \partial_{m,i}^\alpha a_0]_\sim
= [\epsilon_{m,n} \partial_{m,i}^\alpha a_0]_\sim.
\end{align*}
\item The degeneracies $\epsilon'_{n,i}$ send each $a \in \Ccal'_n$,
  represented by some $a_0 \in \Ccal_m$ with $m \geq n$, to
\begin{align*}
\epsilon'_{n,i} a
= \tilde{s}_i \dots \tilde{s}_{n-1} a
= [\epsilon_{m,i} \partial_{m,i+1}^- \dots \epsilon_{m,n-1} \partial_{m,n}^- a_0]_\sim
= [\epsilon_{m,i} \partial_{m,n}^- a_0]_\sim.
\end{align*}
\item The compositions $a \star'_{n,i} b$, where $a,b \in \Ccal'_n$
  are represented by some $a_0,b_0 \in \Ccal_m$ with $m \geq n$,
  satisfy $a \star'_{n,i} b = [a_0 \star_{m,i} b_0]_\sim$ whenever
  $\Delta_i(a,b)$, that is, when $a_0 \star_{n,i} b_0$ is defined,
  and are undefined otherwise. 
\end{enumerate}
Further, for any morphism $g: \Ccal \to \Dcal$ in $\CubCat{\omega}$,
$\FC{(\FS{g})}$ sends $[a]_\sim$ in $\FC{(\FS{\Ccal})}_n$ to
$[g(a)]_\sim$ in $\FC{(\FS{\Dcal})}_n$.

Suppose $\phi_n: \Ccal_n \to \Scal$ is the morphism to the colimit,
which sends any $n$-cell $a$ to its equivalence class in
$\bigsqcup_{m \in \Nbb} \Ccal_m / \sim$. Its image is included in
$\Ccal'_n$ because
$a \sim \epsilon_{i,i} \dots \epsilon_{n+1,n+1} a = a'$ for each
$a \in \Ccal_n$ and $i \geq n+1$. Thus
$\delta_i^- \phi_n a = [\epsilon_{i,i} \partial_{i,i}^- a']_\sim =
\phi_n a$. Let $(\eta_{\Ccal})_n: \Ccal_n \to \Ccal'_n$ be the induced
map and $\eta_{\Ccal}: \Ccal \to \Ccal'$ the family
$\left( (\eta_{\Ccal})_n \right)_{n \in \Nbb}$. Let further
$(\overline{\eta}_{\Ccal})_n: \Ccal'_n \to \Ccal_n$ be the map that
sends $b$, represented by some $a \in \Ccal_m$ with $m \geq n$, to
$\partial_{n+1,n+1}^- \dots \partial_{m,m}^- a$. It is well-defined
because the image of $b$ does not depend on the choice of the
representative: indeed $a \sim a'$ and $a' \in \Ccal_l$ with
$l \geq m$ imply
\begin{align*}
\partial_{n+1,n+1}^- \dots \partial_{l,l}^- a'
= \partial_{n+1,n+1}^- \dots \partial_{l,l}^- \epsilon_{l,l} \dots \epsilon_{m+1,m+1} a
= \partial_{n+1,n+1}^- \dots \partial_{m,m}^- a.
\end{align*}
Finally, we write $\overline{\eta}_{\Ccal}: \Ccal' \to \Ccal$ for the
family $\left( (\overline{\eta}_{\Ccal})_n \right)_{n \in \Nbb}$.

\begin{lemma}
\label{L:NatIsoIdFCFS}
The maps $\eta_{\Ccal}: \Ccal \to \Ccal'$ induce a natural isomorphism $\eta: \id \Rightarrow \FC{(-)} \circ \FS{(-)}$.
\end{lemma}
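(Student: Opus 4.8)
The plan is to establish three facts: that each $\eta_{\Ccal}$ is a functor in $\CubCat{\omega}$, that every component $(\eta_{\Ccal})_n$ is a bijection with inverse $(\overline{\eta}_{\Ccal})_n$, and that the family $\eta$ is natural. Functoriality and naturality are routine unfoldings of the formulas collected in \eqref{SSS:NaturalIsomorphism2}; the substance of the argument is bijectivity, and the only real obstacle is verifying that $\overline{\eta}_{\Ccal}$ is a genuine inverse.

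First I would check that $\eta_{\Ccal}$ preserves faces, degeneracies and compositions. Since $(\eta_{\Ccal})_n a = [a]_\sim$ and each $a \in \Ccal_n$ is its own representative of minimal dimension, the formulas for ${\partial'}_{n,i}^\alpha$, $\epsilon'_{n,i}$ and $\star'_{n,i}$ collapse. For faces, ${\partial'}_{n,i}^\alpha [a]_\sim = [\epsilon_{n,n}\partial_{n,i}^\alpha a]_\sim = [\partial_{n,i}^\alpha a]_\sim = (\eta_{\Ccal})_{n-1}\partial_{n,i}^\alpha a$; for degeneracies, the analogous collapse of $\epsilon'_{n,i}=\tilde{s}_i\dots\tilde{s}_{n-1}$ on the class of an $(n-1)$-cell gives $\epsilon'_{n,i}[a]_\sim = [\epsilon_{n,i}a]_\sim = (\eta_{\Ccal})_n\epsilon_{n,i}a$; and compositions match because $\Delta_i([a]_\sim,[b]_\sim)$ is exactly $\star_{n,i}$-composability of $a,b$, whence $[a]_\sim \star'_{n,i} [b]_\sim = [a\star_{n,i}b]_\sim$.

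The core of the proof is bijectivity of $(\eta_{\Ccal})_n$. Injectivity is immediate, since $[a]_\sim=[a']_\sim$ for $a,a'\in\Ccal_n$ forces $a=a'$ ($\sim$ restricts to equality within a fixed dimension). For the inverse, $(\overline{\eta}_{\Ccal})_n\circ(\eta_{\Ccal})_n=\id$ holds because the face product $\partial_{n+1,n+1}^-\dots\partial_{m,m}^-$ is empty on the representative $a\in\Ccal_n$. The nontrivial direction is $(\eta_{\Ccal})_n\circ(\overline{\eta}_{\Ccal})_n=\id$: for $b=[a]_\sim\in\Ccal'_n=\Scal^{>n}$ with $a\in\Ccal_m$, $m\geq n$, I must show $a = \epsilon_{m,m}\dots\epsilon_{n+1,n+1}\partial_{n+1,n+1}^-\dots\partial_{m,m}^- a$. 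The key observation is that membership $[a]_\sim\in\Scal^{>n}$ unfolds, through the definition of $\delta_i^\alpha$ on $\FS{\Ccal}$, into the concrete equalities $\epsilon_{m,i}\partial_{m,i}^\alpha a = a$ in $\Ccal_m$ for all $n<i\leq m$. I would then induct on $m\geq n$: the case $m=n$ is trivial, and for $m>n$ the equality at $i=m$ gives $a=\epsilon_{m,m}a_{m-1}$ with $a_{m-1}:=\partial_{m,m}^- a$, so $[a_{m-1}]_\sim=[a]_\sim\in\Scal^{>n}$; applying the induction hypothesis to $a_{m-1}\in\Ccal_{m-1}$ and using $\partial_{n+1,n+1}^-\dots\partial_{m-1,m-1}^- a_{m-1}=\partial_{n+1,n+1}^-\dots\partial_{m,m}^- a$ yields the claim. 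Here Axiom~\eqref{I:AxiomFiniteDimCells} is essential: it ensures every class of $\Scal$ has a finite-dimensional representative, so the recursion terminates. Because $\eta_{\Ccal}$ is a structure-preserving functor with bijective components, its componentwise inverse $\overline{\eta}_{\Ccal}$ is automatically a functor, so $\eta_{\Ccal}$ is an isomorphism in $\CubCat{\omega}$.

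Finally, naturality is immediate: for any functor $g:\Ccal\to\Dcal$, both $(\eta_{\Dcal})_n\circ g_n$ and $\FC{(\FS{g})}_n\circ(\eta_{\Ccal})_n$ send $a\in\Ccal_n$ to $[g(a)]_\sim$. Thus the main obstacle is the inductive identity in the surjectivity step, while everything else reduces to unwinding the constructions of \eqref{SSS:NaturalIsomorphism2} together with the classical cubical face and degeneracy relations of \eqref{SSS:DefClassCubCat}.
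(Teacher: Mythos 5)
Your proof is correct and follows the same skeleton as the paper's: check that $\eta_{\Ccal}$ is a morphism in $\CubCat{\omega}$, show it is invertible with inverse $\overline{\eta}_{\Ccal}$, and observe naturality. Two points differ in substance, both to your credit. First, where the paper simply asserts $\eta_{\Ccal}(\overline{\eta}_{\Ccal}(b)) = [\partial_{n+1,n+1}^- \dots \partial_{m,m}^- a]_\sim = b$, you actually prove it: you unfold $[a]_\sim \in \Scal^{>n}$ into the concrete degeneracy equations $\epsilon_{m,i}\partial_{m,i}^- a = a$ in $\Ccal_m$ for $n < i \leq m$ (using that $\sim$ is equality within a fixed dimension) and then induct on $m$; this fills in the one genuinely nontrivial step that the paper leaves implicit. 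Second, you avoid the paper's separate (and only sketched) verification that $\overline{\eta}_{\Ccal}$ preserves faces, degeneracies and compositions, by invoking the general fact that a functor of cubical $\omega$-categories whose components are bijections has a componentwise inverse that is automatically a functor (composability is reflected because faces are preserved and the components are injective); this is a clean shortcut. One small inaccuracy: your claim that Axiom~\eqref{I:AxiomFiniteDimCells} is ``essential'' for the termination of your induction is misplaced. Every class in $\Scal = \bigsqcup_{m}\Ccal_m/\sim$ has a representative in some finite-dimensional $\Ccal_m$ by the very construction of the colimit, so the induction on $m \geq n$ terminates for free; indeed the paper's remark after Theorem~\ref{T:EquivSinSetClass} points out that Axiom~\eqref{I:AxiomFiniteDimCells} is used only for the other natural isomorphism $\mu$, not for $\eta$. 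This does not affect the validity of your argument.
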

\begin{proof}
  We need to show that the maps
  $\eta_{\Ccal}: \Ccal \leftrightarrows \Ccal'
  :\overline{\eta}_{\Ccal}$ are morphisms in $\CubCat{\omega}$, which
  are natural and inverses of each other. Let $a,b \in \Ccal_n$ and
  $c \in \Ccal_{n-1}$. Then
\begin{align*}
(\eta_{\Ccal})_{n-1} \partial_{n,i}^\alpha a
&= [\partial_{n,i}^\alpha a]_\sim
= [\epsilon_{n,n} \partial_{n,i}^\alpha a]_\sim
= {\partial'}_{n,i}^\alpha [a]_\sim
= {\partial'}_{n,i}^\alpha (\eta_{\Ccal})_n a, \\
(\eta_{\Ccal})_n (a \star_{n,i} b)
&= [a \star_{n,i} b]_\sim
= [a]_\sim \star'_{n,i} [b]_\sim
= (\eta_{\Ccal})_n a \star'_{n,i} (\eta_{\Ccal})_n b, \\
(\eta_{\Ccal})_n \epsilon_{n,i} c
&= [\epsilon_{n,i} c]_\sim
= [\epsilon_{n,i} \partial_{n,n}^- \epsilon_{n,n} c]_\sim
= \epsilon'_{n,i} [\epsilon_{n,n} c]_\sim
= \epsilon'_{n,i} [c]_\sim
= \epsilon'_{n,i} (\eta_{\Ccal})_{n-1} c.
\end{align*}
The $\eta_{\Ccal}$ are therefore morphisms in $\CubCat{\omega}$. The
proof that the $\overline{\eta}_{\Ccal}$ are morphisms in
$\CubCat{\omega}$ is similar.

The maps $\eta_{\Ccal}$ and $\overline{\eta}_{\Ccal}$ are
inverses. Indeed,
$\eta_{\Ccal} (\overline{\eta}_{\Ccal} (b)) = [\partial_{n+1,n+1}^-
\dots \partial_{m,m}^- a]_\sim = b$ holds for every~$b \in \Ccal'_n$
represented by some $a \in \Ccal_m$ with $m \geq n$, and likewise for
the other composition.

Finally, the family $\eta$ is natural because
$\eta_{\Ccal} g (a) = [g (a)]_\sim = \FC{(\FS{g})} [a]_\sim = g
(\overline{\eta}_{\Ccal} a)$ for every functor $g: \Ccal \to \Dcal$.
\end{proof}

\begin{remark}
  All axioms of single-set $\omega$-categories have been used in one
  direction of the proof of Theorem~\ref{T:EquivSinSetClass}, and they
  have been derived in the other one. Single-set 
  $\omega$-categories and their classical counterparts are therefore
  essentially the same.
  
  Interestingly, in this proof, Axiom~\ref{SSS:DefCubSingleSetWithoutConn}\eqref{I:AxiomFiniteDimCells}
  has only been used to establish the natural ismorphism $\mu$, more
  precisely for showing that $\Scal=\Scal'$ in the colimit
  construction. It is unnecessary for a proof of equivalence between
  $\SinCat{n}$ and $\CubCat{n}$, where the colimit construction
  simplifies; see also \ref{SS:SingleSetCubicalCategories}. The proofs
  of all other properties work uniformly for $n$ and $\omega$.
\end{remark}

\subsection{Equivalence for connections}
\label{SS:EquivalenceWithCubicalCategoriesWithConnections}

In this subsection we extend the equivalence
$\SinCat{\omega}\simeq \CubCat{\omega}$ to connections.
\begin{theorem}
\label{T:EquivSinSetClassConn}
There is an equivalence of categories
\[
\begin{tikzcd}[global scale = 10 and 4 and 1 and 1]
\FCG{(-)}  :  \SinCatG{\omega} \ar[r, shift left] & \CubCatG{\omega}  :  \FSG{(-)} \ar[l, shift left].
\end{tikzcd}
\]
\end{theorem}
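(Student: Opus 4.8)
The plan is to build on the equivalence $\SinCat{\omega}\simeq\CubCat{\omega}$ of Theorem~\ref{T:EquivSinSetClass} by extending the functors $\FC{(-)}$ and $\FS{(-)}$ of~\eqref{SSS:FC} and~\eqref{SSS:FS} to the connection structure, and by checking that the natural isomorphisms $\mu$ and $\eta$ of~\eqref{SSS:NaturalIsomorphism1} and~\eqref{SSS:NaturalIsomorphism2} remain natural isomorphisms once connections are accounted for. First I would define $\FCG{(-)}$ on objects: the image $\FC{\Scal}$ of a single-set $\omega$-category with connections $(\Scal,\delta,\circ,s,\gamma)$ already carries face, degeneracy and composition maps, so I equip it with classical connection maps $\Gamma_{n,i}^\alpha:\FC{\Scal}_{n-1}\to\FC{\Scal}_n$ for $1\leq i<n$, defined by analogy with $\partial_{n,i}^\alpha=s_{n-1}\dots s_i\delta_i^\alpha$ and $\epsilon_{n,i}=\tilde{s}_i\dots\tilde{s}_{n-1}$ as a conjugate of the single-set connection $\gamma_i^\alpha$ by the appropriate strings of (reverse) symmetries, chosen so that $\Gamma_{n,i}^\alpha$ has the correct type $\Scal^{>n-1}\to\Scal^{>n}$. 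Dually, for $\FSG{(-)}$ I would define the single-set connection $\gamma_i^\alpha$ on the colimit $\Scal=\colim(\Ccal_n)$ by the cocone $[a]_\sim\mapsto[\Gamma_{n+1,i}^\alpha a]_\sim$, mirroring the definition of $s_i$ from $(\phi_n\epsilon_{n,i+1}\partial_{n,i}^-)_{n\geq i+1}$; here the connection--degeneracy axiom (fourth item of Definition~\ref{SSS:DefClassCubCatConn}) ensures compatibility with the colimit maps, since it gives $\Gamma_{n+2,i}^\alpha\epsilon_{n+1,n+1}=\epsilon_{n+2,n+2}\Gamma_{n+1,i}^\alpha$, so the induced $\gamma_i^\alpha$ is well defined.

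The bulk of the work is the verification of axioms, which runs entirely in parallel to Lemmas~\ref{L:FC-well-defined} and~\ref{L:FS-well-defined}. For $\FCG{(-)}$ I would check that the $\Gamma_{n,i}^\alpha$ satisfy the classical connection axioms of Definition~\ref{SSS:DefClassCubCatConn}, translating each into a single-set identity and discharging it using the single-set connection axioms of Definition~\ref{SSS:DefCubSingleSetWithConn}, the derived facts of Lemma~\ref{L:LemmaInvConnProp0}, and the properties of symmetries from Lemmas~\ref{L:LemmaInvSymProp0} and~\ref{L:LemmaInvSymProp}. For $\FSG{(-)}$ I would symmetrically verify the single-set connection axioms of Definition~\ref{SSS:DefCubSingleSetWithConn} from the classical ones. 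In both directions, functoriality and preservation of connections by $\FCG{f}$ and $\FSG{g}$ follow the pattern already used for $\delta,\epsilon,s$: the relevant maps are restrictions of $f$, respectively maps induced on the colimit by $g$, so the functor conditions $f\gamma_i^\alpha x={\gamma'}_i^\alpha fx$ and $F_n\Gamma_{n,i}^\alpha=\Gamma_{n,i}^\alpha F_{n-1}$ transfer directly.

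For the natural isomorphisms I would reuse $\mu$ and $\eta$ unchanged on underlying sets and only check that each component is now a morphism in $\SinCatG{\omega}$, respectively $\CubCatG{\omega}$, i.e.\ that it additionally preserves connections. In the computation of $\FSG{(\FCG{(\Scal)})}$ this amounts to recovering ${\gamma'}_i^\alpha=\gamma_i^\alpha$ on $\Scal^i$, exactly as ${\delta'}_i^\alpha=\delta_i^\alpha$ and $s'_i=s_i$ were recovered in~\eqref{SSS:NaturalIsomorphism1}; in the computation of $\FCG{(\FSG{(\Ccal)})}$ it amounts to recovering ${\Gamma'}_{n,i}^\alpha$ on $\Ccal'_n$, after which the maps $\eta_{\Ccal}$ and $\overline{\eta}_{\Ccal}$ commute with connections by the same equivalence-class calculations as in Lemma~\ref{L:NatIsoIdFCFS}. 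Since naturality and invertibility of $\mu$ and $\eta$ were already established in Lemmas~\ref{L:NatIsoFSFCId} and~\ref{L:NatIsoIdFCFS} and do not involve connections, no further work is needed there.

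I expect the main obstacle to be the definition of $\Gamma_{n,i}^\alpha$ and the verification of the corner axiom (second item of Definition~\ref{SSS:DefClassCubCatConn}), whose right-hand sides are intricate composites mixing $\circ_i$, $\circ_{i+1}$, degeneracies and both connection signs. Getting the symmetry conjugation in the definition exactly right --- so that the index shifts in the classical connection--face axiom~\ref{SSS:DefClassCubCatConn}\eqref{I:AxiomFaceConnClass} match the single-set relations, in particular Axiom~\ref{SSS:DefCubSingleSetWithConn}\eqref{I:AxiomConnShift} and Lemma~\ref{L:LemmaInvConnProp0}\eqref{I:LemmaConnSym} --- is the delicate point. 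Once the conjugation is fixed, the individual axiom checks are long but routine single-set calculations of the kind already automated in Isabelle.
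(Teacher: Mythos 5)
Your high-level plan is the paper's own: extend $\FC{(-)}$ and $\FS{(-)}$ to connections, verify the two families of connection axioms against each other, and reuse $\mu$ and $\eta$ after checking that their components also preserve connections. However, your definition of the single-set connections in the $\FSG{(-)}$ direction is wrong, and the axiom verification you defer would fail with it. You set $\gamma_i^\alpha [a]_\sim = [\Gamma_{n+1,i}^\alpha a]_\sim$ for $a \in \Ccal_n$. This map is indeed compatible with the colimit (your check via $\Gamma_{n+2,i}^\alpha \epsilon_{n+1,n+1} = \epsilon_{n+2,n+2} \Gamma_{n+1,i}^\alpha$ is correct), but it is the wrong map: connections only matter on $\Scal^i$, and every representative $a \in \Ccal_n$ with $n \geq i$ of a cell $x \in \Scal^i$ satisfies $a = \epsilon_{n,i} \partial_{n,i}^- a$, so the very connection--degeneracy axiom you invoke (now in its case $i = j$) gives
\[
\Gamma_{n+1,i}^\alpha a \;=\; \Gamma_{n+1,i}^\alpha \epsilon_{n,i} \partial_{n,i}^- a \;=\; \epsilon_{n+1,i} \epsilon_{n,i} \partial_{n,i}^- a \;=\; \epsilon_{n+1,i} a .
\]
Thus your $\gamma_i^\alpha x$ is a degenerate cell, and Axiom~\ref{SSS:DefCubSingleSetWithConn}\eqref{I:AxiomConnFaces} already fails: one computes $\delta_i^\alpha \gamma_i^\alpha x = [\epsilon_{n+1,i} a]_\sim$, and equating this with $x = [a]_\sim$ would force $\epsilon_{n+1,i} a = \epsilon_{n+1,n+1} a$, which is false in general (take $a = \epsilon_{2,1} b$ for a non-degenerate $1$-cell $b$: then $\epsilon_{3,1} a = \epsilon_{3,1}\epsilon_{2,1} b$ whereas $\epsilon_{3,3} a = \epsilon_{3,1}\epsilon_{2,2} b$, and $\epsilon_{2,1} b \neq \epsilon_{2,2} b$ since $b$ is not degenerate). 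The correct definition, used in \ref{SSS:FSG}, strips the degeneracy before applying the classical connection: $\gamma_i^\alpha [a]_\sim = [\Gamma_{n,i}^\alpha \partial_{n,i}^\alpha a]_\sim$, i.e.\ the cocone $(\phi_n \Gamma_{n,i}^\alpha \partial_{n,i}^\alpha)_{n > i}$. This is what genuinely mirrors $s_i$, which is induced by $\phi_n \epsilon_{n,i+1} \partial_{n,i}^-$ and likewise applies a face map first.

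A smaller issue on the other side: for $\FCG{(-)}$ you never fix the formula for $\Gamma_{n,i}^\alpha$, describing it as a conjugate of $\gamma_i^\alpha$ by strings of symmetries and flagging it as the delicate point. No conjugation is involved: the paper simply takes $\Gamma_{n,i}^\alpha = \gamma_i^\alpha \tilde{s}_i \dots \tilde{s}_{n-1} = \gamma_i^\alpha \epsilon_{n,i}$, that is, embed the $(n-1)$-cell as an $i$-degenerate $n$-cell and then apply the single-set connection --- the exact mirror image of the corrected $\FSG{(-)}$ formula above. With these two definitions in place, the remainder of your outline (axiom checks parallel to Lemmas~\ref{L:FCG-well-defined} and~\ref{L:FSG-well-defined}, preservation of connections by $\FCG{f}$ and $\FSG{g}$, and the observation that $\mu$ and $\eta$ need only the additional connection-preservation check, recovering ${\gamma'}_i^\alpha = \gamma_i^\alpha$ on $\Scal^i$ and ${\Gamma'}_{n,i}^\alpha$ on ${\Ccal'}_n$) coincides with the paper's proof.
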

In the proof, we go through the same steps as before.

\subsubsection{The functor $\FCG{(-)}$}
\label{SSS:FCG}
For every category $(\Scal,\delta,\circ,\gamma)$ in $\SinCatG{\omega}$,
the category $(\Ccal,\partial,\epsilon,\star,\Gamma) :=
\FCG{(\Scal,\delta,\circ,\gamma)}$ in $\CubCatG{\omega}$ is defined as follows:
\begin{enumerate}[{\bf (i)}]
\item the underlying $\omega$-category in $\CubCat{\omega}$ is $(\Ccal,\partial,\epsilon,\star) = \FC{(\Scal,\delta,\circ,s)}$,
\item the connections are the restrictions
  $\Gamma_{n,i}^\alpha: \Ccal_{n-1} \to \Ccal_n$ of
  $\Gamma_{n,i}^\alpha = \gamma_i^\alpha \tilde{s}_i \dots
  \tilde{s}_{n-1}$ for $1 \leq i < n$.
\end{enumerate}
For each morphism $f: \Scal \to \Scal'$ in $\SinCatG{\omega}$ we
define the morphisms $\FCG{f}: \FCG{\Scal} \to \FCG{\Scal'}$ as
$\FC{f}$ on $n$-cells in $\CubCatG{\omega}$ for each $n \in \Nbb$.

\begin{lemma}
\label{L:FCG-well-defined}
The functor $\FCG{(-)}$ is well-defined.
\end{lemma}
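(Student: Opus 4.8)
The plan is to build on \cref{L:FC-well-defined}. Since $\FCG{(-)}$ sends $(\Scal,\delta,\circ,\gamma)$ to the classical $\omega$-category $\FC{(\Scal,\delta,\circ,s)}$ equipped with the additional maps $\Gamma_{n,i}^\alpha = \gamma_i^\alpha\,\tilde{s}_i\cdots\tilde{s}_{n-1}$, and since the underlying structure $(\FC{\Scal},\partial,\epsilon,\star)$ is already known to be a well-defined object of $\CubCat{\omega}$, only three things remain. First, I would check that the $\Gamma_{n,i}^\alpha$ satisfy the classical connection axioms~\ref{SSS:DefClassCubCatConn}\eqref{I:AxiomFaceConnClass}--(v). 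Second, I would verify that $\FCG{f}=\FC{f}$ is a functor in $\CubCatG{\omega}$, i.e.\ that it additionally preserves the connection maps. Third, functoriality of $\FCG{(-)}$ itself (preservation of identities and composites) is inherited verbatim from that of $\FC{(-)}$, since $\FCG{f}$ and $\FC{f}$ agree as underlying maps, so nothing new is needed there.

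For the connection axioms, the method is uniform: expand the defining formulas for $\Gamma_{n,i}^\alpha$, $\partial_{n,i}^\alpha = s_{n-1}\cdots s_i\,\delta_i^\alpha$ and $\epsilon_{n,i} = \tilde{s}_i\cdots\tilde{s}_{n-1}$, and rewrite each classical identity as an identity between words in $\delta,\circ,s,\tilde{s},\gamma$ on the appropriate fixed-point set $\Scal^i$. Each such word identity should then follow from the single-set connection axioms~\ref{SSS:DefCubSingleSetWithConn}\eqref{I:AxiomConnFaces}--\eqref{I:AxiomConnShift} together with the structural properties of symmetries and connections already recorded in \cref{L:LemmaInvSymProp0,L:LemmaInvSymProp,L:LemmaInvConnProp0}. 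Concretely, the face axiom~\ref{SSS:DefClassCubCatConn}\eqref{I:AxiomFaceConnClass} reduces, case by case, to the single-set face relations in~\ref{SSS:DefCubSingleSetWithConn}\eqref{I:AxiomConnFaces}; the zigzag axiom to~\eqref{I:AxiomConnZigzag}; the braiding axiom to~\eqref{I:AxiomConnBraid}; and the degeneracy- and composition-compatibility axioms to~\eqref{I:AxiomConnCompoCorner} together with \cref{L:LemmaInvConnProp0}. The various index shifts present in the classical axioms (the $j-1$ appearing when $i>j+1$, and the $n\pm1$ dimension bookkeeping) are exactly what is produced when the chains $s_{n-1}\cdots s_i$ and $\tilde{s}_i\cdots\tilde{s}_{n-1}$ are commuted past the relevant face and connection maps, using the braiding relations.

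For the morphism part, preservation of connections is direct: for $x\in\Scal^{>n-1}$ one computes
\[
\FCG{f}_n\,\Gamma_{n,i}^\alpha x
= f\,\gamma_i^\alpha\,\tilde{s}_i\cdots\tilde{s}_{n-1}\,x
= {\gamma'}_i^\alpha\,\tilde{s}'_i\cdots\tilde{s}'_{n-1}\,f x
= {\Gamma'}_{n,i}^\alpha\,\FCG{f}_{n-1}\,x,
\]
where the middle step uses that $f$ preserves connections by definition of a morphism in $\SinCatG{\omega}$ and preserves typed reverse symmetries by \cref{L:LemmaInvSymPropMorph}.

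I expect the face-of-connection axiom~\ref{SSS:DefClassCubCatConn}\eqref{I:AxiomFaceConnClass} and the composition-compatibility axiom to be the main obstacle. Their classical forms split into several cases, including the delicate corner cases $i=j$ and $i=j+1$, whose right-hand sides mix degeneracies, symmetries and both signs of connection, and verifying these requires carefully transporting the reverse-symmetry chain $\tilde{s}_i\cdots\tilde{s}_{n-1}$ across $\gamma_i^\alpha$ and the face maps. The shift axiom~\ref{SSS:DefCubSingleSetWithConn}\eqref{I:AxiomConnShift} relating connections in neighbouring directions, and the commutation relations of \cref{L:LemmaInvConnProp0}\eqref{I:LemmaConnSym}, are precisely the tools needed to realign indices in these cases; the remaining axioms are comparatively routine rewriting once the word identities are set up.
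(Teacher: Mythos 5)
Your proposal is correct and follows essentially the same route as the paper: reduce to Lemma~\ref{L:FC-well-defined} for the underlying cubical $\omega$-category, verify the classical connection axioms for $\Gamma_{n,i}^\alpha=\gamma_i^\alpha\tilde{s}_i\cdots\tilde{s}_{n-1}$ by expanding into words in $\delta,s,\tilde{s},\gamma$ and invoking the single-set axioms~\ref{SSS:DefCubSingleSetWithConn} (with the appendix-style case analysis), and check that $\FCG{f}$ preserves connections via preservation of $\gamma_i^\alpha$ and of typed reverse symmetries. The paper's proof is exactly this decomposition, including the same one-line computation for the morphism part.
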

\begin{proof}
  As before, we first check that
  $\FCG{(\Scal, \partial, \epsilon, \star, \Gamma)}$ defines a
 category in $\CubCatG{\omega}$. We only need to consider the
  connection axioms. We show selected axioms only and refer to
  Appendix~\ref{A:FCG-well-defined} for the others. Let
  $1 \leq i, j < n$ and $a, b, c, d \in \FCG{\Scal}_n$.
\begin{enumerate}[{\bf (i)}]
\item
	\begin{itemize}
	\item If $i < j$, then
          Axiom~\ref{SSS:DefCubSingleSetWithConn}\eqref{I:AxiomConnFaces}
          and others imply that
	\begin{align*}
	\partial_{n,i}^\alpha \Gamma_{n,j}^\beta
	&= s_{n-1} \dots s_j s_{j-1} \gamma_j^\beta s_{j-2} \dots s_i \tilde{s}_j \dots \tilde{s}_{n-1} \delta_i^\alpha \\
	&= s_{n-1} \dots s_{j+1} \gamma_{j-1}^\beta s_j s_{j-2} \dots s_i \tilde{s}_j \dots \tilde{s}_{n-1} \delta_i^\alpha \\
	&= \gamma_{j-1}^\beta s_{j-2} \dots s_i \delta_i^\alpha \\
	&= \Gamma_{n-1,j-1}^\beta \partial_{n-1,i}^\alpha,
	\end{align*}
	\item $\partial_{n,i}^\alpha \Gamma_{n,i}^\alpha = s_{n-1} \dots s_i \delta_i^\alpha \gamma_i^\alpha \tilde{s}_i \dots \tilde{s}_{n-1} = \id$.
	\end{itemize}
      \item If $a, b$ are $\star_{n,j}$-composable, then
        Axiom~\ref{SSS:DefCubSingleSetWithConn}\eqref{I:AxiomConnCompoCorner}
        and others imply that,
	\begin{itemize}
	\item if $i < j$ then
	\begin{align*}
	\Gamma_{n+1,i}^\alpha (a \star_{n,j} b)
	&= \gamma_i^\alpha \tilde{s}_i \dots \tilde{s}_j (\tilde{s}_{j+1} \dots \tilde{s}_n a \circ_j \tilde{s}_{j+1} \dots \tilde{s}_n b) \\
	&= \gamma_i^\alpha (\tilde{s}_i \dots \tilde{s}_n a \circ_{j+1} \tilde{s}_i \dots \tilde{s}_n b) \\
	&= \Gamma_{n+1,i}^\alpha a \star_{n+1,j+1} \Gamma_{n+1,i}^\alpha b,
	\end{align*}
	\item if $i = j$ then
	\begin{align*}
	\Gamma_{n+1,i}^- (a \star_{n,i} b)
	&= \gamma_i^- (\tilde{s}_i \dots \tilde{s}_n a \circ_{i+1} \tilde{s}_i \dots \tilde{s}_n b) \\
	&= (\gamma_i^- \tilde{s}_i \dots \tilde{s}_n a \circ_i s_i \tilde{s}_i \dots \tilde{s}_n b) \circ_{i+1} (\tilde{s}_i \dots \tilde{s}_n b \circ_i \gamma_i^- \tilde{s}_i \dots \tilde{s}_n b) \\
	&= (\Gamma_{n+1,i}^- a \star_{n+1,i} \epsilon_{n+1,i+1} b) \star_{n+1,i+1} (\epsilon_{n+1,i} b \star_{n+1,i} \Gamma_{n+1,i}^- b).
	\end{align*}
	\end{itemize}
\setcounter{enumi}{3}
\item If $i < j$ then $\Gamma_{n+1,i}^\alpha \epsilon_{n,j}
	= \gamma_i^\alpha \tilde{s}_i \dots \tilde{s}_{j-1} \tilde{s}_{j+1} \dots \tilde{s}_n \tilde{s}_j \dots \tilde{s}_{n-1} 
	= \tilde{s}_{j+1} \dots \tilde{s}_n \gamma_i^\alpha \tilde{s}_i \dots \tilde{s}_{n-1} 
= \epsilon_{n+1,j+1} \Gamma_{n,i}^\alpha$.

\end{enumerate}

It remains to be shown that $\FCG{f}$ is a morphism in
$\CubCatG{\omega}$. Indeed, for each $1 \leq i \leq n$ and
$a,b \in \FCG{\Scal}_n$,
\begin{align*}
\FCG{f} \Gamma_{n,i}^\alpha a
\ = \ f \gamma_i^\alpha \tilde{s}_i \dots \tilde{s}_{n-1} a
\ = \ \gamma_i^\alpha \tilde{s}_i \dots \tilde{s}_{n-1} f a
\ = \ \Gamma_{n,i}^\alpha \FCG{f} a.
\end{align*}
The claim then follows from Lemma~\ref{L:FC-well-defined}.
\end{proof}

\subsubsection{The functor $\FSG{(-)}$}
\label{SSS:FSG}
For $(\Ccal,\partial,\epsilon,\star,\Gamma)$ in
$\CubCatG{\omega}$, the category
$\FSG{(\Ccal,\partial,\epsilon,\star,\Gamma)} :=
(\Scal,\delta,\odot,s,\gamma)$ in $\SinCatG{\omega}$ is defined as
follows:
\begin{enumerate}[{\bf (i)}]
\item The underlying single-set cubical $\omega$-category in $\SinCat{\omega}$ is $\FS{(\Ccal,\partial,\epsilon,\star)} = (\Scal,\delta,\odot,s)$.
\item The $\gamma_i^\alpha: \Scal \to \Scal$, for
  $i \geq 1$, are the unique morphisms in $\catego{Set}$ induced by
  the cocone
  $(\phi_n \Gamma_{n,i}^\alpha \partial_{n,i}^\alpha)_{n > i}$. They
  send the equivalence class of each $a \in \Ccal_n$ to
  $\gamma_i^\alpha [a]_\sim = [\Gamma_{n,i}^\alpha
  \partial_{n,i}^\alpha a]_\sim$.
\end{enumerate}
For each morphism $g: \Ccal \to \Ccal'$ in $\CubCat{\omega}$, the
morphism $\FSG{g}: \FSG{\Ccal} \to \FSG{\Ccal'}$ in $\SinCatG{\omega}$
is defined as $\FS{g}$.

\begin{lemma}
\label{L:FSG-well-defined}
The functor $\FSG{(-)}$ is well-defined.
\end{lemma}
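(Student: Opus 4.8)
The underlying single-set cubical $\omega$-category $(\Scal,\delta,\odot,s) = \FS{(\Ccal,\partial,\epsilon,\star)}$ is already a well-defined object of $\SinCat{\omega}$ by Lemma~\ref{L:FS-well-defined}, and $\FSG{g} = \FS{g}$ is a morphism of the underlying categories by the same lemma. Hence the plan is to check only the data specific to connections, mirroring the proof of Lemma~\ref{L:FCG-well-defined} but translating in the opposite direction: from the classical connection axioms of~\ref{SSS:DefClassCubCatConn} to the single-set connection axioms of~\ref{SSS:DefCubSingleSetWithConn}. First I would confirm that each $\gamma_i^\alpha$ is well-defined on equivalence classes, that is, that $(\phi_n \Gamma_{n,i}^\alpha \partial_{n,i}^\alpha)_{n > i}$ is genuinely a cocone over the defining diagram. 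This amounts to the identity $\phi_{n+1} \Gamma_{n+1,i}^\alpha \partial_{n+1,i}^\alpha \epsilon_{n+1,n+1} = \phi_n \Gamma_{n,i}^\alpha \partial_{n,i}^\alpha$, which follows by pushing $\partial_{n+1,i}^\alpha$ and then $\Gamma_{n+1,i}^\alpha$ past $\epsilon_{n+1,n+1}$ using the classical compatibilities of faces and of connections with degeneracies (Axiom~\ref{SSS:DefClassCubCat}(vi) and Axiom~\ref{SSS:DefClassCubCatConn}(iv), both in the range $i < n+1$), together with the cocone property $\phi_{n+1}\epsilon_{n+1,n+1} = \phi_n$.

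Second, and this is the bulk of the argument, I would verify the six connection axioms of~\ref{SSS:DefCubSingleSetWithConn} by representing each cell as $[a]_\sim$ with $a \in \Ccal_n$, unfolding the definitions of $\delta$, $s$, $\circ$ and $\gamma$, and reducing to a classical identity. Since the $\gamma_i^\alpha$ restrict to type $\Scal^i \to \Scal$, a cell to which $\gamma_j^\alpha$ is applied lies in $\Scal^j$, so I may choose the representative $a$ degenerate in direction $j$, whence $\partial_{n,j}^\alpha a$ is independent of $\alpha$. For a representative case, the first equation of Axiom~\eqref{I:AxiomConnFaces} becomes
\begin{align*}
\delta_j^\alpha \gamma_j^\alpha [a]_\sim
&= [\epsilon_{n,j}\partial_{n,j}^\alpha \Gamma_{n,j}^\alpha \partial_{n,j}^\alpha a]_\sim
= [\epsilon_{n,j}\partial_{n,j}^\alpha a]_\sim
= \delta_j^\alpha [a]_\sim = [a]_\sim,
\end{align*}
using the classical identity $\partial_{n,j}^\alpha \Gamma_{n,j}^\alpha = \id$ from~\ref{SSS:DefClassCubCatConn}\eqref{I:AxiomFaceConnClass} and the fact that $[a]_\sim$ is a fixed point of $\delta_j^\alpha$. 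The remaining cases of~\eqref{I:AxiomConnFaces}, the stability axiom~\eqref{I:AxiomConnStab}, the zigzag axiom~\eqref{I:AxiomConnZigzag}, the braiding axiom~\eqref{I:AxiomConnBraid} and the shift axiom~\eqref{I:AxiomConnShift} reduce in the same way to the corresponding cases of~\ref{SSS:DefClassCubCatConn}, with the index bookkeeping of the degeneracy--connection composites doing the decisive work.

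Third, I would check that $\FSG{g} = \FS{g}$ preserves connections: for a classical functor $g: \Ccal \to \Ccal'$ and $x = [a]_\sim$ with $a \in \Ccal_n$,
\begin{align*}
\FSG{g}\,\gamma_i^\alpha [a]_\sim
&= [g(\Gamma_{n,i}^\alpha \partial_{n,i}^\alpha a)]_\sim
= [\Gamma_{n,i}^\alpha \partial_{n,i}^\alpha g(a)]_\sim
= {\gamma'}_i^\alpha \FSG{g}[a]_\sim,
\end{align*}
which is immediate from the classical functor laws $g_n \Gamma_{n,i}^\alpha = \Gamma_{n,i}^\alpha g_{n-1}$ and $g_{n-1}\partial_{n,i}^\alpha = \partial_{n,i}^\alpha g_n$. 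Functoriality of $\FSG{(-)}$ then reduces to that of $\FS{(-)}$, already recorded in Lemma~\ref{L:FS-well-defined}, since the two agree on objects' underlying structure and on morphisms.

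I expect the corner axiom~\eqref{I:AxiomConnCompoCorner} and the shift axiom~\eqref{I:AxiomConnShift} to be the main obstacle. The former must be matched against the two asymmetric corner cases ($i=j$, $\alpha = \pm$) of the classical axiom~\ref{SSS:DefClassCubCatConn}(ii), whose right-hand sides mix connections, degeneracies and both compositions $\star_{n+1,i}$ and $\star_{n+1,i+1}$; the translation requires threading these through the colimit representatives while tracking the $j \mapsto j\pm 1$ index shifts between the classical and single-set conventions (and suppressing the superfluous dimension index). The shift axiom similarly forces a careful interplay between symmetries, their reverses and connections across adjacent directions, whereas the zigzag axiom is analogous but lighter.
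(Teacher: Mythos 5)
Your proposal is correct and follows essentially the same route as the paper: it reduces the underlying structure to Lemma~\ref{L:FS-well-defined}, verifies the single-set connection axioms of~\ref{SSS:DefCubSingleSetWithConn} on colimit representatives by unfolding $\gamma_i^\alpha [a]_\sim = [\Gamma_{n,i}^\alpha \partial_{n,i}^\alpha a]_\sim$ and invoking the classical axioms of~\ref{SSS:DefClassCubCatConn} (your sample computation of $\delta_j^\alpha\gamma_j^\alpha x = x$ is exactly the paper's), and checks preservation of connections by $\FSG{g}$ with the same two-line calculation. Your explicit verification that $(\phi_n \Gamma_{n,i}^\alpha \partial_{n,i}^\alpha)_{n>i}$ is a cocone is a sound addition that the paper leaves implicit in its appeal to the universal property of the colimit.
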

\begin{proof}
  As usual, we start with checking that
  $(\Scal,\delta,\odot,s,\gamma)$ is a category in $\SinCatG{\omega}$
  It remains to consider the single-set axioms for connections. As
  usual, we only show some cases and refer to
  Appendix~\ref{A:FSG-well-defined} for the remaining ones. Let
  $i,j \in \Nbb_+$ and $x,y \in \Scal$ with representatives $a,b$ in
  $\Ccal_n$ with $n > i,j$.
\begin{enumerate}[{\bf (i)}]
\item If $i \neq j,j+1$ and $x \in \Scal^j$, then $\partial_{n,j}^- x= \partial_{n,j}^+ x$ so
	$\delta_j^\alpha \gamma_j^\alpha x
	= \epsilon_{n,j} \partial_{n,j}^\alpha \Gamma_{n,j}^\alpha \partial_{n,j}^\alpha x
	= \delta_j^\alpha x
	= x$.
\item If $j \neq i,i+1$, $x,y \in \Scal^i$ and $\Delta_{i+1}(x,y)$ then
	\begin{align*}
	\gamma_i^+ (x \circ_{i+1} y)
	&= \Gamma_{n,i}^+ (\partial_{n,i}^+ x \star_{n,i} \partial_{n,i}^+ y) \\
	&= (\Gamma_{n,i}^+ \partial_{n,i}^+ x \star_{n+1,i+1} \epsilon_{n,i+1} \partial_{n,i}^+ x) \star_{n+1,i} (\epsilon_{n,i} \partial_{n,i}^+ x \star_{n+1,i+1} \Gamma_{n,i}^+ \partial_{n,i}^+ y) \\
	&= (\gamma_i^+ x \circ_{i+1} s_i x) \circ_i (x \circ_{i+1} \gamma_i^+ y).
	\end{align*}
\setcounter{enumi}{4}
\item If $x \in \Scal^{i,j}$ and $i < j-1$ then
	\[
	\gamma_i^\alpha \gamma_j^\beta x
	= \Gamma_{n,i}^\alpha \Gamma_{n-1,j-1}^\beta \partial_{n-1,i}^\alpha \partial_{n,j}^\beta x
	= \Gamma_{n,j}^\beta \Gamma_{n-1,i}^\alpha \partial_{n-1,j-1}^\beta \partial_{n,i}^\alpha x
	= \gamma_j^\beta \gamma_i^\alpha x.
	\]
\item If $x \in \Scal^{i,i+1}$ then $\partial_{n,i}^- a = \partial_{n,i}^+ a$ and $\partial_{n,i+1}^- a = \partial_{n,i+1}^+ a$ so
\begin{align*}
s_{i+1} s_i \gamma_{i+1}^\alpha x
= \epsilon_{n,i+2} \Gamma_{n-1,i}^\alpha \partial_{n-1,i}^- \partial_{n,i+1}^\alpha x
= \Gamma_{n,i}^\alpha \epsilon_{n-1,i+1} \partial_{n-1,i}^\alpha \partial_{n,i+1}^- x
= \gamma_i^\alpha s_{i+1} x.
\end{align*}
\end{enumerate}
It remains to show that $\FSG{g}$ is a morphism in
$\SinCatG{\omega}$. For $i \geq 1$ and $x \in \Scal^i$,
\begin{align*}
\FSG{g} \gamma_i^\alpha x
= [g \Gamma_{n,i}^\alpha \partial_{n,i}^\alpha x]_\sim
= [\Gamma_{n,i}^\alpha \partial_{n,i}^\alpha g x]_\sim
= \gamma_i^\alpha \FSG{g} x.
\end{align*}
The claim then follows from Lemma~\ref{L:FS-well-defined}. 
\end{proof}

\subsubsection{Natural isomorphism $\FSG{(-)} \circ \FCG{(-)} \Rightarrow \id$}
Let $(\Scal,\delta,\circ,s,\gamma)$ be a category in
$\SinCatG{\omega}$ and let
$(\Ccal,\partial,\epsilon,\star,\Gamma) =
\FCG{(\Scal,\delta,\circ,s,\gamma)}$. The category
$(\Scal',\delta',\circ',s',\gamma') :=
\FSG{(\FCG{(\Scal,\delta,\circ,s,\gamma)})}$ in $\SinCatG{\omega}$ is
determined as follows. For $i \in \Nbb_+$,
\begin{enumerate}[{\bf (i)}]
\item the underlying $\omega$-category in $\SinCat{\omega}$ is
  $(\Scal',\delta',\circ',s') = \FS{(\FC{(\Scal,\delta,\circ,s)})}$
\item the connections are
  ${\gamma'}_i^\alpha = \gamma_i^\alpha \delta_i^\alpha$; they send
  $x \in \Scal^{>n}$ with $n>i$ to
  ${\gamma'}_i^\alpha x = \Gamma_{n,i}^\alpha \partial_{n,i}^\alpha x
  = \gamma_i^\alpha \delta_i^\alpha x$. Hence in particular
  ${\gamma'}_i^\alpha = \gamma_i^\alpha$ on $\Scal^i$.
\end{enumerate}
Also, for each morphism $f: \Scal \to \mathcal{T}$ in
$\SinCatG{\omega}$, $\FSG{(\FCG{f})}= \FS{(\FC{f})}$, so that
$\FSG{(\FCG{f})} = f$.

\begin{lemma}
\label{L:NatIsoFSGFCGId}
The maps $\mu_{\Scal}: \Scal' \to \Scal$ induce a natural isomorphism $\mu: \FSG{(-)} \circ \FCG{(-)} \Rightarrow \id$.
\end{lemma}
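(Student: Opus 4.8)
The plan is to reduce this lemma to its connection-free counterpart, Lemma~\ref{L:NatIsoFSFCId}. By the computation preceding the statement, the underlying single-set cubical $\omega$-category of $\FSG{(\FCG{(\Scal,\delta,\circ,s,\gamma)})}$ is exactly $\FS{(\FC{(\Scal,\delta,\circ,s)})}$, so Lemma~\ref{L:NatIsoFSFCId} already supplies an isomorphism $\mu_{\Scal} = \id: \Scal' \to \Scal$ of the underlying $\omega$-categories, with $\Scal' = \Scal$ as sets and ${\delta'}_i^\alpha = \delta_i^\alpha$, $\circ'_i = \circ_i$, $s'_i = s_i$ on the relevant fixed-point sets. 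Thus the only genuinely new content is to upgrade this to a morphism, and hence an isomorphism, in $\SinCatG{\omega}$ rather than merely in $\SinCat{\omega}$.

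First I would verify that $\mu_{\Scal}$ preserves connections, which is the single extra clause in the definition of a morphism of $\omega$-categories with connections. Concretely, for each $i \in \Nbb_+$ and each $x \in {\Scal'}^i = \Scal^i$ I must check $\mu_{\Scal}({\gamma'}_i^\alpha x) = \gamma_i^\alpha \mu_{\Scal}(x)$. Since $\mu_{\Scal} = \id$, this reduces to ${\gamma'}_i^\alpha x = \gamma_i^\alpha x$ on $\Scal^i$, which is precisely the identity ${\gamma'}_i^\alpha = \gamma_i^\alpha \delta_i^\alpha = \gamma_i^\alpha$ established on $\Scal^i$ in the preceding paragraph. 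Combined with the fact that $\mu_{\Scal}$ is already a morphism of the underlying $\omega$-categories, this shows that $\mu_{\Scal}$ is a morphism in $\SinCatG{\omega}$.

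It then remains to assemble the natural isomorphism. Invertibility of each component $\mu_{\Scal}$ holds because it is the identity on $\Scal = \Scal'$, whose inverse again preserves connections by the same computation. Naturality follows from the identity $\FSG{(\FCG{f})} = \FS{(\FC{f})} = f$ recorded before the statement: for any morphism $f: \Scal \to \mathcal{T}$ in $\SinCatG{\omega}$ the relevant naturality square collapses, exactly as in the proof of Lemma~\ref{L:NatIsoFSFCId}. I do not expect any real obstacle here, since all the structural work has already been carried out in the connection-free case; the only point requiring care is confirming that the restriction of ${\gamma'}_i^\alpha$ to $\Scal^i$ agrees with $\gamma_i^\alpha$, and this is immediate from the formula ${\gamma'}_i^\alpha = \gamma_i^\alpha$ on $\Scal^i$.
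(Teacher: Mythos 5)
Your proposal is correct and follows essentially the same route as the paper: reduce to Lemma~\ref{L:NatIsoFSFCId} for the underlying $\omega$-category structure, then observe that $\mu_{\Scal} = \id$ preserves connections because ${\gamma'}_i^\alpha = \gamma_i^\alpha \delta_i^\alpha$ agrees with $\gamma_i^\alpha$ on ${\Scal'}^i = \Scal^i$. The paper's proof is exactly this one-line check, with naturality and invertibility inherited from the connection-free case just as you argue.
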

\begin{proof}
  Relative to Lemma~\ref{L:NatIsoFSFCId} it remains to show that
  $\mu_{\Scal}$ is a morphism in $\SinCatG{\omega}$. Indeed,
  $\id {\gamma'}_i^\alpha x = \gamma_i^\alpha \id x$ for all
  $x \in {\Scal'}^i$.
\end{proof}

\subsubsection{Natural isomorphism $\id \Rightarrow \FCG{(-)} \circ \FSG{(-)}$}
Let $(\Ccal,\partial,\epsilon,\star,\Gamma)$ be a category in
$\CubCatG{\omega}$ and let
$(\Scal,\delta,\circ,s,\gamma) :=
\FSG{(\Ccal,\partial,\epsilon,\star,\Gamma)}$ as previously.  In this
case, the category
$(\Ccal',\partial',\epsilon',\star',\Gamma') :=
\FCG{(\FSG{(\Ccal,\partial,\epsilon,\star,\Gamma)})}$ in
$\CubCatG{\omega}$ is determined as follows. For $1 \leq i < n$,
\begin{enumerate}[{\bf (i)}]
\item the underlying $\omega$-category in $\CubCat{\omega}$ is
  $(\Ccal',\partial',\epsilon',\star') =
  \FC{(\FS{(\Ccal,\partial,\epsilon,\star)})}$.
\item the connections are ${\Gamma'}_{n,i}^\alpha$; they send
  $a \in \Ccal'_n$ represented by some $a_0 \in \Ccal_m$ with
  $m \geq n$ to
\[
{\Gamma'}_{n,i}^\alpha a
= \gamma_i^\alpha \tilde{s}_i \dots \tilde{s}_{n-1} a
= [\Gamma_{m,i}^\alpha \partial_{m,i}^\alpha \epsilon_{m,i} \partial_{m,i+1}^- \dots \epsilon_{m,n-1} \partial_{m,n}^- a_0]
= [\Gamma_{m,i}^\alpha \partial_{m,n}^- a_0].
\]
\end{enumerate}
Moreover, for each morphism $g: \Ccal \to \Dcal$ in $\CubCat{\omega}$,
$\FCG{(\FSG{g})}= \FC{(\FS{g})}$ sends $[a]_\sim$ in
$\FCG{(\FSG{\Ccal})}_n$ to $[g(a)]$ in $\FCG{(\FSG{\Dcal})}_n$.

\begin{lemma}
\label{L:NatIsoIdFCGFSG}
The maps $\eta_{\Ccal}: \Ccal \to \Ccal'$ induce a natural isomorphism $\eta: \id \Rightarrow \FCG{(-)} \circ \FSG{(-)}$.
\end{lemma}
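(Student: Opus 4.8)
The plan is to reduce everything to the connection-free case already settled in \cref{L:NatIsoIdFCFS}. By construction, the underlying cubical $\omega$-category of $\Ccal' = \FCG{(\FSG{\Ccal})}$ is exactly $\FC{(\FS{\Ccal})}$, the underlying cubical $\omega$-category of $\Ccal$ equipped with the data described just above. Hence \cref{L:NatIsoIdFCFS} already provides that the maps $\eta_{\Ccal}$ and $\overline{\eta}_{\Ccal}$ are mutually inverse bijections preserving all face, degeneracy and composition maps, and that the family $\eta$ is natural with respect to this underlying structure. The only new point to verify is that $\eta_{\Ccal}$ (and consequently $\overline{\eta}_{\Ccal}$) also respects the connection maps.

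Concretely, I would check that $(\eta_{\Ccal})_n \Gamma_{n,i}^\alpha = {\Gamma'}_{n,i}^\alpha (\eta_{\Ccal})_{n-1}$ for all $1 \leq i < n$. For $c \in \Ccal_{n-1}$ the left-hand side is $(\eta_{\Ccal})_n \Gamma_{n,i}^\alpha c = [\Gamma_{n,i}^\alpha c]_\sim$ by the definition of $\eta_{\Ccal}$. For the right-hand side, $(\eta_{\Ccal})_{n-1} c = [c]_\sim = [\epsilon_{n,n} c]_\sim$, which is represented by the $n$-cell $\epsilon_{n,n} c$. Applying the explicit formula for ${\Gamma'}_{n,i}^\alpha$ established in the construction of $\Ccal'$ above, with representative dimension $m = n$, yields $[\Gamma_{n,i}^\alpha \partial_{n,n}^- \epsilon_{n,n} c]_\sim = [\Gamma_{n,i}^\alpha c]_\sim$, where the last step uses the cubical identity $\partial_{n,n}^- \epsilon_{n,n} = \id$ from \cref{SSS:DefClassCubCat}. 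The two sides therefore coincide, so $\eta_{\Ccal}$ preserves connections; since it is a bijection with inverse $\overline{\eta}_{\Ccal}$, the relation $\Gamma_{n,i}^\alpha \overline{\eta}_{\Ccal} = \overline{\eta}_{\Ccal} {\Gamma'}_{n,i}^\alpha$ follows automatically, so $\overline{\eta}_{\Ccal}$ preserves connections as well.

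With this single compatibility in hand, $\eta_{\Ccal}$ and $\overline{\eta}_{\Ccal}$ are mutually inverse functors in $\CubCatG{\omega}$, and naturality of $\eta$ over $\CubCatG{\omega}$ is inherited from its naturality over $\CubCat{\omega}$, since a morphism in $\CubCatG{\omega}$ is by definition a morphism in $\CubCat{\omega}$ that additionally commutes with connections and the naturality squares are unchanged. I do not expect any genuine obstacle: the argument is the connection-free isomorphism of \cref{L:NatIsoIdFCFS} together with one short calculation. The only step demanding a little care is the bookkeeping in that calculation, where one must first raise the dimension of the chosen representative via $\epsilon_{n,n}$ so that the closed form for ${\Gamma'}_{n,i}^\alpha$ applies, and then collapse it again using $\partial_{n,n}^- \epsilon_{n,n} = \id$.
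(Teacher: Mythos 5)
Your proposal is correct and takes essentially the same approach as the paper: reduce to \cref{L:NatIsoIdFCFS} and check compatibility with connections, computing ${\Gamma'}_{n,i}^\alpha$ on the representative $\epsilon_{n,n}c$ and collapsing with $\partial_{n,n}^-\epsilon_{n,n}=\id$. The only (harmless) difference is that the paper verifies by a second direct computation that $\overline{\eta}_{\Ccal}$ also preserves connections, whereas you deduce this formally from the fact that the inverse of a bijective connection-preserving morphism preserves connections.
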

\begin{proof}
  We must show, relative to Lemma~\ref{L:NatIsoIdFCFS}, that
  $\eta_{\Ccal}$ and $\overline{\eta}_{\Ccal}$ are morphisms in
  $\CubCatG{\omega}$. Indeed, for all $1 \leq i < n$,
  $a \in \Ccal_{n-1}$ and $b \in {\Ccal'}_{n-1}$ represented by some
  $b_0 \in \Ccal_{m-1}$,
\begin{align*}
{\Gamma'}_{n,i}^\alpha \eta_{\Ccal} a
&= {\Gamma'}_{n,i}^\alpha [\epsilon_{n,n} a]
= [\Gamma_{n,i}^\alpha a]
= \eta_{\Ccal} \Gamma_{n,i}^\alpha a,\\
\overline{\eta}_{\Ccal} {\Gamma'}_{n,i}^\alpha b
&= \partial_{n+1,n+1}^- \dots \partial_{m,m}^- \Gamma_{m,i}^\alpha \partial_{m,n}^- b_0
= \Gamma_{n,i}^\alpha \partial_{n,n}^- \dots \partial_{m,m}^- b_0
= \Gamma_{n,i}^\alpha \overline{\eta}_{\Ccal} b.\qedhere
\end{align*}
\end{proof}

\subsection{Equivalence for inverses}
\label{SS:EquivalenceWithCubicalCategoriesWithInverses}

Finally, we extend the equivalence $\SinCatG{\omega} \simeq
\CubCatG{\omega}$ to the case $(\omega,p)$. 
\begin{theorem}
\label{T:EquivSinSetClassConnInv}
The categories $\SinCatG{(\omega,p)}$ and $\CubCatG{(\omega,p)}$ are equivalent.
\end{theorem}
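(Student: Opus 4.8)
The plan is to exhibit this equivalence as a restriction of the equivalence $\SinCatG{\omega}\simeq\CubCatG{\omega}$ established in Theorem~\ref{T:EquivSinSetClassConn}. First I would observe that $\SinCatG{(\omega,p)}$ is a full subcategory of $\SinCatG{\omega}$ and, likewise, $\CubCatG{(\omega,p)}$ is a full subcategory of $\CubCatG{\omega}$: in both cases the objects are those of the ambient category satisfying an additional closure property --- every cell of dimension above $p$ with an invertible shell is itself invertible --- while the morphisms are exactly the morphisms of the underlying $\omega$-categories with connections. Consequently it suffices to show that the functors $\FCG{(-)}$ and $\FSG{(-)}$ send $(\omega,p)$-objects to $(\omega,p)$-objects; the natural isomorphisms $\mu$ and $\eta$ of Lemmas~\ref{L:NatIsoFSGFCGId} and~\ref{L:NatIsoIdFCGFSG} then restrict automatically, since each of their components is a morphism of the ambient category between two objects of the respective subcategory, hence a morphism there by fullness.

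The core of the argument is to match the two notions of invertibility under the functors. For a cell $x\in\FC{\Scal}_n=\Scal^{>n}$, I would first record that $\epsilon_{n,i}\partial_{n,i}^\alpha x=\delta_i^\alpha x$ (the simplification already used in~\ref{SSS:NaturalIsomorphism1}) and that $\star_{n,i}$ agrees with $\circ_i$ whenever defined. Writing out the defining equations then shows that $x$ is $R_{n,i}$-invertible in $\FCG{\Scal}$ if and only if it is $r_i$-invertible in $\Scal$, the required inverse lying again in $\Scal^{>n}$ by Lemma~\ref{L:SingleSetInverseFaceStab}. The dual computation, using the colimit description of $\FSG{\Ccal}$ and the identity $[a]_\sim \circ_i [b]_\sim = [a\star_{n,i} b]_\sim$, gives the corresponding equivalence for $\FSG{(-)}$.

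The main obstacle is reconciling the shell conditions, which are phrased with an index shift on the classical side that is absent on the single-set side: a single-set cell has an $r_i$-invertible $(n-1)$-shell when all $\delta_j^\alpha x$ with $j\neq i$ are $r_i$-invertible, whereas a classical $n$-cell has an $R_{n-1,i}$-invertible shell when $\partial_{n,j}^\alpha a$ is $R_{n-1,i-1}$-invertible for $j<i$ and $R_{n-1,i}$-invertible for $j>i$. Under $\FCG{(-)}$ one has $\partial_{n,j}^\alpha x=s_{n-1}\dots s_j\delta_j^\alpha x$, so I would establish, as a lemma, that conjugation by the symmetry strings $s_{n-1}\dots s_j$ and $\tilde{s}_j\dots\tilde{s}_{n-1}$ converts $r_i$-invertibility of $\delta_j^\alpha x$ into $r_{i-1}$-invertibility of $\partial_{n,j}^\alpha x$ when $j<i$ and into $r_i$-invertibility when $j>i$. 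This is precisely the index bookkeeping already performed in the proof of Proposition~\ref{P:OmegaZeroInv}, where the abbreviation $z=\tilde{s}_j\dots\tilde{s}_{n-1}y$ together with Lemma~\ref{L:LemmaInvSymProp}\eqref{I:LemmaInvSymCompo} transports an inverse of $s_{n-1}\dots s_j\delta_j^\alpha x$ back to one of $\delta_j^\alpha x$; I would reuse that computation to match the two shell conditions term by term. With the invertibility and shell correspondences in hand, the closure property transfers in both directions, so $\FCG{\Scal}$ is a classical $(\omega,p)$-category and $\FSG{\Ccal}$ a single-set one.

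Putting these together, $\FCG{(-)}$ and $\FSG{(-)}$ restrict to functors between $\SinCatG{(\omega,p)}$ and $\CubCatG{(\omega,p)}$, and the restricted $\mu$ and $\eta$ witness the equivalence. I expect the invertibility matching at a fixed dimension to be routine given the earlier lemmas, with the shell index-shift being the only genuinely delicate point --- though it is largely a repackaging of the calculation already carried out for Proposition~\ref{P:OmegaZeroInv}.
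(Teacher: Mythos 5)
Your proposal is correct and follows essentially the same route as the paper: both arguments show that $\FCG{(-)}$ and $\FSG{(-)}$ preserve the $(\omega,p)$-closure property by translating invertibility across the equivalence (with the inverse staying in $\Scal^{>n}$ via Lemma~\ref{L:SingleSetInverseFaceStab}) and by reconciling the index-shifted classical shell condition with the single-set one through conjugation by the symmetry strings $s_{n-1}\dots s_j$ and $\tilde{s}_j\dots\tilde{s}_{n-1}$, exactly as in the computations of Proposition~\ref{P:OmegaZeroInv}. The paper likewise treats morphisms and the natural isomorphisms as restricting automatically, since morphisms of $(\omega,p)$-categories are by definition just morphisms of the underlying categories with connections.
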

\begin{proof}
  More specifically, we show that the functors
  $\FCG{(-)}: \SinCatG{\omega} \simeq \CubCatG{\omega} :\FSG{(-)}$
  from Theorem \ref{T:EquivSinSetClassConn} extend to
  $\SinCatG{(\omega,p)} \simeq \CubCatG{(\omega,p)}$.  First, suppose
  $\Scal$ is a category in $\SinCatG{(\omega,p)}$. Let
  $\Ccal = \FCG{\Scal}$ and, for all $n > p$ and $1 \leq i \leq n$,
  pick an $n$-cell $c$ in $\Ccal_n$ with an $R_{n-1,i}$-invertible
  shell.
\begin{itemize}
\item Then $\partial_{n,j}^\alpha c$ has an $R_{n-1,i-1}$-inverse, $d$
  say, for each $1 \leq j < i$, by the hypothesis. Therefore
\begin{align*}
\partial_{n,j}^\alpha c \star_{n-1,i-1} d &= \epsilon_{n-1,i-1} \partial_{n-1,i-1}^- \partial_{n,j}^\alpha c, \\
s_{n-1} \dots s_j \delta_j^\alpha c \circ_{i-1} d &= \delta_{i-1}^- s_{n-1} \dots s_j \delta_j^\alpha c, \\
\delta_j^\alpha c \circ_i e &= \delta_i^- \delta_j^\alpha c,
\end{align*}
where we write $e = \tilde{s}_j \dots \tilde{s}_{n-1} d$ for
short. Similarly we can show that
$e \circ_i \delta_j^\alpha c = \delta_i^+ \delta_j^\alpha c$,
$\Delta_i(\delta_j^\alpha c, e)$ and $\Delta_i(e, \delta_j^\alpha
c)$. Hence $e$ is the $r_i$-inverse of $\delta_j^\alpha c$.
\item Alternatively, the hypothesis implies
  that $\partial_{n,j}^\alpha c$ has an $R_{n-1,i}$-inverse, $d$ say, for each $i < j \leq n$.
  So
\begin{align*}
\partial_{n,j}^\alpha c \star_{n-1,i} d &= \epsilon_{n-1,i} \partial_{n-1,i}^- \partial_{n,j}^\alpha c, \\
s_{n-1} \dots s_j \delta_j^\alpha c \circ_i d &= \delta_i^- s_{n-1} \dots s_j \delta_j^\alpha c, \\
\delta_j^\alpha c \circ_i e &= \delta_i^- \delta_j^\alpha c,
\end{align*}
where we abbreviate $e = \tilde{s}_j \dots \tilde{s}_{n-1} d$. Again
we can prove that
$e \circ_i \delta_j^\alpha c = \delta_i^+ \delta_j^\alpha c$,
$\Delta_i(\delta_j^\alpha c, e)$ and $\Delta_i(e, \delta_j^\alpha c)$,
and $e$ is the $r_i$-inverse of $\delta_j^\alpha c$.
\end{itemize}
This shows that $c$ has an $r_i$-invertible $(n-1)$-shell and hence
the $r_i$-inverse $r_i c \in \Scal^{>n}$ by
Lemma~\ref{L:SingleSetInverseFaceStab}. It satisfies
$\Delta_i(c, r_i c)$, $\Delta_i(r_i c, c)$,
$c \circ_i r_i c = \delta_i^- c$ and
$r_i c \circ_i c = \delta_i^+ c$. It thus follows that
$r_i c \in \Ccal_n$, and that
\begin{align*}
c \star_{n,i} r_i c
&= c \circ_i r_i c
= \delta_i^- c
= \tilde{s}_i \dots \tilde{s}_{n-1} s_{n-1} \dots s_i \delta_i^- c
= \epsilon_{n,i} \partial_{n,i}^- c, \\
r_i c \star_{n,i} c
&= r_i c \circ_i c
= \delta_i^+ c
= \tilde{s}_i \dots \tilde{s}_{n-1} s_{n-1} \dots s_i \delta_i^+ c
= \epsilon_{n,i} \partial_{n,i}^+ c.
\end{align*}
Therefore, $c$ is $R_{n,i}$-invertible in $\Ccal$ and $\Ccal$ is a
cubical $(\omega,p)$-category.

Second, suppose $\Ccal$ is a classical $(\omega,p)$-category. Let
$\Scal = \FSG{\Ccal}$ and, for all $n > p$ and $i \geq 1$, pick a cell
$s$ in $\Scal^{>n}$ with an $r_i$-invertible $(n-1)$-shell. For each
$j \geq 1$ with $i \neq j$, $\delta_j^\alpha s$ then has an
$r_i$-inverse $t = r_i \delta_j^\alpha s$.  Pick representatives $c$
and $d$ in $\Ccal_n$ of $s$ and $t$, respectively, which is possible
because $s \in \Scal^{>n}$.  Then
\begin{align*}
  \delta_j^\alpha s \circ_i t = \delta_i^- \delta_j^\alpha s =
  \delta_i^- \delta_j^\alpha s \quad\text{ and }\quad
  \epsilon_{n,j} \partial_{n,j}^\alpha c \star_{n,i} d = \epsilon_{n,j} \partial_{n,j}^\alpha \epsilon_{n,i} \partial_{n,i}^- c.
\end{align*}
\begin{itemize}
\item If $i < j$, then
  $ \partial_{n,j}^\alpha c \star_{n-1,i} e = \partial_{n,j}^\alpha
  \epsilon_{n,i} \partial_{n,i}^- c = \epsilon_{n-1,i}
  \partial_{n-1,i}^- \partial_{n,j}^\alpha c$, where we have
  abbreviated $e = \partial_{n,j}^\alpha d$. Similarly we can show
  that
  $e \star_{n-1,i} \partial_{n,j}^\alpha c = \epsilon_{n-1,i}
  \partial_{n-1,i}^+ \partial_{n,j}^\alpha c$. Thus $e$ is the
  $R_{n-1,i}$-inverse of $\partial_{n,j}^\alpha c$.
\item Alternatively, if $i > j$, then
  $ \partial_{n,j}^\alpha c \star_{n-1,i-1} e = \partial_{n,j}^\alpha
  \epsilon_{n,i} \partial_{n,i}^- c = \epsilon_{n-1,i-1}
  \partial_{n-1,i-1}^- \partial_{n,j}^\alpha c$, where we abbreviate
  $e = \partial_{n,j}^\alpha d$. Similarly we can show that
  $e \star_{n-1,i-1} \partial_{n,j}^\alpha c = \epsilon_{n-1,i-1}
  \partial_{n-1,i-1}^+ \partial_{n,j}^\alpha c$. Hence once again $e$
  is the $R_{n-1,i-1}$-inverse of $\partial_{n,j}^\alpha c$.
\end{itemize}
It follows from the definition that $c$ has an $R_{n-1,i}$-invertible
shell and hence an $R_{n,i}$-inverse $R_{n,i} c \in \Ccal_n$.  It
satisfies
$c \star_{n,i} R_{n,i} c = \epsilon_{n,i} \partial_{n,i}^- c$ and
$R_{n,i} c \star_{n,i} c = \epsilon_{n,i} \partial_{n,i}^+ c$. Suppose
$u = [R_{n,i} c]_\sim \in \Scal$.  The previous equations then imply
that $s \circ_i u = \delta_i^- s$ and $u \circ_i s = \delta_i^+
s$. Hence $u$ is the $r_i$-inverse of $s$. This shows that $\Scal$ is
a single-set $(\omega,p)$-category.
\end{proof}

\section{Formalisation with Isabelle/HOL}
\label{S:FormalisationIsabelle}

The Isabelle/HOL proof assistant~\cite{NipkowPaulsonWenzel2002} has
been indispensable for developing our axiomatisation of single-set
cubical $\omega$-categories in
Section~\ref{SS:SingleSetHigherCategories}. 
In this section we
  describe our Isabelle components, report on our particular usage of
  its support for proof automation in taming these axioms and give an
  example of a non-trivial proof (of Proposition~\ref{P:OmegaZeroInv})
  using our formalisation.

\subsection{Isabelle/HOL in a nutshell}
\label{SS:nutshell}

Isabelle/HOL is based on a simply-typed
classical higher-order logic, which in practice often gives the
impression of working in typed set theory. Among similar proof
assistants such as Coq \cite{BertotCasteran2004} or Lean \cite{Lean},
it stands out due to its support for proof automation. On one hand,
Isabelle employs internal simplification and proof procedures as well
as external proof search tools for first-order logic, which can be
invoked using the \emph{Sledgehammer} tactic.  On the other, it
integrates SAT solvers for counterexample search using the
\emph{Nitpick} tactic. 

Yet these strengths come at a price: Isabelle's type classes, one of
its two main mechanisms for modelling and working with algebraic
hierarchies, allow only one single type parameter, which essentially
imposes a single-set approach to categories when using type
  classes. Consequently, the numbers~$n$ or~$\omega$ do not feature
as type parameters in the formalisation of $n$- or $\omega$-categories
and cannot be instantiated easily to fixed finite dimensions. Further,
Isabelle's type system does not support dependent types, which may
sometimes be desirable in mathematical specifications. 

To overcome the first limitation, Isabelle offers locales as a more
set-based specification mechanism.  This allows more than one type
parameter and hence the standard approach to categories with objects
and arrows. Yet a locale-based approach to formalising
  mathematics~\cite{Ballarin14}, which has been developed over many
  years, had recently to be revised~\cite{Ballarin20} for modelling
  more advanced mathematical concepts such as Grothendieck
  schemes~\cite{BordgPL21}, at the expense of losing many benefits of
  types and type checking as well as a weaker coupling with Isabelle's
  main libraries. It might therefore be desirable to use proof
assistants with more expressive type systems, such as Coq or Lean, for
formalising more advanced features of higher categories. However, more
work is needed to evaluate the strengths and weaknesses of different
proof assistants in this regard. The formalisation of higher
  categories would certainly provide excellent test cases.

\subsection{Formalising single-set categories}
\label{SS:catoid-isa}

As previously for globular $\omega$-categories \cite{CalkStruth24},
the basic Isabelle type class for our formalisation of cubical
$\omega$-categories is that of a \emph{catoid}, a structure mentioned
en passant in Remark~\ref{R:RemarkCatoids}. We start with recalling
the basic features of the formalisation of catoids and single-set
categories with Isabelle from the Archive of Formal
proofs~\cite{Struth23}.

\begin{isabellebody}
\isanewline
  \isacommand{class}\isamarkupfalse%
\ multimagma\ {\isacharequal}{\kern0pt}\ \isanewline
\ \ \isakeyword{fixes}\ mcomp\ {\isacharcolon}{\kern0pt}{\isacharcolon}{\kern0pt}\ {\isachardoublequoteopen}{\isacharprime}{\kern0pt}a\ {\isasymRightarrow}\ {\isacharprime}{\kern0pt}a\ {\isasymRightarrow}\ {\isacharprime}{\kern0pt}a\ set{\isachardoublequoteclose}\ {\isacharparenleft}{\kern0pt}\isakeyword{infixl}\ {\isachardoublequoteopen}{\isasymodot}{\isachardoublequoteclose}\ {\isadigit{7}}{\isadigit{0}}{\isacharparenright}{\kern0pt}\ \isanewline
\isanewline
\isacommand{class}\isamarkupfalse%
\ multisemigroup\ {\isacharequal}{\kern0pt}\ multimagma\ {\isacharplus}{\kern0pt}\isanewline
\ \ \isakeyword{assumes}\ assoc{\isacharcolon}{\kern0pt}\ {\isachardoublequoteopen}{\isacharparenleft}{\kern0pt}{\isasymUnion}v\ {\isasymin}\ y\ {\isasymodot}\ z{\isachardot}{\kern0pt}\ x\ {\isasymodot}\ v{\isacharparenright}{\kern0pt}\ {\isacharequal}{\kern0pt}\ {\isacharparenleft}{\kern0pt}{\isasymUnion}v\ {\isasymin}\ x\ {\isasymodot}\ y{\isachardot}{\kern0pt}\ v\ {\isasymodot}\ z{\isacharparenright}{\kern0pt}{\isachardoublequoteclose}\isanewline
\isanewline
\isacommand{class}\isamarkupfalse%
\ st{\isacharunderscore}{\kern0pt}op\ {\isacharequal}{\kern0pt}\ \isanewline
\ \ \isakeyword{fixes}\ src\ {\isacharcolon}{\kern0pt}{\isacharcolon}{\kern0pt}\ {\isachardoublequoteopen}{\isacharprime}{\kern0pt}a\ {\isasymRightarrow}\ {\isacharprime}{\kern0pt}a{\isachardoublequoteclose}\ {\isacharparenleft}{\kern0pt}{\isachardoublequoteopen}{\isasymsigma}{\isachardoublequoteclose}{\isacharparenright}{\kern0pt}\isanewline
\ \ \isakeyword{and}\ tgt\ {\isacharcolon}{\kern0pt}{\isacharcolon}{\kern0pt}\ {\isachardoublequoteopen}{\isacharprime}{\kern0pt}a\ {\isasymRightarrow}\ {\isacharprime}{\kern0pt}a{\isachardoublequoteclose}\ {\isacharparenleft}{\kern0pt}{\isachardoublequoteopen}{\isasymtau}{\isachardoublequoteclose}{\isacharparenright}{\kern0pt}\isanewline
\isanewline
\isacommand{class}\isamarkupfalse%
\ st{\isacharunderscore}{\kern0pt}multimagma\ {\isacharequal}{\kern0pt}\ multimagma\ {\isacharplus}{\kern0pt}\ st{\isacharunderscore}{\kern0pt}op\ {\isacharplus}{\kern0pt}\isanewline
\ \ \isakeyword{assumes}\ Dst{\isacharcolon}{\kern0pt}\ {\isachardoublequoteopen}x\ {\isasymodot}\ y\ {\isasymnoteq}\ {\isacharbraceleft}{\kern0pt}{\isacharbraceright}{\kern0pt}\ {\isasymLongrightarrow}\ {\isasymtau}\ x\ {\isacharequal}{\kern0pt}\ {\isasymsigma}\ y{\isachardoublequoteclose}\isanewline
\ \ \isakeyword{and}\ s{\isacharunderscore}{\kern0pt}absorb\ {\isacharbrackleft}{\kern0pt}simp{\isacharbrackright}{\kern0pt}{\isacharcolon}{\kern0pt}\ {\isachardoublequoteopen}{\isasymsigma}\ x\ {\isasymodot}\ x\ {\isacharequal}{\kern0pt}\ {\isacharbraceleft}{\kern0pt}x{\isacharbraceright}{\kern0pt}{\isachardoublequoteclose}\ \isanewline
\ \ \isakeyword{and}\ t{\isacharunderscore}{\kern0pt}absorb\
{\isacharbrackleft}{\kern0pt}simp{\isacharbrackright}{\kern0pt}{\isacharcolon}{\kern0pt}\
{\isachardoublequoteopen}x\ {\isasymodot}\ {\isasymtau}\ x\
{\isacharequal}{\kern0pt}\
{\isacharbraceleft}{\kern0pt}x{\isacharbraceright}{\kern0pt}{\isachardoublequoteclose}%
\isanewline
\isanewline
\isacommand{class}\isamarkupfalse%
\ catoid\ {\isacharequal}{\kern0pt}\ st{\isacharunderscore}{\kern0pt}multimagma\ {\isacharplus}{\kern0pt}\ multisemigroup\isanewline
\end{isabellebody}
The type classes \isa{multimagma} and \isa{st-op} introduce a
multioperation and source and target maps, together with notation
$\odot$, $\sigma$ and $\tau$.  The classes \isa{multisemigroup},
\isa{st-op} and \isa{catoid} add structure,
extending the classes previously defined. The catoid class, for
instance, extends the st-multimagma and multisemigroup classes, while
the multisemigroup class extends the class defining its
multiplication.

Catoids are extended to single-set categories by
imposing the locality and functionality axioms from
Section~\ref{SSS:SingleSetOneCategories}.

\begin{isabellebody}
  \isanewline
  \isacommand{class}\isamarkupfalse%
\ single{\isacharunderscore}{\kern0pt}set{\isacharunderscore}{\kern0pt}category\ {\isacharequal}{\kern0pt}\ functional{\isacharunderscore}{\kern0pt}catoid\ {\isacharplus}{\kern0pt}\ local{\isacharunderscore}{\kern0pt}catoid\isanewline
\end{isabellebody}

Each of the above type class features one single type parameter
\isa{{\isachardoublequoteopen}{\isacharprime}{\kern0pt}a} (spelled
$\alpha$) and is polymorphic in this parameter. It can therefore be
instantiated to more concrete types. In the class
\isa{multisemigroup}, for instance, $\alpha$ could be instantiated to
the type of strings and $\odot$ to the shuffle operation on
strings.

Note also that multisemigroups or catoids have been specified without
carrier sets. While such sets can be added easily, it often suffices
to regard the type $\alpha$ roughly as a set.

The standard function type in proof assistants such as Coq, Isabelle
or Lean models total functions. Partiality is usually modelled using a
monad or by adjoining a zero. Here instead we take the multioperation
$\odot$ as a basis, mapping to the empty set when two elements cannot
be composed. We define the resulting domain of definition as
\begin{isabellebody}
  \isanewline
  \isacommand{abbreviation}\isamarkupfalse%
\ {\isachardoublequoteopen}{\isasymDelta}\ x\ y\ {\isasymequiv}\
{\isacharparenleft}{\kern0pt}x\ {\isasymodot}\ y\ {\isasymnoteq}\
{\isacharbraceleft}{\kern0pt}{\isacharbraceright}{\kern0pt}{\isacharparenright}{\kern0pt}{\isachardoublequoteclose}%
\isanewline
\end{isabellebody}
\noindent and the partial operation $\otimes$ (denoted $\circ$ in
previous sections), using the definite description operator \isa{THE}
as
\begin{isabellebody}
 \isanewline
\isacommand{definition}\isamarkupfalse%
\ pcomp\ {\isacharcolon}{\kern0pt}{\isacharcolon}{\kern0pt}\ {\isachardoublequoteopen}{\isacharprime}{\kern0pt}a\ {\isasymRightarrow}\ {\isacharprime}{\kern0pt}a\ {\isasymRightarrow}\ {\isacharprime}{\kern0pt}a{\isachardoublequoteclose}\ {\isacharparenleft}{\kern0pt}\isakeyword{infixl}\ {\isachardoublequoteopen}{\isasymotimes}{\isachardoublequoteclose}\ {\isadigit{7}}{\isadigit{0}}{\isacharparenright}{\kern0pt}\ \isakeyword{where}\isanewline
\ \ {\isachardoublequoteopen}x\ {\isasymotimes}\ y\ {\isacharequal}{\kern0pt}\ {\isacharparenleft}{\kern0pt}THE\ z{\isachardot}{\kern0pt}\ z\ {\isasymin}\ x\ {\isasymodot}\ y{\isacharparenright}{\kern0pt}{\isachardoublequoteclose}\isanewline
\end{isabellebody}
\noindent If $x\odot y$ is empty, Isabelle maps $x\otimes y$ to a
value about which nothing particular can be proved. Using
$\otimes$ in place of $\odot$ with single-set categories allows us to
be precise about definedness conditions in higher categories, which
may be subtle, while avoiding clumsy specifications with many set
braces and a proliferation of cases in proofs due to
undefinedness. In proof assistants with dependent types, the
  partiality of composition in categories can alternatively be
  formalised at type level, defining arrow composition on
  homsets. This approach might be harder do integrate with tools like
  Sledgehammer.

\subsection{Formalising single-set cubical $\omega$-categories}
\label{SS:cubical-isabelle}

To formalise $\omega$-categories, we have first created indexed
variants of the classes leading to \isa{single-set-category} above,
that is, classes based on $\odot_i$, $\otimes_i$ for the compositions
and $\partial\, i\, \alpha$ for face maps. Unlike in previous
sections, our indices start with $0$ and we write $\partial$ instead
of $\delta$ for single-set face maps. With Isabelle, we can formally
link these indexed classes with the non-indexed ones (using so-called
sublocale statements between classes) so that all theorems about
single-set categories are in scope in the indexed variants. In
particular, we have introduced the definedness conditions \isa{DD i}
and linked them formally with $\Delta$.

Using the class \isa{icategory} for indexed single-set categories, we
have first defined an auxiliary class for $\omega$-categories without symmetries.

\begin{isabellebody}
  \isanewline 
  \isacommand{class}\isamarkupfalse%
\ semi{\isacharunderscore}{\kern0pt}cubical{\isacharunderscore}{\kern0pt}omega{\isacharunderscore}{\kern0pt}category\ {\isacharequal}{\kern0pt}\ icategory\ {\isacharplus}{\kern0pt}\isanewline
\ \ \isakeyword{assumes}\ face{\isacharunderscore}{\kern0pt}comm{\isacharcolon}{\kern0pt}\ {\isachardoublequoteopen}i\ {\isasymnoteq}\ j\ {\isasymLongrightarrow}\ {\isasympartial}\ i\ {\isasymalpha}\ {\isasymcirc}\ {\isasympartial}\ j\ {\isasymbeta}\ {\isacharequal}{\kern0pt}\ {\isasympartial}\ j\ {\isasymbeta}\ {\isasymcirc}\ {\isasympartial}\ i\ {\isasymalpha}{\isachardoublequoteclose}\isanewline
\ \ \isakeyword{and}\ face{\isacharunderscore}{\kern0pt}func{\isacharcolon}{\kern0pt}\ {\isachardoublequoteopen}i\ {\isasymnoteq}\ j\ {\isasymLongrightarrow}\ DD\ j\ x\ y\ {\isasymLongrightarrow}\ {\isasympartial}\ i\ {\isasymalpha}\ {\isacharparenleft}{\kern0pt}x\ {\isasymotimes}\isactrlbsub j\isactrlesub \ y{\isacharparenright}{\kern0pt}\ {\isacharequal}{\kern0pt}\ {\isasympartial}\ i\ {\isasymalpha}\ x\ {\isasymotimes}\isactrlbsub j\isactrlesub \ {\isasympartial}\ i\ {\isasymalpha}\ y{\isachardoublequoteclose}\isanewline
\ \ \isakeyword{and}\ interchange{\isacharcolon}{\kern0pt}\ {\isachardoublequoteopen}i\ {\isasymnoteq}\ j\ {\isasymLongrightarrow}\ DD\ i\ w\ x\ {\isasymLongrightarrow}\ DD\ i\ y\ z\ {\isasymLongrightarrow}\ DD\ j\ w\ y\ {\isasymLongrightarrow}\ DD\ j\ x\ z\ \isanewline
\ \ \ \ \ \ \ \ \ \ \ \ \ \ \ \ \ \ \ \ \ \ \ \ \ \ \ {\isasymLongrightarrow}\ {\isacharparenleft}{\kern0pt}w\ {\isasymotimes}\isactrlbsub i\isactrlesub \ x{\isacharparenright}{\kern0pt}\ {\isasymotimes}\isactrlbsub j\isactrlesub \ {\isacharparenleft}{\kern0pt}y\ {\isasymotimes}\isactrlbsub i\isactrlesub \ z{\isacharparenright}{\kern0pt}\ {\isacharequal}{\kern0pt}\ {\isacharparenleft}{\kern0pt}w\ {\isasymotimes}\isactrlbsub j\isactrlesub \ y{\isacharparenright}{\kern0pt}\ {\isasymotimes}\isactrlbsub i\isactrlesub \ {\isacharparenleft}{\kern0pt}x\ {\isasymotimes}\isactrlbsub j\isactrlesub \ z{\isacharparenright}{\kern0pt}{\isachardoublequoteclose}\isanewline
\ \ \isakeyword{and}\ fin{\isacharunderscore}{\kern0pt}fix{\isacharcolon}{\kern0pt}\ {\isachardoublequoteopen}\ {\isasymexists}k{\isachardot}\ {\isasymforall}i{\isachardot}\ k\ {\isasymle}\ i\ {\isasymlongrightarrow}\ fFx\ i\ x{\isachardoublequoteclose}\isanewline
\end{isabellebody}

\noindent In the last axiom, \isa{{\isachardoublequoteopen}fFx\ i\ x\ {\isasymequiv}\
  {\isacharparenleft}{\kern0pt}{\isasympartial}\ i\ ff\ x\
  {\isacharequal}{\kern0pt}\
  x{\isacharparenright}{\kern0pt}{\isachardoublequoteclose}}, which we
use in place of the predicate $x \in \Scal^i$ from
Section~\ref{SS:SingleSetCubicalCategories}, has been defined in the
context of a class on which \isa{icategory} is based.

We have further extended this class to one for 
$\omega$-categories with the remaining axioms for symmetries and
reverse symmetries from
Definition~\ref{SSS:DefCubSingleSetWithoutConn}, after introducing a
separate class for these two maps.

\begin{isabellebody}
 \isanewline
\isacommand{class}\isamarkupfalse%
\ symmetry{\isacharunderscore}{\kern0pt}ops\ {\isacharequal}{\kern0pt}\isanewline
\ \ \isakeyword{fixes}\ symmetry\ {\isacharcolon}{\kern0pt}{\isacharcolon}{\kern0pt}\ {\isachardoublequoteopen}nat\ {\isasymRightarrow}\ {\isacharprime}{\kern0pt}a\ {\isasymRightarrow}\ {\isacharprime}{\kern0pt}a{\isachardoublequoteclose}\ {\isacharparenleft}{\kern0pt}{\isachardoublequoteopen}{\isasymsigma}{\isachardoublequoteclose}{\isacharparenright}{\kern0pt}\isanewline
\ \ \isakeyword{and}\ inv{\isacharunderscore}{\kern0pt}symmetry\
{\isacharcolon}{\kern0pt}{\isacharcolon}{\kern0pt}\
{\isachardoublequoteopen}nat\ {\isasymRightarrow}\
{\isacharprime}{\kern0pt}a\ {\isasymRightarrow}\
{\isacharprime}{\kern0pt}a{\isachardoublequoteclose}\
{\isacharparenleft}{\kern0pt}{\isachardoublequoteopen}{\isasymtheta}{\isachardoublequoteclose}{\isacharparenright}{\kern0pt}\
\isanewline
% \end{isabellebody}
% \begin{isabellebody}
   \isanewline
  \isacommand{class}\isamarkupfalse%
\ cubical{\isacharunderscore}{\kern0pt}omega{\isacharunderscore}{\kern0pt}category\ {\isacharequal}{\kern0pt}\ semi{\isacharunderscore}{\kern0pt}cubical{\isacharunderscore}{\kern0pt}omega{\isacharunderscore}{\kern0pt}category\ {\isacharplus}{\kern0pt}\ symmetry{\isacharunderscore}{\kern0pt}ops\ {\isacharplus}{\kern0pt}\isanewline
\ \ \isakeyword{assumes}\ sym{\isacharunderscore}{\kern0pt}type{\isacharcolon}{\kern0pt}\ {\isachardoublequoteopen}{\isasymsigma}{\isasymsigma}\ i\ {\isacharparenleft}{\kern0pt}face{\isacharunderscore}{\kern0pt}fix\ i{\isacharparenright}{\kern0pt}\ {\isasymsubseteq}\ face{\isacharunderscore}{\kern0pt}fix\ {\isacharparenleft}{\kern0pt}i\ {\isacharplus}{\kern0pt}\ {\isadigit{1}}{\isacharparenright}{\kern0pt}{\isachardoublequoteclose}\isanewline
\ \ \isakeyword{and}\ inv{\isacharunderscore}{\kern0pt}sym{\isacharunderscore}{\kern0pt}type{\isacharcolon}{\kern0pt}\ {\isachardoublequoteopen}{\isasymtheta}{\isasymtheta}\ i\ {\isacharparenleft}{\kern0pt}face{\isacharunderscore}{\kern0pt}fix\ {\isacharparenleft}{\kern0pt}i\ {\isacharplus}{\kern0pt}\ {\isadigit{1}}{\isacharparenright}{\kern0pt}{\isacharparenright}{\kern0pt}\ {\isasymsubseteq}\ face{\isacharunderscore}{\kern0pt}fix\ i{\isachardoublequoteclose}\isanewline
\ \ \isakeyword{and}\ sym{\isacharunderscore}{\kern0pt}inv{\isacharunderscore}{\kern0pt}sym{\isacharcolon}{\kern0pt}\ {\isachardoublequoteopen}fFx\ {\isacharparenleft}{\kern0pt}i\ {\isacharplus}{\kern0pt}\ {\isadigit{1}}{\isacharparenright}{\kern0pt}\ x\ {\isasymLongrightarrow}\ {\isasymsigma}\ i\ {\isacharparenleft}{\kern0pt}{\isasymtheta}\ i\ x{\isacharparenright}{\kern0pt}\ {\isacharequal}{\kern0pt}\ x{\isachardoublequoteclose}\isanewline
\ \ \isakeyword{and}\ inv{\isacharunderscore}{\kern0pt}sym{\isacharunderscore}{\kern0pt}sym{\isacharcolon}{\kern0pt}\ {\isachardoublequoteopen}fFx\ i\ x\ \ {\isasymLongrightarrow}\ {\isasymtheta}\ i\ {\isacharparenleft}{\kern0pt}{\isasymsigma}\ i\ x{\isacharparenright}{\kern0pt}\ {\isacharequal}{\kern0pt}\ x{\isachardoublequoteclose}\isanewline
\ \ \isakeyword{and}\ sym{\isacharunderscore}{\kern0pt}face{\isadigit{1}}{\isacharcolon}{\kern0pt}\ {\isachardoublequoteopen}fFx\ i\ x\ {\isasymLongrightarrow}\ {\isasympartial}\ i\ {\isasymalpha}\ {\isacharparenleft}{\kern0pt}{\isasymsigma}\ i\ x{\isacharparenright}{\kern0pt}\ {\isacharequal}{\kern0pt}\ {\isasymsigma}\ i\ {\isacharparenleft}{\kern0pt}{\isasympartial}\ {\isacharparenleft}{\kern0pt}i\ {\isacharplus}{\kern0pt}\ {\isadigit{1}}{\isacharparenright}{\kern0pt}\ {\isasymalpha}\ x{\isacharparenright}{\kern0pt}{\isachardoublequoteclose}\isanewline
\ \ \isakeyword{and}\ sym{\isacharunderscore}{\kern0pt}face{\isadigit{2}}{\isacharcolon}{\kern0pt}\ {\isachardoublequoteopen}i\ {\isasymnoteq}\ j\ {\isasymLongrightarrow}\ i\ {\isasymnoteq}\ j\ {\isacharplus}{\kern0pt}\ {\isadigit{1}}\ {\isasymLongrightarrow}\ fFx\ j\ x\ {\isasymLongrightarrow}\ {\isasympartial}\ i\ {\isasymalpha}\ {\isacharparenleft}{\kern0pt}{\isasymsigma}\ j\ x{\isacharparenright}{\kern0pt}\ {\isacharequal}{\kern0pt}\ {\isasymsigma}\ j\ {\isacharparenleft}{\kern0pt}{\isasympartial}\ i\ {\isasymalpha}\ x{\isacharparenright}{\kern0pt}{\isachardoublequoteclose}\isanewline
\ \ \isakeyword{and}\ sym{\isacharunderscore}{\kern0pt}func{\isacharcolon}{\kern0pt}\ {\isachardoublequoteopen}i\ {\isasymnoteq}\ j\ {\isasymLongrightarrow}\ fFx\ i\ x\ {\isasymLongrightarrow}\ fFx\ i\ y\ {\isasymLongrightarrow}\ DD\ j\ x\ y\ {\isasymLongrightarrow}\ \isanewline
\ \ \ \ \ \ \ \ \ \ \ \ \ \ \ \ \ \ \ \ \ {\isasymsigma}\ i\ {\isacharparenleft}{\kern0pt}x\ {\isasymotimes}\isactrlbsub j\isactrlesub \ y{\isacharparenright}{\kern0pt}\ {\isacharequal}{\kern0pt}\ {\isacharparenleft}{\kern0pt}if\ j\ {\isacharequal}{\kern0pt}\ i\ {\isacharplus}{\kern0pt}\ {\isadigit{1}}\ then\ {\isasymsigma}\ i\ x\ {\isasymotimes}\isactrlbsub i\isactrlesub \ {\isasymsigma}\ i\ y\ else\ {\isasymsigma}\ i\ x\ {\isasymotimes}\isactrlbsub j\isactrlesub \ {\isasymsigma}\ i\ y{\isacharparenright}{\kern0pt}{\isachardoublequoteclose}\isanewline
\ \ \isakeyword{and}\ sym{\isacharunderscore}{\kern0pt}fix{\isacharcolon}{\kern0pt}\ {\isachardoublequoteopen}fFx\ i\ x\ {\isasymLongrightarrow}\ fFx\ {\isacharparenleft}{\kern0pt}i\ {\isacharplus}{\kern0pt}\ {\isadigit{1}}{\isacharparenright}{\kern0pt}\ x\ {\isasymLongrightarrow}\ {\isasymsigma}\ i\ x\ {\isacharequal}{\kern0pt}\ x{\isachardoublequoteclose}\isanewline
\ \ \isakeyword{and}\ sym{\isacharunderscore}{\kern0pt}sym{\isacharunderscore}{\kern0pt}braid{\isacharcolon}{\kern0pt}\ {\isachardoublequoteopen}diffSup\ i\ j\ {\isadigit{2}}\ {\isasymLongrightarrow}\ fFx\ i\ x\ {\isasymLongrightarrow}\ fFx\ j\ x\ \ {\isasymLongrightarrow}\ {\isasymsigma}\ i\ {\isacharparenleft}{\kern0pt}{\isasymsigma}\ j\ x{\isacharparenright}{\kern0pt}\ {\isacharequal}{\kern0pt}\ {\isasymsigma}\ j\ {\isacharparenleft}{\kern0pt}{\isasymsigma}\ i\ x{\isacharparenright}{\kern0pt}{\isachardoublequoteclose}\isanewline
\end{isabellebody}

\noindent In Axioms \isa{sym-type} and \isa{inv-sym-type}, the
functions $\sigma\sigma$ and $\vartheta\vartheta$ are the image maps
corresponding to symmetries and reverse symmetries. Further
\isa{face-fix i} denotes the set $\Scal^i$ of fixed points of the
lower face map in direction $i$. It is defined as
\isa{{\isachardoublequoteopen}face{\isacharunderscore}{\kern0pt}fix\
  i\ {\isasymequiv}\ Fix\
  {\isacharparenleft}{\kern0pt}{\isasympartial}\ i\
  ff{\isacharparenright}{\kern0pt}{\isachardoublequoteclose}} as in
Section~\ref{SS:SingleSetCubicalCategories}. Other axioms in the class
use the predicate \isa{{\isachardoublequoteopen}diffSup\ i\ j\ k\
  {\isasymequiv}\ {\isacharparenleft}{\kern0pt}i\
  {\isacharminus}{\kern0pt}\ j\ {\isasymge}\ k\ {\isasymor}\ j\
  {\isacharminus}{\kern0pt}\ i\ {\isasymge}\
  k{\isacharparenright}{\kern0pt}{\isachardoublequoteclose}}.

Though our formalisation is a so-called shallow
embedding of categories in Isabelle, as it uses
Isabelle's built-in types for functions, sets and numbers to
axiomatise $\omega$-categories, it has nevertheless some deep
features, as we do not define a (sub)type for each \isa{
  {\isachardoublequoteopen}fFx\ i\ x\ {\isasymequiv}\
  {\isacharparenleft}{\kern0pt}{\isasympartial}\ i\ ff\ x\
  {\isacharequal}{\kern0pt}\
  x{\isacharparenright}{\kern0pt}{\isachardoublequoteclose}} and we
capture the partiality of cell composition in terms of the predicate
\isa{DD}, but not at type level. Such typing or composition conditions
must therefore be declared explicitly in axioms and lemmas, and they
need to be checked explicitly in proofs.

Finally, we have provided a class for connections and defined a class for
$\omega$-categories with connections.

\begin{isabellebody}
  \isanewline
  \isacommand{class}\isamarkupfalse%
\ cubical{\isacharunderscore}{\kern0pt}omega{\isacharunderscore}{\kern0pt}category{\isacharunderscore}{\kern0pt}connections\ {\isacharequal}{\kern0pt}\ cubical{\isacharunderscore}{\kern0pt}omega{\isacharunderscore}{\kern0pt}category\ {\isacharplus}{\kern0pt}\ connection{\isacharunderscore}{\kern0pt}ops\ {\isacharplus}{\kern0pt}\isanewline
\ \ \isakeyword{assumes}\ conn{\isacharunderscore}{\kern0pt}face{\isadigit{1}}{\isacharcolon}{\kern0pt}\ {\isachardoublequoteopen}fFx\ j\ x\ {\isasymLongrightarrow}\ {\isasympartial}\ j\ {\isasymalpha}\ {\isacharparenleft}{\kern0pt}{\isasymGamma}\ j\ {\isasymalpha}\ x{\isacharparenright}{\kern0pt}\ {\isacharequal}{\kern0pt}\ x{\isachardoublequoteclose}\isanewline
\ \ \isakeyword{and}\ conn{\isacharunderscore}{\kern0pt}face{\isadigit{2}}{\isacharcolon}{\kern0pt}\ {\isachardoublequoteopen}fFx\ j\ x\ {\isasymLongrightarrow}\ {\isasympartial}\ {\isacharparenleft}{\kern0pt}j\ {\isacharplus}{\kern0pt}\ {\isadigit{1}}{\isacharparenright}{\kern0pt}\ {\isasymalpha}\ {\isacharparenleft}{\kern0pt}{\isasymGamma}\ j\ {\isasymalpha}\ x{\isacharparenright}{\kern0pt}\ {\isacharequal}{\kern0pt}\ {\isasymsigma}\ j\ x{\isachardoublequoteclose}\isanewline
\ \ \isakeyword{and}\ conn{\isacharunderscore}{\kern0pt}face{\isadigit{3}}{\isacharcolon}{\kern0pt}\ {\isachardoublequoteopen}i\ {\isasymnoteq}\ j\ {\isasymLongrightarrow}\ i\ {\isasymnoteq}\ j\ {\isacharplus}{\kern0pt}\ {\isadigit{1}}\ {\isasymLongrightarrow}\ fFx\ j\ x\ {\isasymLongrightarrow}\ {\isasympartial}\ i\ {\isasymalpha}\ {\isacharparenleft}{\kern0pt}{\isasymGamma}\ j\ {\isasymbeta}\ x{\isacharparenright}{\kern0pt}\ {\isacharequal}{\kern0pt}\ {\isasymGamma}\ j\ {\isasymbeta}\ {\isacharparenleft}{\kern0pt}{\isasympartial}\ i\ {\isasymalpha}\ x{\isacharparenright}{\kern0pt}{\isachardoublequoteclose}\isanewline
\ \ \isakeyword{and}\
conn{\isacharunderscore}{\kern0pt}corner{\isadigit{1}}{\isacharcolon}{\kern0pt}\
{\isachardoublequoteopen}fFx\ i\ x\ {\isasymLongrightarrow}\ fFx\ i\
y\ {\isasymLongrightarrow}\ DD\ {\isacharparenleft}{\kern0pt}i\
{\isacharplus}{\kern0pt}\
{\isadigit{1}}{\isacharparenright}{\kern0pt}\ x\ y\
{\isasymLongrightarrow}\isanewline
\ \ \ \ \ \ \ \ \ \ \ \ \ \ \ \ \ \ \ \ {\isasymGamma}\ i\ tt\ {\isacharparenleft}{\kern0pt}x\ {\isasymotimes}\isactrlbsub {\isacharparenleft}{\kern0pt}i\ {\isacharplus}{\kern0pt}\ {\isadigit{1}}{\isacharparenright}{\kern0pt}\isactrlesub \ y{\isacharparenright}{\kern0pt}\ {\isacharequal}{\kern0pt}\ {\isacharparenleft}{\kern0pt}{\isasymGamma}\ i\ tt\ x\ {\isasymotimes}\isactrlbsub {\isacharparenleft}{\kern0pt}i\ {\isacharplus}{\kern0pt}\ {\isadigit{1}}{\isacharparenright}{\kern0pt}\isactrlesub \ {\isasymsigma}\ i\ x{\isacharparenright}{\kern0pt}\ {\isasymotimes}\isactrlbsub i\isactrlesub \ {\isacharparenleft}{\kern0pt}x\ {\isasymotimes}\isactrlbsub {\isacharparenleft}{\kern0pt}i\ {\isacharplus}{\kern0pt}\ {\isadigit{1}}{\isacharparenright}{\kern0pt}\isactrlesub \ {\isasymGamma}\ i\ tt\ y{\isacharparenright}{\kern0pt}{\isachardoublequoteclose}\ \isanewline
\ \ \isakeyword{and}\ conn{\isacharunderscore}{\kern0pt}corner{\isadigit{2}}{\isacharcolon}{\kern0pt}\ {\isachardoublequoteopen}fFx\ i\ x\ {\isasymLongrightarrow}\ fFx\ i\ y\ {\isasymLongrightarrow}\ DD\ {\isacharparenleft}{\kern0pt}i\ {\isacharplus}{\kern0pt}\ {\isadigit{1}}{\isacharparenright}{\kern0pt}\ x\ y\ {\isasymLongrightarrow}\isanewline
\ \ \ \ \ \ \ \ \ \ \ \ \ \ \ \ \ \ \ \ {\isasymGamma}\ i\ ff\ {\isacharparenleft}{\kern0pt}x\ {\isasymotimes}\isactrlbsub {\isacharparenleft}{\kern0pt}i\ {\isacharplus}{\kern0pt}\ {\isadigit{1}}{\isacharparenright}{\kern0pt}\isactrlesub \ y{\isacharparenright}{\kern0pt}\ {\isacharequal}{\kern0pt}\ {\isacharparenleft}{\kern0pt}{\isasymGamma}\ i\ ff\ x\ {\isasymotimes}\isactrlbsub {\isacharparenleft}{\kern0pt}i\ {\isacharplus}{\kern0pt}\ {\isadigit{1}}{\isacharparenright}{\kern0pt}\isactrlesub \ y{\isacharparenright}{\kern0pt}\ {\isasymotimes}\isactrlbsub i\isactrlesub \ {\isacharparenleft}{\kern0pt}{\isasymsigma}\ i\ y\ {\isasymotimes}\isactrlbsub {\isacharparenleft}{\kern0pt}i\ {\isacharplus}{\kern0pt}\ {\isadigit{1}}{\isacharparenright}{\kern0pt}\isactrlesub \ {\isasymGamma}\ i\ ff\ y{\isacharparenright}{\kern0pt}{\isachardoublequoteclose}\isanewline
\ \ \isakeyword{and}\ conn{\isacharunderscore}{\kern0pt}corner{\isadigit{3}}{\isacharcolon}{\kern0pt}\ {\isachardoublequoteopen}j\ {\isasymnoteq}\ i\ {\isasymand}\ j\ {\isasymnoteq}\ i\ {\isacharplus}{\kern0pt}\ {\isadigit{1}}\ {\isasymLongrightarrow}\ fFx\ i\ x\ {\isasymLongrightarrow}\ fFx\ i\ y\ {\isasymLongrightarrow}\ DD\ j\ x\ y\ {\isasymLongrightarrow}\ {\isasymGamma}\ i\ {\isasymalpha}\ {\isacharparenleft}{\kern0pt}x\ {\isasymotimes}\isactrlbsub j\isactrlesub \ y{\isacharparenright}{\kern0pt}\ {\isacharequal}{\kern0pt}\ {\isasymGamma}\ i\ {\isasymalpha}\ x\ {\isasymotimes}\isactrlbsub j\isactrlesub \ {\isasymGamma}\ i\ {\isasymalpha}\ y{\isachardoublequoteclose}\isanewline
\ \ \isakeyword{and}\ conn{\isacharunderscore}{\kern0pt}fix{\isacharcolon}{\kern0pt}\ {\isachardoublequoteopen}fFx\ i\ x\ {\isasymLongrightarrow}\ fFx\ {\isacharparenleft}{\kern0pt}i\ {\isacharplus}{\kern0pt}\ {\isadigit{1}}{\isacharparenright}{\kern0pt}\ x\ {\isasymLongrightarrow}\ {\isasymGamma}\ i\ {\isasymalpha}\ x\ {\isacharequal}{\kern0pt}\ x{\isachardoublequoteclose}\isanewline
\ \ \isakeyword{and}\ conn{\isacharunderscore}{\kern0pt}zigzag{\isadigit{1}}{\isacharcolon}{\kern0pt}\ {\isachardoublequoteopen}fFx\ i\ x\ {\isasymLongrightarrow}\ {\isasymGamma}\ i\ tt\ x\ {\isasymotimes}\isactrlbsub {\isacharparenleft}{\kern0pt}i\ {\isacharplus}{\kern0pt}\ {\isadigit{1}}{\isacharparenright}{\kern0pt}\isactrlesub \ {\isasymGamma}\ i\ ff\ x\ {\isacharequal}{\kern0pt}\ x{\isachardoublequoteclose}\isanewline
\ \ \isakeyword{and}\ conn{\isacharunderscore}{\kern0pt}zigzag{\isadigit{2}}{\isacharcolon}{\kern0pt}\ {\isachardoublequoteopen}fFx\ i\ x\ {\isasymLongrightarrow}\ {\isasymGamma}\ i\ tt\ x\ {\isasymotimes}\isactrlbsub i\isactrlesub \ {\isasymGamma}\ i\ ff\ x\ {\isacharequal}{\kern0pt}\ {\isasymsigma}\ i\ x{\isachardoublequoteclose}\isanewline
\ \ \isakeyword{and}\ conn{\isacharunderscore}{\kern0pt}conn{\isacharunderscore}{\kern0pt}braid{\isacharcolon}{\kern0pt}\ {\isachardoublequoteopen}diffSup\ i\ j\ {\isadigit{2}}\ {\isasymLongrightarrow}\ fFx\ j\ x\ {\isasymLongrightarrow}\ fFx\ i\ x\ {\isasymLongrightarrow}\ {\isasymGamma}\ i\ {\isasymalpha}\ {\isacharparenleft}{\kern0pt}{\isasymGamma}\ j\ {\isasymbeta}\ x{\isacharparenright}{\kern0pt}\ {\isacharequal}{\kern0pt}\ {\isasymGamma}\ j\ {\isasymbeta}\ {\isacharparenleft}{\kern0pt}{\isasymGamma}\ i\ {\isasymalpha}\ x{\isacharparenright}{\kern0pt}{\isachardoublequoteclose}\isanewline
\ \ \isakeyword{and}\ conn{\isacharunderscore}{\kern0pt}shift{\isacharcolon}{\kern0pt}\ {\isachardoublequoteopen}fFx\ i\ x\ {\isasymLongrightarrow}\ fFx\ {\isacharparenleft}{\kern0pt}i\ {\isacharplus}{\kern0pt}\ {\isadigit{1}}{\isacharparenright}{\kern0pt}\ x\ {\isasymLongrightarrow}\ {\isasymsigma}\ {\isacharparenleft}{\kern0pt}i\ {\isacharplus}{\kern0pt}\ {\isadigit{1}}{\isacharparenright}{\kern0pt}\ {\isacharparenleft}{\kern0pt}{\isasymsigma}\ i\ {\isacharparenleft}{\kern0pt}{\isasymGamma}\ {\isacharparenleft}{\kern0pt}i\ {\isacharplus}{\kern0pt}\ {\isadigit{1}}{\isacharparenright}{\kern0pt}\ {\isasymalpha}\ x{\isacharparenright}{\kern0pt}{\isacharparenright}{\kern0pt}\ {\isacharequal}{\kern0pt}\ {\isasymGamma}\ i\ {\isasymalpha}\ {\isacharparenleft}{\kern0pt}{\isasymsigma}\ {\isacharparenleft}{\kern0pt}i\ {\isacharplus}{\kern0pt}\ {\isadigit{1}}{\isacharparenright}{\kern0pt}\ x{\isacharparenright}{\kern0pt}{\isachardoublequoteclose}\isanewline
\end{isabellebody}

\subsection{Example proofs}

We present two Isabelle proofs as examples. The first one shows a
proof of Lemma~\ref{L:LemmaInvSymProp0}\eqref{I:LemmaSymCompo} by automated proof search.
\begin{isabellebody}
  \isanewline
\isacommand{lemma}\isamarkupfalse%
\ sym{\isacharunderscore}{\kern0pt}func{\isadigit{1}}{\isacharcolon}{\kern0pt}\ \isanewline
\ \ \isakeyword{assumes}\ {\isachardoublequoteopen}fFx\ i\ x{\isachardoublequoteclose}\isanewline
\ \ \isakeyword{and}\ {\isachardoublequoteopen}fFx\ i\ y{\isachardoublequoteclose}\isanewline
\ \ \isakeyword{and}\ {\isachardoublequoteopen}DD\ i\ x\ y{\isachardoublequoteclose}\isanewline
\ \ \isakeyword{shows}\ {\isachardoublequoteopen}{\isasymsigma}\ i\ {\isacharparenleft}{\kern0pt}x\ {\isasymotimes}\isactrlbsub i\isactrlesub \ y{\isacharparenright}{\kern0pt}\ {\isacharequal}{\kern0pt}\ {\isasymsigma}\ i\ x\ {\isasymotimes}\isactrlbsub {\isacharparenleft}{\kern0pt}i\ {\isacharplus}{\kern0pt}\ {\isadigit{1}}{\isacharparenright}{\kern0pt}\isactrlesub \ {\isasymsigma}\ i\ y{\isachardoublequoteclose}\isanewline
\ \ \isacommand{by}\isamarkupfalse%
\ {\isacharparenleft}{\kern0pt}metis\ assms\ icid{\isachardot}{\kern0pt}ts{\isacharunderscore}{\kern0pt}compat\ local{\isachardot}{\kern0pt}iDst\ local{\isachardot}{\kern0pt}icat{\isachardot}{\kern0pt}sscatml{\isachardot}{\kern0pt}l{\isadigit{0}}{\isacharunderscore}{\kern0pt}absorb\ sym{\isacharunderscore}{\kern0pt}type{\isacharunderscore}{\kern0pt}var{\isadigit{1}}{\isacharparenright}{\kern0pt}%
\isanewline
\end{isabellebody}
\noindent Isabelle's Sledgehammer tactic has returned the proof
shown. Sledgehammer invoques external proof-search tools for
first-order logic, which are internally reconstructed by Isabelle's
\isa{metis} tool, which itself has been verified using Isabelle to
increase trustworthiness. The proof statement lists the lemmas
used. All of them are part of our Isabelle component for 
cubical $\omega$-categories and the components on which it builds.

The second proof shows how a proof by hand can be typed into Isabelle
line by line, and each line then be verified automatically using
Isabelle's proof tactics -- here the third case
in Lemma~\ref{L:LemmaInvSymProp}\eqref{I:LemmaFaceInvSym}.
\begin{isabellebody}
  \isanewline \isacommand{lemma}\isamarkupfalse%
  \
  inv{\isacharunderscore}{\kern0pt}sym{\isacharunderscore}{\kern0pt}face{\isacharcolon}{\kern0pt}\
  \isanewline \ \ \isakeyword{assumes}\ {\isachardoublequoteopen}i\
  {\isasymnoteq}\ j{\isachardoublequoteclose}\isanewline \ \
  \isakeyword{and}\ {\isachardoublequoteopen}i\ {\isasymnoteq}\ j\
  {\isacharplus}{\kern0pt}\
  {\isadigit{1}}{\isachardoublequoteclose}\isanewline \ \
  \isakeyword{and}\ {\isachardoublequoteopen}fFx\
  {\isacharparenleft}{\kern0pt}j\ {\isacharplus}{\kern0pt}\
  {\isadigit{1}}{\isacharparenright}{\kern0pt}\
  x{\isachardoublequoteclose}\isanewline \ \ \isakeyword{shows}\
  {\isachardoublequoteopen}{\isasympartial}\ i\ {\isasymalpha}\
  {\isacharparenleft}{\kern0pt}{\isasymtheta}\ j\
  x{\isacharparenright}{\kern0pt}\ {\isacharequal}{\kern0pt}\
  {\isasymtheta}\ j\ {\isacharparenleft}{\kern0pt}{\isasympartial}\ i\
  {\isasymalpha}\
  x{\isacharparenright}{\kern0pt}{\isachardoublequoteclose}\isanewline
  \isacommand{proof}\isamarkupfalse%
  {\isacharminus}{\kern0pt}\isanewline \ \
  \isacommand{have}\isamarkupfalse%
  \ {\isachardoublequoteopen}{\isasymsigma}\ j\
  {\isacharparenleft}{\kern0pt}{\isasympartial}\ i\ {\isasymalpha}\
  {\isacharparenleft}{\kern0pt}{\isasymtheta}\ j\
  x{\isacharparenright}{\kern0pt}{\isacharparenright}{\kern0pt}\
  {\isacharequal}{\kern0pt}\ {\isasymsigma}\ j\
  {\isacharparenleft}{\kern0pt}{\isasympartial}\ i\ {\isasymalpha}\
  {\isacharparenleft}{\kern0pt}{\isasympartial}\ j\ ff\
  {\isacharparenleft}{\kern0pt}{\isasymtheta}\ j\
  x{\isacharparenright}{\kern0pt}{\isacharparenright}{\kern0pt}{\isacharparenright}{\kern0pt}{\isachardoublequoteclose}\isanewline
  \ \ \ \ \isacommand{using}\isamarkupfalse%
  \
  assms{\isacharparenleft}{\kern0pt}{\isadigit{3}}{\isacharparenright}{\kern0pt}\
  inv{\isacharunderscore}{\kern0pt}sym{\isacharunderscore}{\kern0pt}type{\isacharunderscore}{\kern0pt}var\
  \isacommand{by}\isamarkupfalse%
  \ simp\isanewline \ \ \isacommand{also}\isamarkupfalse%
  \ \isacommand{have}\isamarkupfalse%
  \ {\isachardoublequoteopen}{\isasymdots}\ {\isacharequal}{\kern0pt}\
  {\isasympartial}\ i\ {\isasymalpha}\
  {\isacharparenleft}{\kern0pt}{\isasymsigma}\ j\
  {\isacharparenleft}{\kern0pt}{\isasympartial}\ j\ {\isasymalpha}\
  {\isacharparenleft}{\kern0pt}{\isasymtheta}\ j\
  x{\isacharparenright}{\kern0pt}{\isacharparenright}{\kern0pt}{\isacharparenright}{\kern0pt}{\isachardoublequoteclose}\isanewline
  \ \ \ \ \isacommand{by}\isamarkupfalse%
  \ {\isacharparenleft}{\kern0pt}metis\
  assms{\isacharparenleft}{\kern0pt}{\isadigit{1}}{\isacharparenright}{\kern0pt}\
  assms{\isacharparenleft}{\kern0pt}{\isadigit{2}}{\isacharparenright}{\kern0pt}\
  assms{\isacharparenleft}{\kern0pt}{\isadigit{3}}{\isacharparenright}{\kern0pt}\
  inv{\isacharunderscore}{\kern0pt}sym{\isacharunderscore}{\kern0pt}type{\isacharunderscore}{\kern0pt}var\
  local{\isachardot}{\kern0pt}fFx{\isacharunderscore}{\kern0pt}prop\
  sym{\isacharunderscore}{\kern0pt}face{\isacharunderscore}{\kern0pt}var{\isadigit{1}}{\isacharparenright}{\kern0pt}\isanewline
  \ \ \isacommand{also}\isamarkupfalse%
  \ \isacommand{have}\isamarkupfalse%
  \ {\isachardoublequoteopen}{\isasymdots}\ {\isacharequal}{\kern0pt}\
  {\isasympartial}\ i\ {\isasymalpha}\
  {\isacharparenleft}{\kern0pt}{\isasymsigma}\ j\
  {\isacharparenleft}{\kern0pt}{\isasymtheta}\ j\
  x{\isacharparenright}{\kern0pt}{\isacharparenright}{\kern0pt}{\isachardoublequoteclose}\isanewline
  \ \ \ \ \isacommand{using}\isamarkupfalse%
  \
  assms{\isacharparenleft}{\kern0pt}{\isadigit{1}}{\isacharparenright}{\kern0pt}\
  assms{\isacharparenleft}{\kern0pt}{\isadigit{2}}{\isacharparenright}{\kern0pt}\
  assms{\isacharparenleft}{\kern0pt}{\isadigit{3}}{\isacharparenright}{\kern0pt}\
  calculation\
  inv{\isacharunderscore}{\kern0pt}sym{\isacharunderscore}{\kern0pt}type{\isacharunderscore}{\kern0pt}var\
  local{\isachardot}{\kern0pt}sym{\isacharunderscore}{\kern0pt}face\
  \isacommand{by}\isamarkupfalse%
  \ presburger\isanewline \ \ \isacommand{also}\isamarkupfalse%
  \ \isacommand{have}\isamarkupfalse%
  \ {\isachardoublequoteopen}{\isasymdots}\ {\isacharequal}{\kern0pt}\
  {\isasympartial}\ i\ {\isasymalpha}\
  {\isacharparenleft}{\kern0pt}{\isasympartial}\
  {\isacharparenleft}{\kern0pt}j\ {\isacharplus}{\kern0pt}\
  {\isadigit{1}}{\isacharparenright}{\kern0pt}\ {\isasymalpha}\
  x{\isacharparenright}{\kern0pt}{\isachardoublequoteclose}\isanewline
  \ \ \ \ \isacommand{by}\isamarkupfalse%
  \ {\isacharparenleft}{\kern0pt}metis\
  assms{\isacharparenleft}{\kern0pt}{\isadigit{3}}{\isacharparenright}{\kern0pt}\
  local{\isachardot}{\kern0pt}face{\isacharunderscore}{\kern0pt}compat{\isacharunderscore}{\kern0pt}var\
  sym{\isacharunderscore}{\kern0pt}inv{\isacharunderscore}{\kern0pt}sym{\isacharunderscore}{\kern0pt}var{\isadigit{1}}{\isacharparenright}{\kern0pt}\isanewline
  \ \ \isacommand{finally}\isamarkupfalse%
  \ \isacommand{have}\isamarkupfalse%
  \ {\isachardoublequoteopen}{\isasymsigma}\ j\
  {\isacharparenleft}{\kern0pt}{\isasympartial}\ i\ {\isasymalpha}\
  {\isacharparenleft}{\kern0pt}{\isasymtheta}\ j\
  x{\isacharparenright}{\kern0pt}{\isacharparenright}{\kern0pt}\
  {\isacharequal}{\kern0pt}\ {\isasympartial}\ i\ {\isasymalpha}\
  {\isacharparenleft}{\kern0pt}{\isasympartial}\
  {\isacharparenleft}{\kern0pt}j\ {\isacharplus}{\kern0pt}\
  {\isadigit{1}}{\isacharparenright}{\kern0pt}\ {\isasymalpha}\
  x{\isacharparenright}{\kern0pt}{\isachardoublequoteclose}\isacommand{{\isachardot}{\kern0pt}}\isamarkupfalse%
  \isanewline \ \ \isacommand{thus}\isamarkupfalse%
  \ {\isacharquery}{\kern0pt}thesis\isanewline \ \ \ \
  \isacommand{by}\isamarkupfalse%
  \ {\isacharparenleft}{\kern0pt}smt\
  {\isacharparenleft}{\kern0pt}z{\isadigit{3}}{\isacharparenright}{\kern0pt}\
  assms{\isacharparenleft}{\kern0pt}{\isadigit{3}}{\isacharparenright}{\kern0pt}\
  icid{\isachardot}{\kern0pt}st{\isacharunderscore}{\kern0pt}eq{\isadigit{1}}\
  inv{\isacharunderscore}{\kern0pt}sym{\isacharunderscore}{\kern0pt}type{\isacharunderscore}{\kern0pt}var\
  local{\isachardot}{\kern0pt}face{\isacharunderscore}{\kern0pt}comm{\isacharunderscore}{\kern0pt}var\
  local{\isachardot}{\kern0pt}inv{\isacharunderscore}{\kern0pt}sym{\isacharunderscore}{\kern0pt}sym{\isacharparenright}{\kern0pt}\isanewline
  \isacommand{qed}\isamarkupfalse%
  \isanewline
\end{isabellebody}

Beyond fully automatic proofs found using Sledgehammer and interactive
proofs using Isabelle's proof scripting language Isar, as here,
Isabelle also offers so-called apply-style proofs, in which
simplification steps, application of rules or substitutions of
particular formulas can be combined step-wise with Sledgehammer
proofs. Examples can be found in our Isabelle component.

\subsection{Taming $\omega$-categories with Isabelle}

We have already outlined in the introduction how Isabelle has
  helped developing the single-set axioms for
  $\omega$-categories. Here we provide more details. Recall that we
  have justified these axioms via the equivalence proofs in
  Section~\ref{S:EquivalenceWithCubicalCategories}. Their selection
  was driven by the construction of the functors $\FC{(-)}$ and
  $\FS{(-)}$ and their extensions, which relate classical and
  single-set concepts. We aimed for a small set of structurally
  meaningful axioms to make the equivalence proofs smooth and simple.

We started with translating the axioms for
  $\partial^\alpha_{n,i}:\Ccal_n\to \Ccal_{n-1}$ into those for
  $\delta^\alpha_i:\Scal\to \Scal$. This was straightforward, for
  instance, for \eqref{I:AxiomCommutativity},
  \eqref{I:DeltaCompoCompat} or \eqref{I:ExchangeLaw} in~\eqref{SSS:DefCubSingleSetWithoutConn}, but others required
  encoding the index-shift in $\Ccal_n\to \Ccal_{n-1}$ of face maps in
  the single-set axioms, where graded sets $\Ccal_n$ are not
  immediately available. Instead we used the fixed point sets
  $\Scal^i$ or the face maps $\delta^\alpha_i$ as guards in the
  single-set axioms; Isabelle helped us to bring them into convenient
  form.  While this obviated degeneracies, we had to introduce
  symmetries to relate fixed points at the same dimension but in
  different directions, and to model the rotations of degenerated
  cubes through the interactions of symmetries with face maps and
  compositions. Starting from lattices like that in
  Subsection~\ref{SSS:Lattice}, the translations between symmetries
  and degeneracies in $\FC{(-)}$ and $\FS{(-)}$ and geometric
  intuitions, we used Isabelle, in particular Sledgehammer in
  combination with other proof tactics, in an iterative process to
  adapt or simplify candidate axioms, to analyse their dependencies,
  and to add axioms in light of the equivalence proof. Beyond
  symmetries, the equivalence proof for $\omega$-categories led us to
  experiment with dimensionality axioms using Isabelle. This resulted
  in Axiom~\eqref{I:AxiomFiniteDimCells}, and enabled the colimit and
  filtration constructions for $\FS{(-)}$.

During this process, we compressed the single-set
  axiomatisation for $\omega$-categories by a factor ${>}2$ to a size
  similar to the classical one.  A significant part of the process was
  automatic. Most redundant candidate axioms now feature in
  Lemmas~\ref{L:LemmaInvSymProp0} and \ref{L:LemmaInvSymProp}. Our
  work flow for $\omega$-categories with connections has been similar
  and resulted in a similar compression. Redundant laws are shown in
  Lemma~\ref{L:LemmaInvConnProp0}. Interestingly, we found
  Axiom~\eqref{I:AxiomConnShift} quite late through the equivalence
  proof.

Our insights in the strengths and weaknesses of Isabelle's
  proof automation might be valuable for mathematicians working with
  higher categories, where proofs tend to be highly combinatorial,
  axiomatisations often fill pages and there can be a big
  formalisation gap with respect to geometric or (string)
  diagrammatical intuition. In our work, we were sometimes surprised
  when Sledgehammer managed to derive seemingly independent
  conjectures, such as the Yang-Baxter identity in
  Lemma~\ref{L:LemmaInvSymProp0} or the face identities in
  Lemma~\ref{L:LemmaInvSymProp}(1). But we also spent hours feeding
  paper-and-pencil proofs into Isabelle and hard-coding rule
  applications, including the proof in the following
  subsection. Overall, interactive proofs with higher categories at
  the granularity of paper and pencil proofs seem nowadays feasible --
  and highly beneficial for activities like the one described in this
  article. Yet a main source of disappointment was that, unlike in
  previous work, we could not use Isabelle's Nitpick tool for
  verifying the irredundancy of our axiomatisation: it seems that the
  underlying SAT solver cannot cope with the arithmetic constraints in
  our axioms, though that should certainly be possible for SMT
  solvers.

\subsection{A non-trivial proof}
\label{SS:isaatwork}

At the end of this section, we show our formalisation at work,
presenting a proof of Proposition~\ref{P:OmegaZeroInv}. This
  example shows that Isabelle's proof automation smoothly supports
  interactive proofs in higher category theory. For this we have
formalised $(\omega,0)$-categories with Isabelle. 
A formalisation of $(\omega,p)$-categories based on type
classes seems impossible as it would require more than one type
parameter.

Defining a type class for $(\omega,0)$-categories needs some
preliminary definitions. First we have defined compositions of
sequences of symmetries and reverse symmetries.
\begin{isabellebody}
  \isanewline
  \isacommand{primrec}\isamarkupfalse%
\ symcomp\ {\isacharcolon}{\kern0pt}{\isacharcolon}{\kern0pt}\ {\isachardoublequoteopen}nat\ {\isasymRightarrow}\ nat\ {\isasymRightarrow}\ {\isacharprime}{\kern0pt}a\ {\isasymRightarrow}\ {\isacharprime}{\kern0pt}a{\isachardoublequoteclose}\ {\isacharparenleft}{\kern0pt}{\isachardoublequoteopen}{\isasymSigma}{\isachardoublequoteclose}{\isacharparenright}{\kern0pt}\ \isakeyword{where}\isanewline
\ \ \ \ {\isachardoublequoteopen}{\isasymSigma}\ i\ {\isadigit{0}}\ x\ {\isacharequal}{\kern0pt}\ x{\isachardoublequoteclose}\isanewline
\ \ {\isacharbar}{\kern0pt}\ {\isachardoublequoteopen}{\isasymSigma}\ i\ {\isacharparenleft}{\kern0pt}Suc\ j{\isacharparenright}{\kern0pt}\ x\ {\isacharequal}{\kern0pt}\ {\isasymsigma}\ {\isacharparenleft}{\kern0pt}i\ {\isacharplus}{\kern0pt}\ j{\isacharparenright}{\kern0pt}\ {\isacharparenleft}{\kern0pt}{\isasymSigma}\ i\ j\ x{\isacharparenright}{\kern0pt}{\isachardoublequoteclose}%
\isanewline
\isanewline
\isacommand{primrec}\isamarkupfalse%
\ inv{\isacharunderscore}{\kern0pt}symcomp\ {\isacharcolon}{\kern0pt}{\isacharcolon}{\kern0pt}\ {\isachardoublequoteopen}nat\ {\isasymRightarrow}\ nat\ {\isasymRightarrow}\ {\isacharprime}{\kern0pt}a\ {\isasymRightarrow}\ {\isacharprime}{\kern0pt}a{\isachardoublequoteclose}\ {\isacharparenleft}{\kern0pt}{\isachardoublequoteopen}{\isasymTheta}{\isachardoublequoteclose}{\isacharparenright}{\kern0pt}\ \isakeyword{where}\isanewline
\ \ \ \ {\isachardoublequoteopen}{\isasymTheta}\ i\ {\isadigit{0}}\ x\ {\isacharequal}{\kern0pt}\ x{\isachardoublequoteclose}\isanewline
\ \ {\isacharbar}{\kern0pt}\ {\isachardoublequoteopen}{\isasymTheta}\ i\ {\isacharparenleft}{\kern0pt}Suc\ j{\isacharparenright}{\kern0pt}\ x\ {\isacharequal}{\kern0pt}\ {\isasymTheta}\ i\ j\ {\isacharparenleft}{\kern0pt}{\isasymtheta}\ {\isacharparenleft}{\kern0pt}i\ {\isacharplus}{\kern0pt}\ j{\isacharparenright}{\kern0pt}\ x{\isacharparenright}{\kern0pt}{\isachardoublequoteclose}\isanewline
\end{isabellebody}

\noindent Then we have defined $r_i$-invertibility and shell
$r_i$-invertibility, following Definition~\ref{SSS:DefInverse}.
\begin{isabellebody}
\isanewline
\isacommand{abbreviation}\isamarkupfalse%
\ {\isacharparenleft}{\kern0pt}\isakeyword{in}\
cubical{\isacharunderscore}{\kern0pt}omega{\isacharunderscore}{\kern0pt}category{\isacharunderscore}{\kern0pt}connections{\isacharparenright}{\kern0pt}\isanewline
\ \ \ \ {\isachardoublequoteopen}ri{\isacharunderscore}{\kern0pt}inv\ i\ x\ y\ {\isasymequiv}\ {\isacharparenleft}{\kern0pt}DD\ i\ x\ y\ {\isasymand}\ DD\ i\ y\ x\ {\isasymand}\ x\ {\isasymotimes}\isactrlbsub i\isactrlesub \ y\ {\isacharequal}{\kern0pt}\ {\isasympartial}\ i\ ff\ x\ {\isasymand}\ y\ {\isasymotimes}\isactrlbsub i\isactrlesub \ x\ {\isacharequal}{\kern0pt}\ {\isasympartial}\ i\ tt\ x{\isacharparenright}{\kern0pt}{\isachardoublequoteclose}\isanewline
\isanewline
\isacommand{abbreviation}\isamarkupfalse%
\ {\isacharparenleft}{\kern0pt}\isakeyword{in}\
cubical{\isacharunderscore}{\kern0pt}omega{\isacharunderscore}{\kern0pt}category{\isacharunderscore}{\kern0pt}connections{\isacharparenright}{\kern0pt}\isanewline
\ \ \ \ {\isachardoublequoteopen}ri{\isacharunderscore}{\kern0pt}inv{\isacharunderscore}{\kern0pt}shell\ k\ i\ x\ {\isasymequiv}\ {\isacharparenleft}{\kern0pt}{\isasymforall}j\ {\isasymalpha}{\isachardot}{\kern0pt}\ j\ {\isacharplus}{\kern0pt}\ {\isadigit{1}}\ {\isasymle}\ k\ {\isasymand}\ j\ {\isasymnoteq}\ i\ {\isasymlongrightarrow}\ {\isacharparenleft}{\kern0pt}{\isasymexists}y{\isachardot}{\kern0pt}\ ri{\isacharunderscore}{\kern0pt}inv\ i\ {\isacharparenleft}{\kern0pt}{\isasympartial}\ j\ {\isasymalpha}\ x{\isacharparenright}{\kern0pt}\ y{\isacharparenright}{\kern0pt}{\isacharparenright}{\kern0pt}{\isachardoublequoteclose}%
\isanewline
\end{isabellebody}

\noindent This allowed us to specify a class for
$(\omega,0)$-categories following Definition~\ref{SSS:DefInverse}.
\begin{isabellebody}
\isanewline
\isacommand{class}\isamarkupfalse%
\ cubical{\isacharunderscore}{\kern0pt}omega{\isacharunderscore}{\kern0pt}zero{\isacharunderscore}{\kern0pt}category{\isacharunderscore}{\kern0pt}connections\ {\isacharequal}{\kern0pt}\ cubical{\isacharunderscore}{\kern0pt}omega{\isacharunderscore}{\kern0pt}category{\isacharunderscore}{\kern0pt}connections\ {\isacharplus}{\kern0pt}\isanewline
\ \ \isakeyword{assumes}\ ri{\isacharunderscore}{\kern0pt}inv{\isacharcolon}{\kern0pt}\ {\isachardoublequoteopen}k\ {\isasymge}\ {\isadigit{1}}\ {\isasymLongrightarrow}\ i\ {\isasymle}\ k\ {\isacharminus}{\kern0pt}\ {\isadigit{1}}\ {\isasymLongrightarrow}\ dim{\isacharunderscore}{\kern0pt}bound\ k\ x\ {\isasymLongrightarrow}\ ri{\isacharunderscore}{\kern0pt}inv{\isacharunderscore}{\kern0pt}shell\ k\ i\ x\ {\isasymLongrightarrow}\ {\isasymexists}y{\isachardot}{\kern0pt}\ ri{\isacharunderscore}{\kern0pt}inv\ i\ x\ y{\isachardoublequoteclose}\isanewline
\end{isabellebody}

\noindent
In the axiom \isa{ri-inv}, the predicate
\isa{{\isachardoublequoteopen}dim{\isacharunderscore}{\kern0pt}bound\
  k\ x\ {\isasymequiv}\
  {\isacharparenleft}{\kern0pt}{\isasymforall}i{\isachardot}{\kern0pt}\
  k\ {\isasymle}\ i\ {\isasymlongrightarrow}\ fFx\ i\
  x{\isacharparenright}{\kern0pt}{\isachardoublequoteclose}}, which we
use in place of the set $\Scal^{>k}$ from
Section~\ref{SS:SingleSetCubicalCategories}, has been defined in the
context of the class \isa{icategory}.

We have shown uniqueness of $r_i$-inverses and used this property,
together with Isabelle's definite description operator \isa{THE}, to
define an inversion map.
\begin{isabellebody}
  \isanewline
\isacommand{lemma}\isamarkupfalse%
\ ri{\isacharunderscore}{\kern0pt}unique{\isacharcolon}{\kern0pt}\
{\isachardoublequoteopen}{\isacharparenleft}{\kern0pt}{\isasymexists}y{\isachardot}{\kern0pt}\
ri{\isacharunderscore}{\kern0pt}inv\ i\ x\
y{\isacharparenright}{\kern0pt}\ {\isacharequal}{\kern0pt}\
{\isacharparenleft}{\kern0pt}{\isasymexists}{\isacharbang}{\kern0pt}y{\isachardot}{\kern0pt}\
ri{\isacharunderscore}{\kern0pt}inv\ i\ x\
y{\isacharparenright}{\kern0pt}{\isachardoublequoteclose}\isanewline
\ \   $\langle$\,proof\,$\rangle$\isanewline
\isanewline
\isacommand{definition}\isamarkupfalse%
\ {\isachardoublequoteopen}ri\ i\ x\ {\isacharequal}{\kern0pt}\ {\isacharparenleft}{\kern0pt}THE\ y{\isachardot}{\kern0pt}\ ri{\isacharunderscore}{\kern0pt}inv\ i\ x\ y{\isacharparenright}{\kern0pt}{\isachardoublequoteclose}\isanewline
\end{isabellebody}

Our proof of Proposition~\ref{P:OmegaZeroInv} with Isabelle follows
that in Subsection~\ref{SS:SinSetWithInv} quite directly. We do not
show the verifications of individual proof steps by Isabelle's proof
tools. Details can be found in our Isabelle
component~\cite{MassacrierStruth24}. The main part of the proof is
captured by a technical lemma that proceeds by induction on the
dimension $k$ of the cell $x$.

\begin{isabellebody}
\isanewline
\isacommand{lemma}\isamarkupfalse%
\ every{\isacharunderscore}{\kern0pt}dim{\isacharunderscore}{\kern0pt}k{\isacharunderscore}{\kern0pt}ri{\isacharunderscore}{\kern0pt}inv{\isacharcolon}{\kern0pt}\isanewline
\ \ \isacommand{assumes}\ {\isachardoublequoteopen}dim{\isacharunderscore}{\kern0pt}bound\ k\ x{\isachardoublequoteclose}\isanewline
\ \ \isacommand{shows}\ {\isachardoublequoteopen}{\isasymforall}i{\isachardot}{\kern0pt}\ {\isasymexists}y{\isachardot}{\kern0pt}\ ri{\isacharunderscore}{\kern0pt}inv\ i\ x\ y{\kern0pt}{\isachardoublequoteclose}
\ \isacommand{using}\ {\isachardoublequoteopen}dim{\isacharunderscore}{\kern0pt}bound\ k\ x{\isachardoublequoteclose}\isanewline
\isacommand{proof}\isamarkupfalse%
\ {\isacharparenleft}{\kern0pt}induction\ k\ arbitrary{\isacharcolon}{\kern0pt}\ x{\isacharparenright}{\kern0pt}\isanewline
\ \ \isacommand{case}\isamarkupfalse%
\ {\isadigit{0}}\isanewline
\ \ \isacommand{thus}\isamarkupfalse%
\ {\isacharquery}{\kern0pt}case\ \isanewline
\ \ \ \ $\langle$\,proof\,$\rangle$\isanewline
\isacommand{next}\isamarkupfalse%
\isanewline
\ \ \isacommand{case}\isamarkupfalse%
\ {\isacharparenleft}{\kern0pt}Suc\ k{\isacharparenright}{\kern0pt}\isanewline
\ \ \isacommand{{\isacharbraceleft}{\kern0pt}}\isamarkupfalse%
\isacommand{fix}\isamarkupfalse%
\ i\isanewline
\ \ \ \ \isacommand{have}\isamarkupfalse%
\ {\isachardoublequoteopen}{\isasymexists}y{\isachardot}{\kern0pt}\ ri{\isacharunderscore}{\kern0pt}inv\ i\ x\ y{\isachardoublequoteclose}
\isanewline
\end{isabellebody}

\noindent
Here we start a proof by cases for $i \geq k+1$ as in
Section~\ref{SS:SinSetWithInv}. As in the proof by hand, the first case
is trivial, and automatic with Isabelle.
\begin{isabellebody}
\isanewline
\ \ \ \ \isacommand{proof}\isamarkupfalse%
\ {\isacharparenleft}{\kern0pt}cases\ {\isachardoublequoteopen}Suc\ k\ {\isasymge}\ i{\isachardoublequoteclose}{\isacharparenright}{\kern0pt}\isanewline
\ \ \ \ \ \ \isacommand{case}\isamarkupfalse%
\ True\isanewline
\ \ \ \ \ \ \isacommand{thus}\isamarkupfalse%
\ {\isacharquery}{\kern0pt}thesis\isanewline
\ \ \ \ \ \ \ \ $\langle$\,proof\,$\rangle$\isanewline
\ \ \ \ \isacommand{next}\isamarkupfalse%
\isanewline
\ \ \ \ \ \ \isacommand{case}\isamarkupfalse%
\ False\isanewline
\ \ \ \ \ \ \isacommand{{\isacharbraceleft}{\kern0pt}}\isamarkupfalse%
\isacommand{fix}\isamarkupfalse%
\ j\ {\isasymalpha}\isanewline
\ \ \ \ \ \ \ \ \isacommand{assume}\isamarkupfalse%
\ h{\isacharcolon}{\kern0pt}\ {\isachardoublequoteopen}j\ {\isasymle}\ k\ {\isasymand}\ j\ {\isasymnoteq}\ i{\isachardoublequoteclose}
\isanewline
\end{isabellebody}

\noindent
While the proof of $\delta_j^\alpha x \in \Scal^{j,k+1,k+2,\dots}$
with Isabelle is  automatic, we need to check
$s_{k-1} \dots s_j \delta_j^\alpha x \in \Scal^{>k}$.
\begin{isabellebody}
\isanewline
\ \ \ \ \ \ \ \ \isacommand{hence}\isamarkupfalse%
\ a{\isacharcolon}{\kern0pt}\ {\isachardoublequoteopen}has{\isacharunderscore}{\kern0pt}dim{\isacharunderscore}{\kern0pt}bound\ k\ {\isacharparenleft}{\kern0pt}{\isasymSigma}\ j\ {\isacharparenleft}{\kern0pt}k\ {\isacharminus}{\kern0pt}\ j{\isacharparenright}{\kern0pt}\ {\isacharparenleft}{\kern0pt}{\isasympartial}\ j\ {\isasymalpha}\ x{\isacharparenright}{\kern0pt}{\isacharparenright}{\kern0pt}{\isachardoublequoteclose}\isanewline
\ \ \ \ \ \ \ \ \ \ $\langle$\,proof\,$\rangle$\isanewline
\ \ \ \ \ \ \ \ \isacommand{have}\isamarkupfalse%
\ {\isachardoublequoteopen}{\isasymexists}y{\isachardot}{\kern0pt}\ ri{\isacharunderscore}{\kern0pt}inv\ i\ {\isacharparenleft}{\kern0pt}{\isasympartial}\ j\ {\isasymalpha}\ x{\isacharparenright}{\kern0pt}\ y{\isachardoublequoteclose}
\isanewline
\end{isabellebody}

\noindent
To construct an $r_i$-inverse of $\delta_j^\alpha x$, we perform a proof by cases on $j$.
\begin{isabellebody}
\isanewline
\ \ \ \ \ \ \ \ \isacommand{proof}\isamarkupfalse%
\ {\isacharparenleft}{\kern0pt}cases\ {\isachardoublequoteopen}j\ {\isacharless}{\kern0pt}\ i{\isachardoublequoteclose}{\isacharparenright}{\kern0pt}\isanewline
\ \ \ \ \ \ \ \ \ \ \isacommand{case}\isamarkupfalse%
\ True
\isanewline
\end{isabellebody}

\noindent
For $j < i$, we introduce $y$ as the $r_{i-1}$-inverse of
$s_{k-1} \dots s_j \delta_j^\alpha x$ using the
induction hypothesis.
\begin{isabellebody}
\isanewline
\ \ \ \ \ \ \ \ \ \ \isacommand{obtain}\isamarkupfalse%
\ y\ \isakeyword{where}\ b{\isacharcolon}{\kern0pt}\ {\isachardoublequoteopen}ri{\isacharunderscore}{\kern0pt}inv\ {\isacharparenleft}{\kern0pt}i\ {\isacharminus}{\kern0pt}\ {\isadigit{1}}{\isacharparenright}{\kern0pt}\ {\isacharparenleft}{\kern0pt}{\isasymSigma}\ j\ {\isacharparenleft}{\kern0pt}k\ {\isacharminus}{\kern0pt}\ j{\isacharparenright}{\kern0pt}\ {\isacharparenleft}{\kern0pt}{\isasympartial}\ j\ {\isasymalpha}\ x{\isacharparenright}{\kern0pt}{\isacharparenright}{\kern0pt}\ y{\isachardoublequoteclose}\isanewline
\ \ \ \ \ \ \ \ \ \ \ \ $\langle$\,proof\,$\rangle$
\isanewline
\end{isabellebody}

\noindent
We check that $\tilde{s}_j \dots \tilde{s}_{k-1} y$ and
$\delta_j^\alpha x$ are composable and then show that these
expressions are inverses.
\begin{isabellebody}
\isanewline
\ \ \ \ \ \ \ \ \ \ \isacommand{have}\isamarkupfalse%
\ c{\isacharcolon}{\kern0pt}\ {\isachardoublequoteopen}dim{\isacharunderscore}{\kern0pt}bound\ k\ y{\isachardoublequoteclose}\isanewline
\ \ \ \ \ \ \ \ \ \ \ \ $\langle$\,proof\,$\rangle$\isanewline
\ \ \ \ \ \ \ \ \ \ \isacommand{hence}\isamarkupfalse%
\ d{\isacharcolon}{\kern0pt}\ {\isachardoublequoteopen}DD\ i\ {\isacharparenleft}{\kern0pt}{\isasympartial}\ j\ {\isasymalpha}\ x{\isacharparenright}{\kern0pt}\ {\isacharparenleft}{\kern0pt}{\isasymTheta}\ j\ {\isacharparenleft}{\kern0pt}k\ {\isacharminus}{\kern0pt}\ j{\isacharparenright}{\kern0pt}\ y{\isacharparenright}{\kern0pt}{\isachardoublequoteclose}\isanewline
\ \ \ \ \ \ \ \ \ \ \ \ $\langle$\,proof\,$\rangle$\isanewline
\ \ \ \ \ \ \ \ \ \ \isacommand{hence}\isamarkupfalse%
\ e{\isacharcolon}{\kern0pt}\ {\isachardoublequoteopen}DD\ i\ {\isacharparenleft}{\kern0pt}{\isasymTheta}\ j\ {\isacharparenleft}{\kern0pt}k\ {\isacharminus}{\kern0pt}\ j{\isacharparenright}{\kern0pt}\ y{\isacharparenright}{\kern0pt}\ {\isacharparenleft}{\kern0pt}{\isasympartial}\ j\ {\isasymalpha}\ x{\isacharparenright}{\kern0pt}{\isachardoublequoteclose}\isanewline
\ \ \ \ \ \ \ \ \ \ \ \ $\langle$\,proof\,$\rangle$\isanewline
\ \ \ \ \ \ \ \ \ \ \isacommand{have}\isamarkupfalse%
\ f{\isacharcolon}{\kern0pt}\ {\isachardoublequoteopen}{\isacharparenleft}{\kern0pt}{\isasympartial}\ j\ {\isasymalpha}\ x{\isacharparenright}{\kern0pt}\ {\isasymotimes}\isactrlbsub i\isactrlesub \ {\isacharparenleft}{\kern0pt}{\isasymTheta}\ j\ {\isacharparenleft}{\kern0pt}k\ {\isacharminus}{\kern0pt}\ j{\isacharparenright}{\kern0pt}\ y{\isacharparenright}{\kern0pt}\ {\isacharequal}{\kern0pt}\ {\isasymTheta}\ j\ {\isacharparenleft}{\kern0pt}k\ {\isacharminus}{\kern0pt}\ j{\isacharparenright}{\kern0pt}\ {\isacharparenleft}{\kern0pt}{\isacharparenleft}{\kern0pt}{\isasymSigma}\ j\ {\isacharparenleft}{\kern0pt}k\ {\isacharminus}{\kern0pt}\ j{\isacharparenright}{\kern0pt}\ {\isacharparenleft}{\kern0pt}{\isasympartial}\ j\ {\isasymalpha}\ x{\isacharparenright}{\kern0pt}{\isacharparenright}{\kern0pt}\ {\isasymotimes}\isactrlbsub {\isacharparenleft}{\kern0pt}i\ {\isacharminus}{\kern0pt}\ {\isadigit{1}}{\isacharparenright}{\kern0pt}\isactrlesub \ y{\isacharparenright}{\kern0pt}{\isachardoublequoteclose}\isanewline
\ \ \ \ \ \ \ \ \ \ \ \ $\langle$\,proof\,$\rangle$\isanewline
\ \ \ \ \ \ \ \ \ \ \isacommand{have}\isamarkupfalse%
\ {\isachardoublequoteopen}{\isacharparenleft}{\kern0pt}{\isasymTheta}\ j\ {\isacharparenleft}{\kern0pt}k\ {\isacharminus}{\kern0pt}\ j{\isacharparenright}{\kern0pt}\ y{\isacharparenright}{\kern0pt}\ {\isasymotimes}\isactrlbsub i\isactrlesub \ {\isacharparenleft}{\kern0pt}{\isasympartial}\ j\ {\isasymalpha}\ x{\isacharparenright}{\kern0pt}\ {\isacharequal}{\kern0pt}\ {\isasymTheta}\ j\ {\isacharparenleft}{\kern0pt}k\ {\isacharminus}{\kern0pt}\ j{\isacharparenright}{\kern0pt}\ {\isacharparenleft}{\kern0pt}y\ {\isasymotimes}\isactrlbsub {\isacharparenleft}{\kern0pt}i\ {\isacharminus}{\kern0pt}\ {\isadigit{1}}{\isacharparenright}{\kern0pt}\isactrlesub \ {\isacharparenleft}{\kern0pt}{\isasymSigma}\ j\ {\isacharparenleft}{\kern0pt}k\ {\isacharminus}{\kern0pt}\ j{\isacharparenright}{\kern0pt}\ {\isacharparenleft}{\kern0pt}{\isasympartial}\ j\ {\isasymalpha}\ x{\isacharparenright}{\kern0pt}{\isacharparenright}{\kern0pt}{\isacharparenright}{\kern0pt}{\isachardoublequoteclose}\isanewline
\ \ \ \ \ \ \ \ \ \ \ \ $\langle$\,proof\,$\rangle$\isanewline
\ \ \ \ \ \ \ \ \ \ \isacommand{thus}\isamarkupfalse%
\ {\isacharquery}{\kern0pt}thesis\isanewline
\ \ \ \ \ \ \ \ \ \ \ \ $\langle$\,proof\,$\rangle$
\isanewline
\end{isabellebody}

\noindent
We proceed similarly in the case of $j > i$.
\begin{isabellebody}
\isanewline
\ \ \ \ \ \ \ \ \isacommand{next}\isamarkupfalse%
\isanewline
\ \ \ \ \ \ \ \ \ \ \isacommand{case}\isamarkupfalse%
\ False\isanewline
\ \ \ \ \ \ \ \ \ \ \isacommand{obtain}\isamarkupfalse%
\ y\ \isakeyword{where}\ b{\isacharcolon}{\kern0pt}\ {\isachardoublequoteopen}ri{\isacharunderscore}{\kern0pt}inv\ i\ {\isacharparenleft}{\kern0pt}{\isasymSigma}\ j\ {\isacharparenleft}{\kern0pt}k\ {\isacharminus}{\kern0pt}\ j{\isacharparenright}{\kern0pt}\ {\isacharparenleft}{\kern0pt}{\isasympartial}\ j\ {\isasymalpha}\ x{\isacharparenright}{\kern0pt}{\isacharparenright}{\kern0pt}\ y{\isachardoublequoteclose}\isanewline
\ \ \ \ \ \ \ \ \ \ \ \ $\langle$\,proof\,$\rangle$\isanewline
\ \ \ \ \ \ \ \ \ \ \isacommand{have}\isamarkupfalse%
\ c{\isacharcolon}{\kern0pt}\ {\isachardoublequoteopen}dim{\isacharunderscore}{\kern0pt}bound\ k\ y{\isachardoublequoteclose}\isanewline
\ \ \ \ \ \ \ \ \ \ \ \ $\langle$\,proof\,$\rangle$\isanewline
\ \ \ \ \ \ \ \ \ \ \isacommand{hence}\isamarkupfalse%
\ d{\isacharcolon}{\kern0pt}\ {\isachardoublequoteopen}DD\ i\ {\isacharparenleft}{\kern0pt}{\isasympartial}\ j\ {\isasymalpha}\ x{\isacharparenright}{\kern0pt}\ {\isacharparenleft}{\kern0pt}{\isasymTheta}\ j\ {\isacharparenleft}{\kern0pt}k\ {\isacharminus}{\kern0pt}\ j{\isacharparenright}{\kern0pt}\ y{\isacharparenright}{\kern0pt}{\isachardoublequoteclose}\isanewline
\ \ \ \ \ \ \ \ \ \ \ \ $\langle$\,proof\,$\rangle$\isanewline
\ \ \ \ \ \ \ \ \ \ \isacommand{hence}\isamarkupfalse%
\ e{\isacharcolon}{\kern0pt}\ {\isachardoublequoteopen}DD\ i\ {\isacharparenleft}{\kern0pt}{\isasymTheta}\ j\ {\isacharparenleft}{\kern0pt}k\ {\isacharminus}{\kern0pt}\ j{\isacharparenright}{\kern0pt}\ y{\isacharparenright}{\kern0pt}\ {\isacharparenleft}{\kern0pt}{\isasympartial}\ j\ {\isasymalpha}\ x{\isacharparenright}{\kern0pt}{\isachardoublequoteclose}\isanewline
\ \ \ \ \ \ \ \ \ \ \ \ $\langle$\,proof\,$\rangle$\isanewline
\ \ \ \ \ \ \ \ \ \ \isacommand{have}\isamarkupfalse%
\ f{\isacharcolon}{\kern0pt}\ {\isachardoublequoteopen}{\isacharparenleft}{\kern0pt}{\isasympartial}\ j\ {\isasymalpha}\ x{\isacharparenright}{\kern0pt}\ {\isasymotimes}\isactrlbsub i\isactrlesub \ {\isacharparenleft}{\kern0pt}{\isasymTheta}\ j\ {\isacharparenleft}{\kern0pt}k\ {\isacharminus}{\kern0pt}\ j{\isacharparenright}{\kern0pt}\ y{\isacharparenright}{\kern0pt}\ {\isacharequal}{\kern0pt}\ {\isasymTheta}\ j\ {\isacharparenleft}{\kern0pt}k\ {\isacharminus}{\kern0pt}\ j{\isacharparenright}{\kern0pt}\ {\isacharparenleft}{\kern0pt}{\isacharparenleft}{\kern0pt}{\isasymSigma}\ j\ {\isacharparenleft}{\kern0pt}k\ {\isacharminus}{\kern0pt}\ j{\isacharparenright}{\kern0pt}\ {\isacharparenleft}{\kern0pt}{\isasympartial}\ j\ {\isasymalpha}\ x{\isacharparenright}{\kern0pt}{\isacharparenright}{\kern0pt}\ {\isasymotimes}\isactrlbsub i\isactrlesub \ y{\isacharparenright}{\kern0pt}{\isachardoublequoteclose}\isanewline
\ \ \ \ \ \ \ \ \ \ \ \ $\langle$\,proof\,$\rangle$\isanewline
\ \ \ \ \ \ \ \ \ \ \isacommand{have}\isamarkupfalse%
\ {\isachardoublequoteopen}{\isacharparenleft}{\kern0pt}{\isasymTheta}\ j\ {\isacharparenleft}{\kern0pt}k\ {\isacharminus}{\kern0pt}\ j{\isacharparenright}{\kern0pt}\ y{\isacharparenright}{\kern0pt}\ {\isasymotimes}\isactrlbsub i\isactrlesub \ {\isacharparenleft}{\kern0pt}{\isasympartial}\ j\ {\isasymalpha}\ x{\isacharparenright}{\kern0pt}\ {\isacharequal}{\kern0pt}\ {\isasymTheta}\ j\ {\isacharparenleft}{\kern0pt}k\ {\isacharminus}{\kern0pt}\ j{\isacharparenright}{\kern0pt}\ {\isacharparenleft}{\kern0pt}y\ {\isasymotimes}\isactrlbsub i\isactrlesub \ {\isacharparenleft}{\kern0pt}{\isasymSigma}\ j\ {\isacharparenleft}{\kern0pt}k\ {\isacharminus}{\kern0pt}\ j{\isacharparenright}{\kern0pt}\ {\isacharparenleft}{\kern0pt}{\isasympartial}\ j\ {\isasymalpha}\ x{\isacharparenright}{\kern0pt}{\isacharparenright}{\kern0pt}{\isacharparenright}{\kern0pt}{\isachardoublequoteclose}\isanewline
\ \ \ \ \ \ \ \ \ \ \ \ $\langle$\,proof\,$\rangle$\isanewline
\ \ \ \ \ \ \ \ \ \ \isacommand{thus}\isamarkupfalse%
\ {\isacharquery}{\kern0pt}thesis\isanewline
\ \ \ \ \ \ \ \ \ \ \ \ $\langle$\,proof\,$\rangle$\isanewline
\ \ \ \ \ \ \ \ \isacommand{qed}\isamarkupfalse%
\isacommand{{\isacharbraceright}{\kern0pt}}\isamarkupfalse%
\isanewline
\end{isabellebody}

\noindent
This shows that $x$ is $r_i$-invertible. We can
now conclude that $x$ is $r_i$-invertible in each direction $i$.

\begin{isabellebody}
\isanewline
\ \ \ \ \ \ \isacommand{thus}\isamarkupfalse%
\ {\isacharquery}{\kern0pt}thesis\isanewline
\ \ \ \ \ \ \ \ $\langle$\,proof\,$\rangle$\isanewline
\ \ \ \ \isacommand{qed}\isamarkupfalse%
\isacommand{{\isacharbraceright}{\kern0pt}}\isamarkupfalse%
\isanewline
\ \ \isacommand{thus}\isamarkupfalse%
\ {\isacharquery}{\kern0pt}case\isanewline
\ \ \ \ $\langle$\,proof\,$\rangle$\isanewline
\isacommand{qed}\isamarkupfalse%
\isanewline
\end{isabellebody}

\noindent
Every cell in a single-set cubical $(\omega,0)$-category has finite
dimension.  Lemma \isa{every-dim-k-ri-inv} therefore shows that every
cell is $r_i$-invertible in every direction $i$, which is
Proposition~\ref{P:OmegaZeroInv}.
\begin{isabellebody}
\isanewline
\isacommand{lemma}
\ every{\isacharunderscore}{\kern0pt}ri{\isacharunderscore}{\kern0pt}inv{\isacharcolon}{\kern0pt}\ {\isachardoublequoteopen}{\isasymexists}y{\isachardot}{\kern0pt}\ ri{\isacharunderscore}{\kern0pt}inv\ i\ x\ y{\isachardoublequoteclose}\isanewline
\ \ \isacommand{using}\isamarkupfalse%
\ every{\isacharunderscore}{\kern0pt}dim{\isacharunderscore}{\kern0pt}k{\isacharunderscore}{\kern0pt}ri{\isacharunderscore}{\kern0pt}inv\ local{\isachardot}{\kern0pt}fin{\isacharunderscore}{\kern0pt}fix\ \isacommand{by}\isamarkupfalse%
\ blast
\isanewline
\end{isabellebody}

Relative to the proof by hand, we had to prove several fixed points
and definedness conditions for compositions due the deep features of
our embedding. These are usually left implicit in proofs by hand or
with proper shallow embeddings, where they are discharged by type
inference. Except for such proof steps, the granularity of this formal
proof is similar to that of a proof by hand owing to Isabelle's proof
automation. Nevertheless, in some longer formulas we had to tell
Isabelle in detail how assumptions had to be matched with proof
goals. These details are visible in our proof document.

As already mentioned, formalising $(\omega,p)$-categories
with Isabelle for a finite $p$ is at least complicated with Isabelle's type
classes. Particular instances of $p$ can be given, but an arbitrary
$p$ would require more than one type parameter, and therefore locales.

\section{Conclusion}
\label{S:Conclusion}

We have introduced single-set axiomatisations of cubical categories with additional structure such as
connections and inverses.  We have justified their adequacy through
equivalence proofs relative to their classical counterparts. The
Isabelle/HOL proof assistant, with its powerful support for proof
automation and counterexample search, has been instrumental in this
development. We might not have undertaken this research
  without it. Cubical set and categories have a broad range of
applications in mathematics and computer science: from homotopy theory
and algebraic topology to homotopy type theory, concurrency theory and
rewriting theory. Our formalisation might therefore support innovative
applications of proof assistants in these fields.  In this regard, our
results allow us to outline several lines of research that we hope to
explore in the future.

First, our formalisation work yields an initial step towards formal
tools and methods that tame the combinatorial
complexity of proofs in higher categories.
 These are meant to support users in reasoning formally with
higher categories, based on geometric
intuitions if available, and with a high degree of automation.  For
this, a single-set approach seems relevant because of its algebraic
simplicity. As our axiomatisation is essentially
  single-sorted first-order, we expect it to work well with SMT solvers and similar
tools. Yet further experiments,
and in particular comparisons with formalisations using locales
  in Isabelle or proof assistants such as
Coq~\cite{BertotCasteran2004} and Lean~\cite{Lean}, are needed to
identify the most suitable approach. Dependent types, as supported by
Coq and Lean, might offer advantages in specifying and reasoning with
higher categories orthogonal to the proof automation supplied by
Isabelle.

Second, this work is part of a programme on proof support for
higher-dimensional rewriting. Higher globular and cubical categories
and higher globular and cubical polygraphs, which correspond
essentially to higher path categories organised in globes or cubes,
are particularly appropriate for this~\cite{Polybook2024}.  Cubical
categories, for instance, allow natural explicit formulations of
confluence results such as the Church-Rosser theorem, Newman's lemma
and their higher-dimensional extensions. Higher algebras such as
globular Kleene algebras and quantales have already been developed for
reasoning about such properties~\cite{CalkGoubaultMalbosStruth22,
  CalkMalbosPousStruth23} and the initial steps of the globular
approach have already been formalised with
Isabelle~\cite{CalkStruth24}. The construction of similar Kleene
algebras and quantales related to cubical categories and their
formalisation along similar lines seem equally desirable.

Third, one can construct polygraphic resolutions related to rewriting
properties in globular or cubical
categories~\cite{Lucas17,LucasPhD2017}.  In a companion article
\cite{MalbosMassacrierStruth24} we are using our single-set approach
to formalise normalisation strategies for higher abstract rewriting
systems, which provide a constructive approach to their polygraphic
resolutions in the cubical setting. Based on this, we aim to formalise
polygraphic constructions of higher-dimensional rewriting on
categories~\cite{GuiraudMalbos18,GuiraudMalbos12advances} and higher
algebras~\cite{GuiraudHoffbeckMalbos19,MalbosRen23} using proof
assistants and the mathematical components created for them. 

A fourth line of work might concern the categorical constructions from
Section~\ref{S:EquivalenceWithCubicalCategories}. Theorem~\ref{T:EquivSinSetClass},
in particular shows that the category $\SinCat{\omega}$ of single-set
cubical $\omega$-categories is equivalent to its classical counterpart
$\CubCat{\omega}$. This equivalence of categories lives in the
$2$-category of categories, where the natural isomorphisms
\ref{SSS:NaturalIsomorphism1} and \ref{SSS:NaturalIsomorphism2} are
$2$-cells. One may wonder how to formulate such an equivalence in a
single-set globular $2$-category.  This would constitute a
formalisation of the proof of justification of the single-set
axiomatisation within a single-set approach. The question arises in
the same way for the equivalences of
Theorems~\ref{T:EquivSinSetClassConn}
and~\ref{T:EquivSinSetClassConnInv}.

\subsubsection*{Acknowledgements}
The authors would like to thank Uli
Fahrenberg for a preliminary discussion about single-set cubical
2-categories. The third author would like to thank the CNRS for an
invited professorship at the \emph{Institut Camille Jordan} at Lyon, during
which part of this work has been conducted.

\begin{small}
\renewcommand{\refname}{\scshape\LARGE{References}}
\bibliographystyle{plain}
\bibliography{biblioFormalCubes}
\end{small}

\appendix

\section{Appendices}
\label{Section:Appendices}

\subsection{End of the proof of Lemma~\ref{L:FC-well-defined}}
\label{A:FC-well-defined}
To show that $\FC{\Scal}$ is a cubical $\omega$-category, we derive the remaining axioms:
\begin{enumerate}[{\bf (i)}]
\setcounter{enumi}{3}
\item if $x,y$ are $\star_{k,j}$-composable, by Axioms~\ref{SSS:DefCubSingleSetWithoutConn}\eqref{I:DeltaCompoCompat}, \eqref{I:AxiomSymCompatCompo} and other ones,
	\begin{itemize}
	\item if $i<j$ then
	\[
	\begin{array}{rcl}
	\partial_{k,i}^\alpha (x \star_{k,j} y) & = & s_{k-1} \dots s_i \delta_i^\alpha (x \circ_j y) \\
	 & = & s_{k-1} \dots s_{j-1} (s_{j-2} \dots s_i \delta_i^\alpha x \circ_j s_{j-2} \dots s_i \delta_i^\alpha y) \\
	 & = & s_{k-1} \dots s_j (s_{j-1} \dots s_i \delta_i^\alpha x \circ_{j-1} s_{j-1} \dots s_i \delta_i^\alpha y) \\
	 & = & s_{k-1} \dots s_i \delta_i^\alpha x \circ_{j-1} s_{k-1} \dots s_i \delta_i^\alpha y \\
	 & = & \partial_{k,i}^\alpha x \star_{k,j-1} \partial_{k,i}^\alpha y,
\end{array}
	\]
	\item if $i>j$ then
	\[
	\partial_{k,i}^\alpha (x \star_{k,j} y)  =  s_{k-1} \dots s_i \delta_i^\alpha (x \circ_j y) 
	 =  s_{k-1} \dots s_i \delta_i^\alpha x \circ_j s_{k-1} \dots s_i \delta_i^\alpha y 
	 =  \partial_{k,i}^\alpha x \star_{k,j} \partial_{k,i}^\alpha
         y,
       \]
	\item if $i=j$ then
	\[
	\partial_{k,i}^- (x \star_{k,i} y)  =  s_{k-1} \dots s_i \delta_i^- (x \circ_i y) 
	  =  s_{k-1} \dots s_i \delta_i^- x 
	  =  \partial_{k,i}^- x,
	\]
	\[
	\partial_{k,i}^+ (x \star_{k,i} y)  =  s_{k-1} \dots s_i \delta_i^+ (x \circ_i y) 
	 =  s_{k-1} \dots s_i \delta_i^+ y 
	  =  \partial_{k,i}^+ y,
	\]
	\end{itemize}
\item if $a,b$ are $\star_{k,i}$-composable, $c,d$ are $\star_{k,i}$-composable, $a,c$ are $\star_{k,j}$-composable, $b,d$ are $\star_{k,j}$-composable, and if $i \neq j$, then $\Delta_i(a,b)$, $\Delta_i(c,d)$, $\Delta_j(a,c)$ and $\Delta_j(b,d)$, so by Axiom~\ref{SSS:DefCubSingleSetWithoutConn}\eqref{I:ExchangeLaw} and other ones
\[
\begin{array}{rcl}
(a \star_{k,i} b) \star_{k,j} (c \star_{k,i} d) & = & (a \circ_i b) \circ_j (c \circ_i d) \\
 & = & (a \circ_j c) \circ_i (b \circ_j d) \\
 & = & (a \star_{k,j} c) \star_{k,i} (b \star_{k,j} d),
\end{array}
\]
\item
	\begin{itemize}
	\item if $i<j$ then by Axioms~\ref{SSS:DefCubSingleSetWithoutConn}\eqref{I:AxiomSymType}, \eqref{I:AxiomSymInv}, \eqref{I:AxiomFaceSym}, \eqref{I:AxiomSymFix} and other ones
	\[
	\begin{array}{rcl}
	\partial_{k,i}^\alpha \epsilon_{k,j} & = & s_{k-1} \dots s_i \delta_i^\alpha \tilde{s}_j \dots \tilde{s}_{k-1} \\
	 & = & s_{k-1} \dots s_i \tilde{s}_j \dots \tilde{s}_{k-1} \delta_i^\alpha \\
	 & = & s_{k-1} \dots s_j s_{j-2} \dots s_i \tilde{s}_j \dots \tilde{s}_{k-1} \delta_i^\alpha \\
	 & = & s_{j-2} \dots s_i s_{k-1} \dots s_j \tilde{s}_j \dots \tilde{s}_{k-1} \delta_i^\alpha \\
	 & = & s_{j-2} \dots s_i \delta_i^\alpha \\
	 & = & \epsilon_{k-1,j-1} \partial_{k-1,i}^\alpha	,
	\end{array}
	\]
      \item if $i>j$ then similarly
        $ \partial_{k,i}^\alpha \epsilon_{k,j} = s_{k-1} \dots s_i
        \delta_i^\alpha \tilde{s}_j \dots \tilde{s}_{k-1} = s_{j-1}
        \dots s_{i-1} \delta_{i-1}^\alpha = \epsilon_{k-1,j}
        \partial_{k-1,i-1}^\alpha , $
      \item if $i=j$ then
        $\partial_{k,i}^\alpha \epsilon_{k,i} = s_{k-1} \dots s_i
        \delta_i^\alpha \tilde{s}_i \dots \tilde{s}_{k-1} = s_{k-1}
        \dots s_i \tilde{s}_i \dots \tilde{s}_{k-1} = \id, $
	\end{itemize}
\item if $a,b$ are $\star_{k,j}$-composable,
	\begin{itemize}
	\item if $i \leq j$ then
	\[
	\begin{array}{rcl}
	\epsilon_{k+1,i} (a \star_{k,j} b) & = & \tilde{s}_i \dots \tilde{s}_k (a \circ_j b) \\
	 & = & \tilde{s}_i \dots \tilde{s}_j (\tilde{s}_{j+1} \dots \tilde{s}_k a \circ_j \tilde{s}_{j+1} \dots \tilde{s}_k b) \\
	 & = & \tilde{s}_i \dots \tilde{s}_{j-1} (\tilde{s}_j \dots \tilde{s}_k a \circ_{j+1} \tilde{s}_j \dots \tilde{s}_k b) \\
	 & = & \tilde{s}_i \dots \tilde{s}_k a \circ_{j+1} \tilde{s}_i \dots \tilde{s}_k b \\
	 & = & \epsilon_{k+1,i} a \star_{k+1,j+1} \epsilon_{k+1,i} b,
	\end{array}
	\]
	\item if $i>j$ then
	\[
	\epsilon_{k+1,i} (a \star_{k,j} b) = \tilde{s}_i \dots
        \tilde{s}_k (a \circ_j b)
        = \tilde{s}_i \dots \tilde{s}_k a \circ_j \tilde{s}_i \dots \tilde{s}_k b 
	 = \epsilon_{k+1,i} a \star_{k+1,j} \epsilon_{k+1,i} b,
	\]
	\end{itemize}
\item if $i \leq j$ then by Axioms~\ref{SSS:DefCubSingleSetWithoutConn}\eqref{I:AxiomSymFix}, \eqref{I:AxiomSymBraid} and other ones
\[
\begin{array}{rcl}
\epsilon_{k+1,j+1} \epsilon_{k,i} & = & \tilde{s}_{j+1} \dots \tilde{s}_k \tilde{s}_i \dots \tilde{s}_{k-1} \\
 & = & \tilde{s}_i \dots \tilde{s}_{j-1} \tilde{s}_{j+1} \dots \tilde{s}_k \tilde{s}_j \dots \tilde{s}_{k-1} \\
 & = & \tilde{s}_i \dots \tilde{s}_k \tilde{s}_j \dots \tilde{s}_{k-1} \\
 & = & \epsilon_{k+1,i} \epsilon_{k,j}.
\end{array}
\]
\end{enumerate}

\subsection{End of the proof of Lemma~\ref{L:FS-well-defined}}
\label{A:FS-well-defined}
To show that $\FS{\Ccal}$ is a single-set cubical $\omega$-category, we derive the remaining axioms:
\begin{enumerate}[{\bf (i)}]
\setcounter{enumi}{3}
\item if $\Delta_j(x,y)$,
	\begin{itemize}
	\item if $i<j$ then
	\[
	\begin{array}{rcl}
	\delta_i^\alpha (x \circ_j y) & = & \epsilon_{n,i} \partial_{n,i}^\alpha (x \star_{n,j} y) \\
	 & = & \epsilon_{n,i} (\partial_{n,i}^\alpha x \star_{n-1,j-1} \partial_{n,i}^\alpha y) \\
	 & = & \epsilon_{n,i} \partial_{n,i}^\alpha x \star_{n,j} \epsilon_{n,i} \partial_{n,i}^\alpha y) \\
	 & = & \delta_i^\alpha x \circ_j \delta_i^\alpha y,
	\end{array}
	\]
	\item if $i>j$ then
	\[
	\begin{array}{rcl}
	\delta_i^\alpha (x \circ_j y) & = & \epsilon_{n,i} \partial_{n,i}^\alpha (x \star_{n,j} y) \\
	 & = & \epsilon_{n,i} (\partial_{n,i}^\alpha x \star_{n-1,j} \partial_{n,i}^\alpha y) \\
	 & = & \epsilon_{n,i} \partial_{n,i}^\alpha x \star_{n,j} \epsilon_{n,i} \partial_{n,i}^\alpha y) \\
	 & = & \delta_i^\alpha x \circ_j \delta_i^\alpha y,
	\end{array}
	\]
	\end{itemize}
\item exchange law: if $i \neq j$, $\Delta_i(w,x)$, $\Delta_i(y,z)$, $\Delta_j(w,y)$ and $\Delta_j(x,z)$ then
\[
(w \circ_i x) \circ_j (y \circ_i d)  = (w \star_{n,i} x) \star_{n,j} (y \star_{n,i} z) 
 = (w \star_{n,j} y) \star_{n,i} (x \star_{n,j} z) 
 = (w \circ_j y) \circ_i (x \circ_j z),
\]
\item
	\begin{itemize}
	\item if $x \in \Scal^i$ then
          $\delta_{i+1}^- s_i x = \epsilon_{n,i+1} \partial_{n,i+1}^-
          \epsilon_{n,i+1} \partial_{n,i}^- x = s_i x$, so
          $s_i x \in \Scal^{i+1}$,
	\item if $y \in \Scal^{i+1}$ then
          $\delta_i^- \tilde{s}_i y = \epsilon_{n,i} \partial_{n,i}^-
          \epsilon_{n,i} \partial_{n,i+1}^- y = \tilde{s}_i y$, so
          $\tilde{s}_i y \in \Scal^i$,
          \end{itemize}
\item
	\begin{itemize}
	\item if $x \in \Scal^i$ then
$\tilde{s}_i s_i x = \epsilon_{n,i} \partial_{n,i+1}^- \epsilon_{n,i+1} \partial_{n,i}^- x 
	 = \delta_i^- x 
	 = x$, 
	\item if $y \in \Scal^{i+1}$ then
	$s_i \tilde{s}_i y = \epsilon_{n,i+1} \partial_{n,i}^- \epsilon_{n,i} \partial_{n,i+1}^- y 
	 = \delta_{i+1}^- y 
	 = y$,
	\end{itemize}
\item if $x \in \Scal^j$,
	\begin{itemize}
	\item then
	\begin{align*}
	\delta_j^\alpha s_j x = \epsilon_{n,j} \partial_{n,j}^\alpha \epsilon_{n,j+1} \partial_{n,j}^- x 
	= \epsilon_{n,j+1} \epsilon_{n-1,j} \partial_{n-1,j}^\alpha \partial_{n,j+1}^- x 
	= \epsilon_{n,j} \epsilon_{n-1,j} \partial_{n-1,j}^\alpha \partial_{n,j}^- x 
	= s_j \delta_{j+1}^\alpha x,
	\end{align*}
	\item if $i<j$ then
	\begin{align*}
	\delta_i^\alpha s_j x = \epsilon_{n,i} \partial_{n,i}^\alpha \epsilon_{n,j+1} \partial_{n,j}^- x 
	= \epsilon_{n,i} \epsilon_{n-1,j} \partial_{n-1,i}^\alpha \partial_{n,j}^\beta x 
	= \epsilon_{n,j+1} \epsilon_{n-1,i} \partial_{n-1,j-1}^\alpha \partial_{n,i}^\beta x 
	= s_j \delta_i^\alpha x,
	\end{align*}
	\item $i>j+1$ then
	\begin{align*}
	\delta_i^\alpha s_j x = \epsilon_{n,i} \partial_{n,i}^\alpha \epsilon_{n,j+1} \partial_{n,j}^- x 
	= \epsilon_{n,i} \epsilon_{n-1,j+1} \partial_{n-1,i-1}^\alpha \partial_{n,j}^\beta x 
	= \epsilon_{n,j+1} \epsilon_{n-1,i-1} \partial_{n-1,j}^\alpha \partial_{n,i}^\beta x 
	= s_j \delta_i^\alpha x,
	\end{align*}
	\end{itemize}
\item if $x,y \in \Scal^i$ and $\Delta_j(x,y)$,
	\begin{itemize}
	\item if $j=i+1$ then
	\begin{align*}
	s_i (x \circ_{i+1} y) &= \epsilon_{n,i+1} \partial_{n,i}^- (x \star_{n,i+1} y) \\
	&= \epsilon_{n,i+1} (\partial_{n,i}^- x \star_{n,i} \partial_{n,i}^- y) \\
	&= \epsilon_{n,i+1} \partial_{n,i}^- x \star_{n,i} \epsilon_{n,i+1} \partial_{n,i}^- y \\
	&= s_i x \circ_i s_i y
	\end{align*}
	\item if $j<i$ then
	\begin{align*}
	s_i (x \circ_j y) &= \epsilon_{n,i+1} \partial_{n,i}^- (x \star_{n,j} y) \\
	&= \epsilon_{n,i+1} (\partial_{n,i}^- x \star_{n,j} \partial_{n,i}^- y) \\
	&= \epsilon_{n,i+1} \partial_{n,i}^- x \star_{n,j} \epsilon_{n,i+1} \partial_{n,i}^- y \\
	&= s_i x \circ_j s_i y
	\end{align*}
	\item if $j>i+1$ then
	\begin{align*}
	s_i (x \circ_j y) &= \epsilon_{n,i+1} \partial_{n,i}^- (x \star_{n,j} y) \\
	&= \epsilon_{n,i+1} (\partial_{n,i}^- x \star_{n,j-1} \partial_{n,i}^- y) \\
	&= \epsilon_{n,i+1} \partial_{n,i}^- x \star_{n,j} \epsilon_{n,i+1} \partial_{n,i}^- y \\
	&= s_i x \circ_j s_i y
	\end{align*}
	\end{itemize}
\item if $x \in \Scal^{i,i+1}$ then
\begin{align*}
s_i x = s_i \delta_i^- \delta_{i+1}^- x 
 = \epsilon_{n,i+1} \partial_{n,i}^- \epsilon_{n,i+1} \partial_{n,i+1}^- x 
 = \epsilon_{n,i+1} \partial_{n,i+1}^- \epsilon_{n,i} \partial_{n,i}^- x 
 = \delta_{i+1}^- \delta_i^- x 
 = x,
\end{align*}
\item if $x \in \Scal^{i,j}$,
	\begin{itemize}
	\item if $i<j-1$ then
	\begin{align*}
	s_i s_j x = \epsilon_{n,i+1} \partial_{n,i}^- \epsilon_{n,j+1} \partial_{n,j}^- x 
	= \epsilon_{n,i+1} \epsilon_{n,j} \partial_{n,i}^- \partial_{n,j}^- x 
	= \epsilon_{n,j+1} \epsilon_{n-1,i+1} \partial_{n-1,j-1}^- \partial_{n,i}^- x 
	= s_j s_i x,
	\end{align*}
	\item if $i>j+1$ then
	\begin{align*}
	s_i s_j x = \epsilon_{n,i+1} \partial_{n,i}^- \epsilon_{n,j+1} \partial_{n,j}^- x 
	= \epsilon_{n,i+1} \epsilon_{n,j+1} \partial_{n,i-1}^- \partial_{n,j}^- x 
	= \epsilon_{n,j+1} \epsilon_{n-1,i} \partial_{n-1,j}^- \partial_{n,i}^- x 
	= s_j s_i x,
	\end{align*}
	\end{itemize}
\item each $x \in \Scal$ has a representative $a \in \Ccal_n$ for some $n \in \Nbb$, so let $i \geq n+1$, then $a' = \epsilon_{i,i} \dots \epsilon_{n+1,n+1} a \in \Ccal_i$ represents $x$, so by definition
\begin{align*}
\delta_i^- x = cl_\sim (\epsilon_{i,i} \partial_{i,i}^- a') 
= cl_\sim (\epsilon_{i,i} \partial_{i,i}^- \epsilon_{i,i} \dots \epsilon_{n+1,n+1} a) 
= cl_\sim (\epsilon_{i,i} \dots \epsilon_{n+1,n+1} a) 
= x,
\end{align*}
so $x \in \Scal^{>n}$.
\end{enumerate}

\subsection{End of the proof of Lemma~\ref{L:FCG-well-defined}}
\label{A:FCG-well-defined}
To show that $\FCG{\Scal}$ is a cubical $\omega$-category with connections, we prove the remaining axioms:
\begin{enumerate}[{\bf (i)}]
\item
	\begin{itemize}
	\item $\partial_{k,i+1}^\alpha \Gamma_{k,i}^\alpha
	= s_{k-1} \dots s_{i+1} \delta_{i+1}^\alpha \gamma_i^\alpha \tilde{s}_i \dots \tilde{s}_{k-1}
	= \id$,
	\item
$\partial_{k,i}^\alpha \Gamma_{k,i}^{-\alpha}
	= s_{k-1} \dots s_i \delta_{i+1}^\alpha \tilde{s}_i \dots \tilde{s}_{k-1} 
	= \tilde{s}_i \dots \tilde{s}_{k-2} s_{k-2} \dots s_i \delta_i^\alpha 
	= \epsilon_{k-1,i} \partial_{k-1,i}^\alpha$,

\item 
	$\partial_{k,i+1}^\alpha \Gamma_{k,i}^{-\alpha}
	= s_{k-1} \dots s_{i+1} \delta_{i+1}^\alpha \tilde{s}_i \dots \tilde{s}_{k-1} 
	= \tilde{s}_i \dots \tilde{s}_{k-2} s_{k-2} \dots s_i \delta_i^\alpha 
	= \epsilon_{k-1,i} \partial_{k-1,i}^\alpha$,

	\item if $i > j+1$ then
	\begin{align*}
	\partial_{k,i}^\alpha \Gamma_{k,j}^\beta
	= \gamma_j^\beta s_{k-1} \dots s_i \tilde{s}_j \dots \delta_i^\alpha \tilde{s}_{i-1} \tilde{s}_i \dots \tilde{s}_{k-1} 
	= \gamma_j^\beta \tilde{s}_j \dots \tilde{s}_{i-2} \delta_{i-1}^\alpha 
	= \Gamma_{k-1,j}^\beta \partial_{k-1,i-1}^\alpha,
	\end{align*}
	\end{itemize}
\item if $a, b$ are $\star_{k,j}$-composable,
	\begin{itemize}
	\item $i = j$ then
	\begin{align*}
	\Gamma_{k+1,i}^+ (a \star_{k,i} b)
	&= \gamma_i^+ (\tilde{s}_i \dots \tilde{s}_k a \circ_{i+1} \tilde{s}_i \dots \tilde{s}_k b) \\
	&= (\gamma_i^+ \tilde{s}_i \dots \tilde{s}_k a \circ_i \tilde{s}_i \dots \tilde{s}_k a) \circ_{i+1} (s_i \tilde{s}_i \dots \tilde{s}_k a \circ_i \gamma_i^+ \tilde{s}_i \dots \tilde{s}_k b) \\
	&= (\Gamma_{k+1,i}^+ a \star_{k+1,i} \epsilon_{k+1,i} a) \star_{k+1,i+1} (\epsilon_{k+1,i+1} a \star_{k+1,i} \Gamma_{k+1,i}^+ b),
	\end{align*}
      \item if $i > j$ then
        $ \Gamma_{k+1,i}^\alpha (a \star_{k,j} b) = \gamma_i^\alpha
        (\tilde{s}_i \dots \tilde{s}_k a \circ_j \tilde{s}_i \dots
        \tilde{s}_k b) = \Gamma_{k+1,i}^\alpha a \star_{k+1,j}
        \Gamma_{k+1,i}^\alpha b, $
	\end{itemize}
\item
	\begin{itemize}
	\item $\Gamma_{k,i}^+ a \star_{k,i} \Gamma_{k,i}^- a = \gamma_i^+ \tilde{s}_i \dots \tilde{s}_{k-1} a \circ_i \gamma_i^- \tilde{s}_i \dots \tilde{s}_{k-1} a = \epsilon_{k,i+1} a$ by Axiom~\ref{SSS:DefCubSingleSetWithConn}\eqref{I:AxiomConnZigzag} and other ones,
	\item $\Gamma_{k,i}^+ a \star_{k,i+1} \Gamma_{k,i}^- a = \gamma_i^+ \tilde{s}_i \dots \tilde{s}_{k-1} a \circ_{i+1} \gamma_i^- \tilde{s}_i \dots \tilde{s}_{k-1} a = \epsilon_{k,i} a$ by Axiom~\ref{SSS:DefCubSingleSetWithConn}\eqref{I:AxiomConnZigzag} and other ones,
	\end{itemize}
\item
	\begin{itemize}
	\item by
          Axiom~\ref{SSS:DefCubSingleSetWithConn}\eqref{I:AxiomConnStab}
          and other ones 
	$\Gamma_{k+1,i}^\alpha \epsilon_{k,i}
	= \gamma_i^\alpha \tilde{s}_i \dots \tilde{s}_k \tilde{s}_i \dots \tilde{s}_{k-1} 
	= \epsilon_{k+1,i} \epsilon_{k,i}$,
	\item if $i > j$ then
	\begin{align*}
	\Gamma_{k+1,i}^\alpha \epsilon_{k,j}
	= \tilde{s}_j \dots \tilde{s}_{i-2} \gamma_i^\alpha \tilde{s}_i \dots \tilde{s}_k \tilde{s}_{i-1} \dots \tilde{s}_{k-1} 
	= \tilde{s}_j \dots \tilde{s}_i \gamma_{i-1}^\alpha \tilde{s}_{i+1} \dots \tilde{s}_k \tilde{s}_{i-1} \dots \tilde{s}_{k-1} 
	= \epsilon_{k+1,j} \Gamma_{k,i-1}^\alpha,
	\end{align*}
	\end{itemize}
\item
	\begin{itemize}
	\item if $i < j$ then using Axioms~\ref{SSS:DefCubSingleSetWithConn}\eqref{I:AxiomConnBraid}, \eqref{I:AxiomConnShift} and other ones
	\begin{align*}
	\Gamma_{k+1,i}^\alpha \Gamma_{k,j}^\beta
	&= \gamma_i^\alpha \tilde{s}_i \dots \tilde{s}_j \tilde{s}_{j+1} \gamma_j^\beta \tilde{s}_{j+2} \dots \tilde{s}_k \tilde{s}_j \dots \tilde{s}_{k-1} \\
	&= \gamma_i^\alpha \tilde{s}_i \dots \tilde{s}_{j-1} \gamma_{j+1}^\beta \tilde{s}_{j+1} \dots \tilde{s}_k \tilde{s}_j \dots \tilde{s}_{k-1} \\
	&= \Gamma_{k+1,j+1}^\beta \Gamma_{k,i}^\alpha,
	\end{align*}
	\item
	$
	\Gamma_{k+1,i}^\alpha \Gamma_{k,i}^\alpha
	= \tilde{s}_i \tilde{s}_{i+1} \gamma_i^\alpha \tilde{s}_{i+2} \dots \tilde{s}_k \gamma_i^\alpha \tilde{s}_i \dots \tilde{s}_{k-1} 
	= \gamma_{i+1}^\alpha \tilde{s}_{i+1} \dots \tilde{s}_k \gamma_i^\alpha \tilde{s}_i \dots \tilde{s}_{k-1} 
	= \Gamma_{k+1,i+1}^\alpha \Gamma_{k,i}^\alpha$. 
	\end{itemize}
\end{enumerate}

\subsection{End of the proof of Lemma~\ref{L:FSG-well-defined}}
\label{A:FSG-well-defined}
To show that $\FSG{\Ccal}$ is a single-set cubical $\omega$-category with connections, we prove the remaining axioms:
\begin{enumerate}[{\bf (i)}]
\item if $i \neq j,j+1$ and $x \in \Scal^j$, then $\partial_{n,j}^- x= \partial_{n,j}^+ x$ so
	\begin{itemize}
	\item
	$\delta_{j+1}^\alpha \gamma_j^\alpha x
	= \epsilon_{n,j+1} \partial_{n,j+1}^\alpha \Gamma_{n,j}^\alpha \partial_{n,j}^\alpha x
	= \epsilon_{n,j+1} \partial_{n,j}^\alpha x
	= s_j x$,
	\item if $i<j$ then $\delta_i^\alpha \gamma_j^\beta x
	= \epsilon_{n,i} \Gamma_{n-1,j-1}^\beta \partial_{n-1,i}^\alpha \partial_{n,j}^\beta x
	= \Gamma_{n,j}^\beta \epsilon_{n-1,i} \partial_{n-1,j-1}^\beta \partial_{n,i}^\alpha x
	= \gamma_j^\beta \delta_i^\alpha x$,
	\item if $i>j+1$ then $\delta_i^\alpha \gamma_j^\beta x
	= \epsilon_{n,i} \Gamma_{n-1,j}^\beta \partial_{n-1,i-1}^\alpha \partial_{n,j}^\beta x
	= \Gamma_{n,j}^\beta \epsilon_{n-1,i-1} \partial_{n-1,j}^\beta \partial_{n,i}^\alpha x
	= \gamma_j^\beta \delta_i^\alpha x$,
	\end{itemize}
\item if $j \neq i,i+1$, $x,y \in \Scal^i$,
	\begin{itemize}
	\item if $\Delta_{i+1}(x,y)$ then
	\begin{align*}
	\gamma_i^- (x \circ_{i+1} y)
	&= \Gamma_{n,i}^- (\partial_{n,i}^- x \star_{n,i} \partial_{n,i}^- y) \\
	&= (\Gamma_{n,i}^- \partial_{n,i}^- x \star_{n+1,i+1} \epsilon_{n,i} \partial_{n,i}^- y) \star_{n+1,i} (\epsilon_{n,i+1} \partial_{n,i}^- y \star_{n+1,i+1} \Gamma_{n,i}^- \partial_{n,i}^- y) \\
	&= (\gamma_i^- x \circ_{i+1} y) \circ_i (s_i y \circ_{i+1} \gamma_i^- y),
	\end{align*}
	\item if $j < i$ and $\Delta_j(x,y)$ then $\gamma_i^\alpha (x \circ_j y) = \Gamma_{n,i}^\alpha (\partial_{n,i}^\alpha x \star_{n-1,j} \partial_{n,i}^\alpha y) = \gamma_i^\alpha x \circ_j \gamma_i^\alpha y$,
	\item if $j > i+1$ and $\Delta_j(x,y)$ then $\gamma_i^\alpha (x \circ_j y) = \Gamma_{n,i}^\alpha (\partial_{n,i}^\alpha x \star_{n-1,j-1} \partial_{n,i}^\alpha y) = \gamma_i^\alpha x \circ_j \gamma_i^\alpha y$,
	\end{itemize}
\item if $x \in \Scal^{i,i+1}$ then $x = \epsilon_{n,i} \epsilon_{n-1,i} \partial_{n-1,i}^- \partial_{n,i+1}^- x$ so
\begin{align*}
\gamma_i^\alpha x
= \Gamma_{n,i}^\alpha \epsilon_{n-1,i} \partial_{n-1,i}^- \partial_{n,i+1}^- x
= \epsilon_{n,i} \epsilon_{n-1,i} \partial_{n-1,i}^- \partial_{n,i+1}^- x
= x,
\end{align*}
\item if $x \in \Scal^i$, then $\partial_{n,i}^+ x = \partial_{n,i}^- x$ so
	\begin{itemize}
	\item $\gamma_i^+ x \circ_{i+1} \gamma_i^- x
	= \Gamma_{n,i}^+ \partial_{n,i}^+ x \star_{n,i+1} \Gamma_{n,i}^- \partial_{n,i}^- x
	= \epsilon_{n,i} \partial_{n,i}^+ x
	= x$,
	\item $\gamma_i^+ x \circ_i \gamma_i^- x
	= \Gamma_{n,i}^+ \partial_{n,i}^+ x \star_{n,i} \Gamma_{n,i}^- \partial_{n,i}^- x
	= \epsilon_{n,i+1} \partial_{n,i}^+ x
	= s_i x$,
	\end{itemize}
\item if $x \in \Scal^{i,j}$ and $i > j+1$ then
	\[
	\gamma_i^\alpha \gamma_j^\beta x
	= \Gamma_{n,i}^\alpha \Gamma_{n-1,j}^\beta \partial_{n-1,i-1}^\alpha \partial_{n,j}^\beta x
	= \Gamma_{n,j}^\beta \Gamma_{n-1,i-1}^\alpha \partial_{n-1,j}^\beta \partial_{n,i}^\alpha x
	= \gamma_j^\beta \gamma_i^\alpha x.
	\]
\end{enumerate}

\clearpage

\quad

\vfill

\begin{footnotesize}

\bigskip
\auteur{Philippe Malbos}{malbos@math.univ-lyon1.fr}
{Universit\'e Claude Bernard Lyon 1\\
ICJ UMR5208, CNRS\\
F-69622 Villeurbanne cedex, France}

\bigskip
\auteur{Tanguy Massacrier}{massacrier@math.univ-lyon1.fr}
{Universit\'e Claude Bernard Lyon 1\\
ICJ UMR5208, CNRS\\
69622 Villeurbanne cedex, France}

\bigskip
\auteur{Georg Struth}{g.struth@sheffield.ac.uk}
{University of Sheffield\\
  Department of Computer Science\\
  Regent Court, 211 Portobello\\
  Sheffield S1 4DP, United Kingdom
}
\end{footnotesize}

\vspace{1.5cm}

\begin{small}---\;\;\today\;\;-\;\;\hhmm\;\;---\end{small}
\end{document}